\documentclass{article}

\usepackage{arxiv}

\usepackage[utf8]{inputenc} % allow utf-8 input
\usepackage[T1]{fontenc}    % use 8-bit T1 fonts
\usepackage{hyperref}       % hyperlinks
\usepackage{url}            % simple URL typesetting
\usepackage{booktabs}       % professional-quality tables
\usepackage{amsfonts}       % blackboard math symbols
\usepackage{nicefrac}       % compact symbols for 1/2, etc.
\usepackage{microtype}      % microtypography
\usepackage{lipsum}
\usepackage{graphicx}

\newcommand{\T}{\top}

\graphicspath{{images/}}

\usepackage{amssymb}
\usepackage{color, graphics, graphicx}
\usepackage{amsthm}
\usepackage{mathrsfs}
\usepackage{fontenc}
\usepackage{lscape}
\usepackage{natbib}
\usepackage{url}
\usepackage{bbm}
\usepackage{dsfont}
\usepackage{lscape}
\usepackage{setspace}
\usepackage{epstopdf}
\usepackage{enumerate}
\usepackage{bigints}
\usepackage{rotating}
\usepackage{booktabs}
\usepackage{multirow}
\usepackage{enumerate}
\usepackage{comment}

\newcommand{\beq}{\begin{equation}}
\newcommand{\eeq}{\end{equation}}
\newcommand{\bfth}{\boldsymbol{\theta}}

\DeclareMathOperator{\var}{var} %var(x)
 
\DeclareMathOperator*{\argmin}{arg\,min}

\newtheorem{theorem}{Theorem}

\newtheorem{corollary}[theorem]{Corollary}

\newtheorem{definition}[theorem]{Definition}

\newtheorem{lemma}[theorem]{Lemma}

\newtheorem{remark}[theorem]{Remark}

\input{AnnasCommands.tex}

\title{Flexible latent variable models on graphs: Laplace approximated inference for multiview network data}

\author{
 Anna van Es \\
  Research Institute for Statistics and Information Science, GSEM\\
  University of Geneva\\
  \texttt{anna.vanes@unige.ch} \\
   \And
 Eva Cantoni \\
  Research Institute for Statistics and Information Science, GSEM\\
  University of Geneva\\
  \texttt{eva.cantoni@unige.ch} \\
  \And
 Davide La Vecchia \\
  Research Institute for Statistics and Information Science, GSEM\\
  University of Geneva\\
  \texttt{davide.lavecchia@unige.ch} \\
}

\begin{document}
\maketitle
\begin{abstract}
We propose a novel and flexible nonlinear approach for dimensionality reduction of large-scale multiview network data and derive its theory. The (linear) predictor incorporates observed covariates (edge-specific, layer-specific, and global) and Gaussian latent factors. Inference is conducted via the graph Laplace approximated maximum likelihood estimator. Letting $K$ denote the number of network layers and $n_V$ the number of nodes, we derive asymptotic theory under two regimes: (i) $K \to \infty$ with fixed $n_V$, and (ii) double asymptotics $K, n_V \to \infty$, establishing consistency and asymptotic normality for local and global parameters, with distinct convergence rates. In an application to the gravity model for commodity trades, we use a zero-adjusted Gamma distribution with latent factors and observable covariates (e.g.\ distance, tariffs, common language) to capture excess zeros, skewness, and unobserved heterogeneity. Synthetic and real-data exercises show that our approach outperforms the routinely applied Poisson pseudo-maximum likelihood estimator with fixed effects. We complement our theoretical and empirical contributions with open-source R/C++ routines and a novel strategy for starting values selection.
\end{abstract}

% keywords
\keywords{Asymptotic theory \and Laplace approximation \and Latent variables \and $M$-estimation \and Multiview networks}

\section{Introduction}
\subsection{Motivation: gravity model for  trading data} \label{Sec:Motiv}

International trade data constitute a high-dimensional, heterogeneous system of exchanges across countries and commodities. Trade flows are networked,   asymmetric, economically informative, and display patterns that vary substantially across countries and products.

The statistical analysis of such network data can be formulated as a learning problem on a random field over a graph, where countries are vertices and bilateral trade flows define weighted and directed edges. Since we consider multiple layers of interactions among the same nodes, we speak about multiview networks.

Analogous network structures arise across several disciplines—for example, in academic coauthorship networks, social media interactions, in protein-protein interactions, and communication systems—leading to a common set of challenges. To illustrate some of  them in the setting of trade flows, we consider the 2022 World Trade Organization (WTO) dataset, a new, previously unanalyzed, dataset obtained directly from the WTO in Geneva, Switzerland. It covers 45 countries, 1980 directed country pairs, and 72 product layers. For each country pair and commodity, the response variable measures
the traded amount (volume) in U.S. dollars; see Section \ref{Sec: RealData} for details. Figure~\ref{fig:EUASIA} displays the trade networks for clocks and watches in Europe and for live animals in Asia. These plots motivate representing countries as nodes and trade flows as edges, justifying our idea of modelling the data as  random fields over graphs (one for each commodity, namely one for each a view of the network). Moreover, they illustrate that the edges
change in number and intensity across regions and products, yielding networks with different degrees of sparsity and heterogeneity. 

\begin{figure}[t]
  \centering

  \begin{minipage}[t]{0.49\textwidth}
    \vspace{0pt}
    \centering

    \begin{minipage}[t][0.15\textheight][t]{\linewidth}
      \centering
      \includegraphics[
        width=0.95\linewidth,
        height=0.55\textheight,
        keepaspectratio
      ]{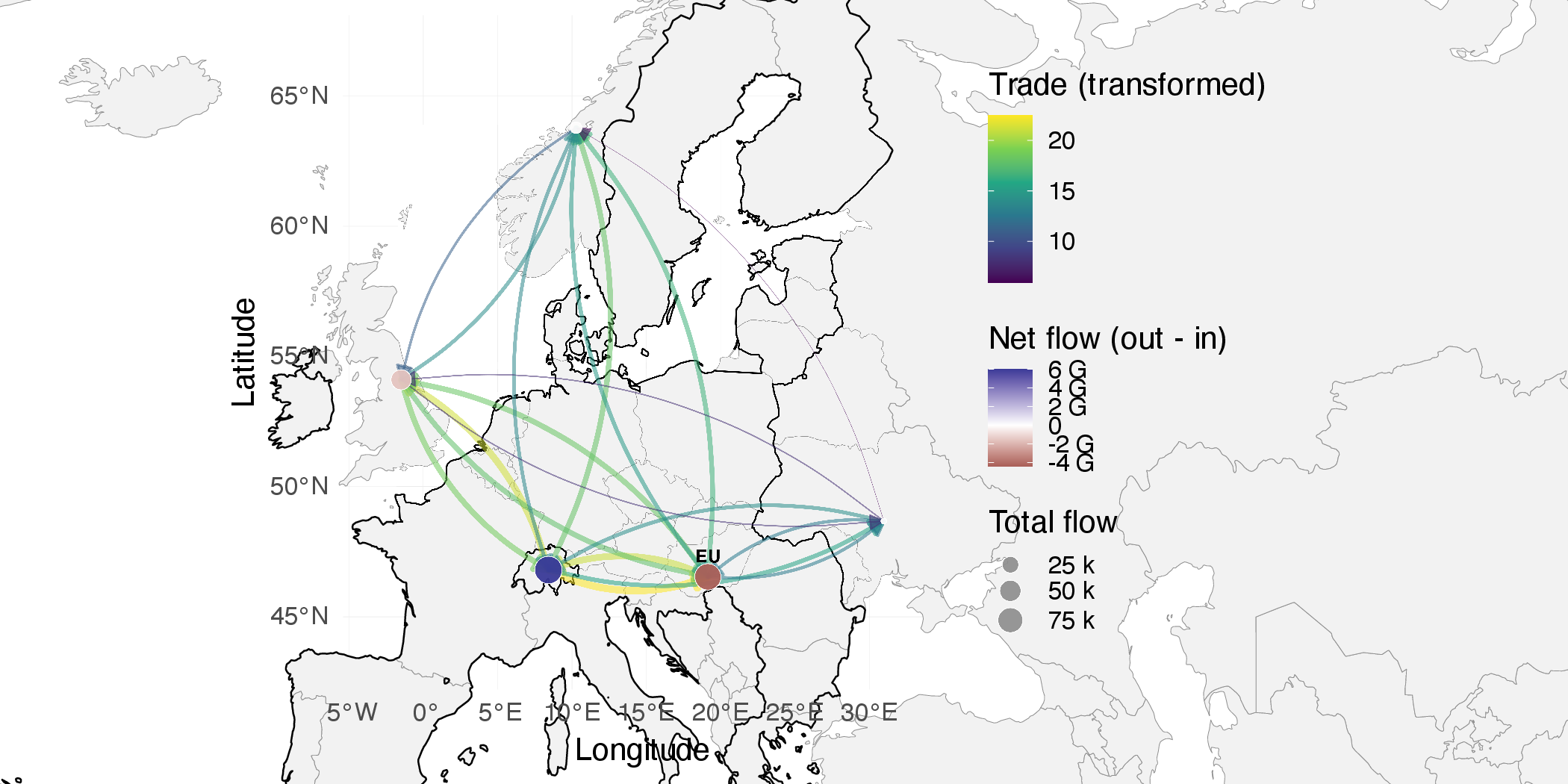}
    \end{minipage}

    \vspace{0.5em}

    {\small \begin{flushleft}
      (a)~Clocks and watches in Europe.
    \end{flushleft}}
  \end{minipage}
  \hfill
  \begin{minipage}[t]{0.49\textwidth}
    \vspace{0pt}
    \centering

    \begin{minipage}[t][0.15\textheight][t]{\linewidth}
      \centering
      \includegraphics[
        width=0.95\linewidth,
        height=0.4\textheight,
        keepaspectratio
      ]{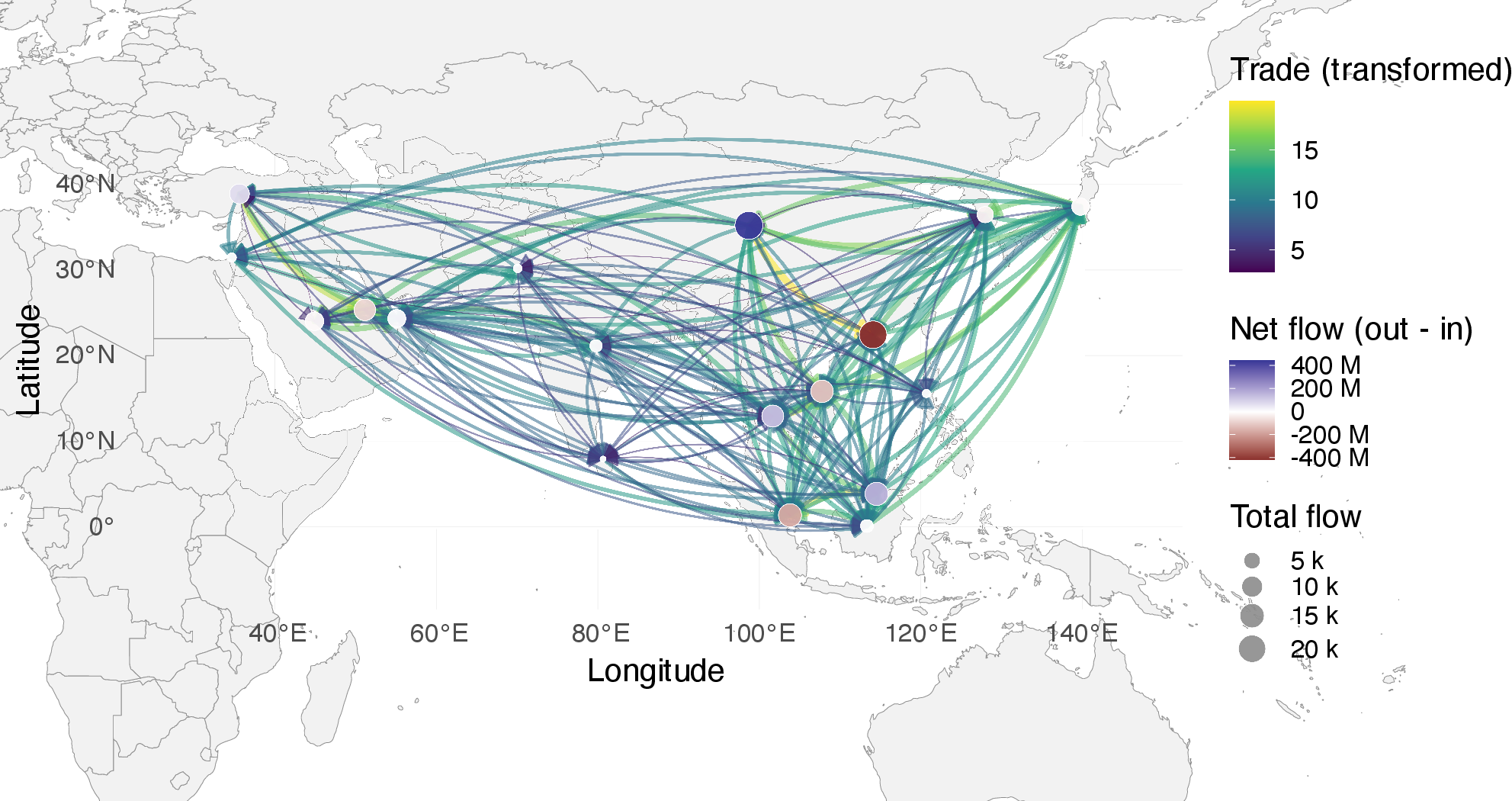}
    \end{minipage}

    {\small \begin{flushleft}
      (b)~Live animals in Asia.
    \end{flushleft} }
  \end{minipage}

  \caption{WTO data: network flows for two commodities in Europe (left) and Asia (right). "G" stands for billions, "M" for millions.}
  \label{fig:EUASIA}
\end{figure}

The economic workhorse framework for such data is the gravity model \citep{AvW03}, where trade depends on economic size, trade frictions, and multilateral resistance. The focus is on the parameters ($\boldsymbol\beta$ and $\bg$, see next section) multiplying covariates, which captures trade elasticities and are central for economic decision makers. 

Despite its theoretical foundation, observed data deviate from gravity model predictions, exhibiting excess zeros, skewness, and overdispersion. This motivates alternative models aimed at capturing the random behaviour of trade flows—see \cite{Yotov2016} or  Section~\ref{Sec: RealData}. 

In this context, \citet{GS22} establish a connection between economic theory and entropic optimal transport (E-OT): equilibrium flows arise from entropy-regularized matching between importers and exporters; see also \citet{GH26}. Their approach introduces randomness via perturbations of economic surplus, thereby generating unobserved heterogeneity and linking economic theory to $M$-estimation. Under specific assumptions, the E-OT constraints yield equations that coincide with the PPMLE first-order conditions. 

Relaxing these assumptions leads to a more general economic framework (still related to E-OT), which in turn motivates flexible statistical models. This provides the starting point for our investigation: we develop a novel inference approach for flexible multiview network models.

In the setting of commodity trading, we let the E-OT equilibrium characterize the conditional mean of trades, while allowing for a general density (either a probability mass function, pmf, or a probability density function, pdf) $\mathsf{f}$ depending on covariates and latent factors (e.g., multilateral resistance).

We consider response distributions whose $\mathsf{f}$ satisfies the usual regularity conditions required for Laplace approximation; see e.g. \citet[Ch. 6]{S10} for a book-length presentation. This class includes the exponential family as a special case, but is not restricted to it. For instance, it encompasses smooth models like zero-inflated Poisson (ZIP), hurdle, and zero-adjusted Gamma (ZAGA) distributions. The precise regularity conditions for  $\mathsf{f}$ are stated in Section \ref{sec:model1_double} and underpin the asymptotic results.  
The choice of $\mathsf{f}$ involves trade-offs: Poisson aligns with E-OT but fits poorly; log-normal ignores zeros; more flexible models  capture stylized facts (e.g. excess zeros and skewness), they preserve interpretability, but their economic link to E-OT is not  studied. From a statistical standpoint, the main challenge of flexible models is likelihood-based inference in multiview networks with high-dimensional latent variables and intractable integrals.

\subsection{Main contributions, with a preview of some empirical results}

The main theoretical and methodological contributions of this paper are as follows.

\textit{(i) Modelling, identification, and estimation.} 
We extend Graph Generalized Linear Latent Variable Models  (GGLLVM, see \cite{JLVR24}) beyond the exponential family while preserving tractable Laplace approximation based inference. The use of flexible models allows accommodating excess zeros and skewness, including covariates and latent factors. This  yields new estimating equations which define the graph Laplace approximated maximum likelihood estimator (GLAMLE), with tailored identification constraints. As an example, we apply our general methodology to WTO data, using the ZAGA: we illustrate that the resulting ZAGA-GLAMLE approach outperforms PPMLE with fixed effects, matching the observed trade volumes  better.
In Figure~\ref{fig:wto-glamle-vs-ppml-cottoncars} we provide a preview of our results. Figure~\ref{fig:wto-glamle-vs-ppml-cottoncars}(a) displays the zero Brier scores for both approaches: PPMLE fails to capture the zero inflation probability, whereas our ZAGA-GLAMLE yields systematically lower errors. Our approach also fits positive trade volumes better, as illustrated by the QQ-plot in Figure~\ref{fig:wto-glamle-vs-ppml-cottoncars}(b), which compares observed versus predicted values for live animals and clearly demonstrates superior accuracy over PPMLE. We refer to Section \ref{Sec: RealData} for more comments on the WTO data analysis. 

Beyond this example on trading data, our methodology applies broadly to other types of high-dimensional data in other scientific areas. For instance, it is useful in disciplines such as psychometrics and ecology, where generalized linear latent variable models (GLLVM) are applied using distributions from the exponential family or the zero-inflated Poisson (ZIP) model; see \citet{HRVF04, BKM11, N17}.

\begin{figure}[t]
  \centering

  \begin{minipage}[t]{0.49\textwidth}
    \vspace{0pt}
    \centering

    \begin{minipage}[t][0.25\textheight][t]{\linewidth}
      \centering
      \includegraphics[
        width=0.95\linewidth,
        height=0.35\textheight,
        keepaspectratio
      ]{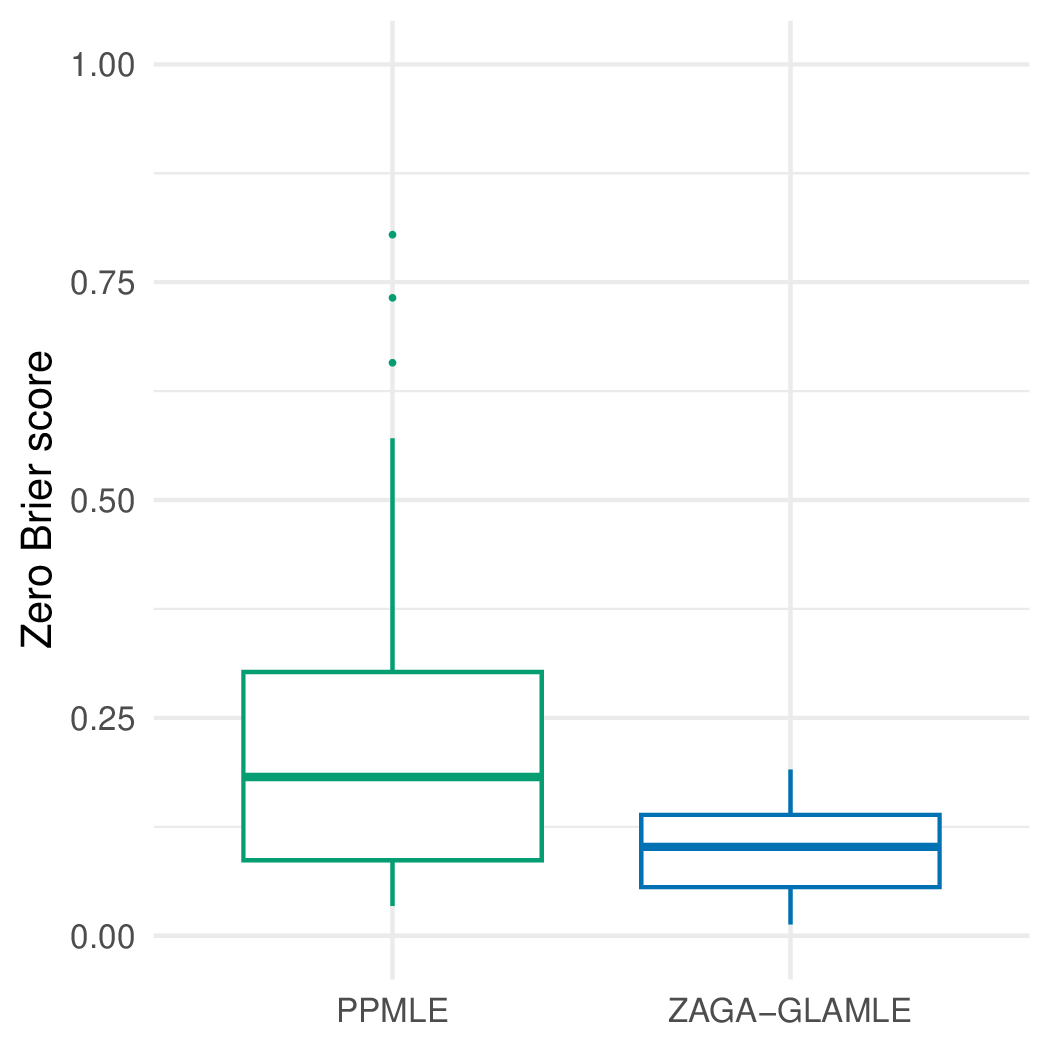}
    \end{minipage}

    \vspace{0.5em}

    {\small \begin{flushleft}
      (a)~Zero Brier scores across all commodities.
    \end{flushleft}}
  \end{minipage}
  \hfill
  \begin{minipage}[t]{0.49\textwidth}
    \vspace{0pt}
    \centering

    \begin{minipage}[t][0.25\textheight][t]{\linewidth}
      \centering
      \includegraphics[
        width=0.95\linewidth,
        height=0.45\textheight,
        keepaspectratio
      ]{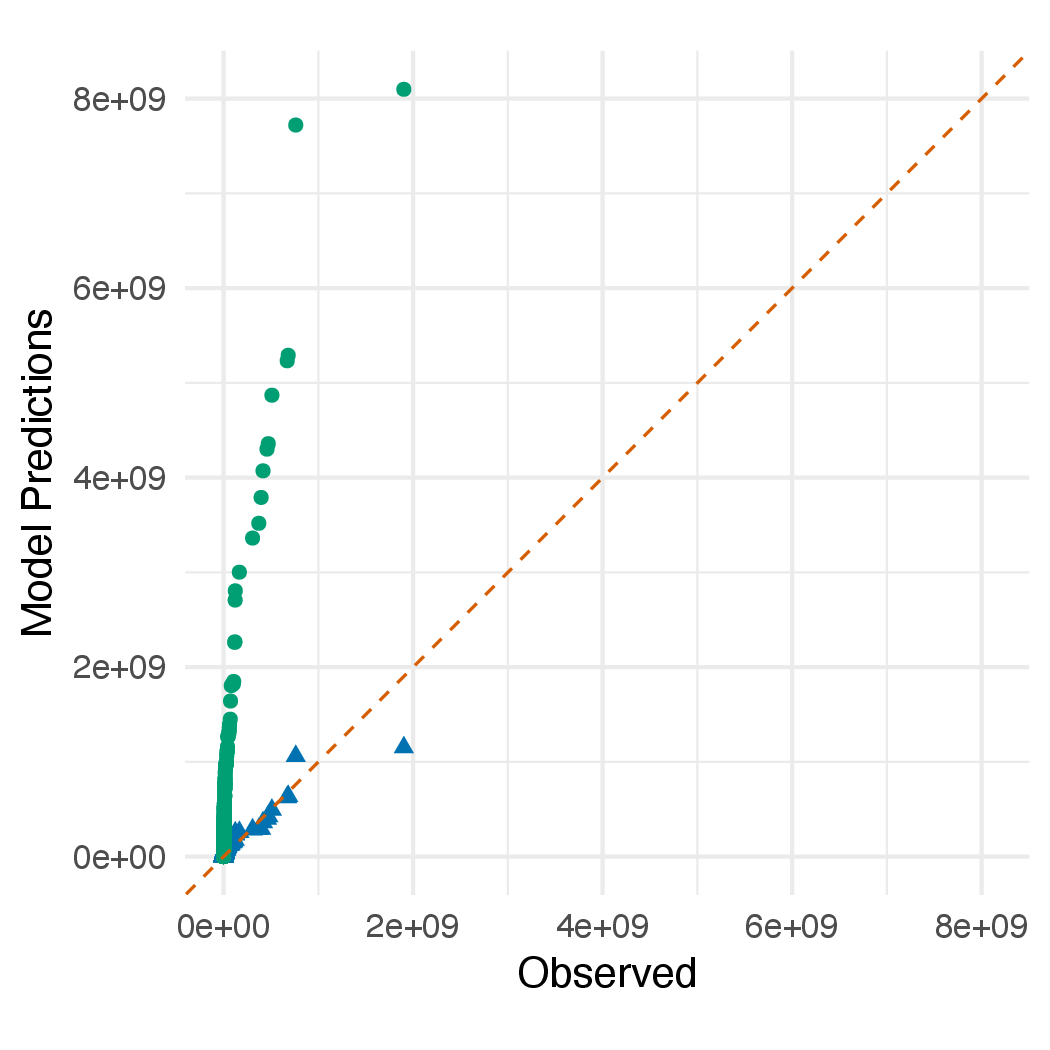}
    \end{minipage}

    \vspace{0.5em}

    {\small \begin{flushleft}
      (b)~QQ-plot of observed (x-axis) versus model predicted (y-axis)
    trade volumes for live animals: GLAMLE (blue triangles) and PPMLE
    (green dots).
    \end{flushleft} }
  \end{minipage}

  \caption{WTO data: comparison between ZAGA-GLAMLE and PPMLE approaches.}
  \label{fig:wto-glamle-vs-ppml-cottoncars}
\end{figure}

\textit{(ii) Asymptotic theory.} We develop a general inferential framework for Laplace-approximated likelihoods in latent-variable multiview network models under increasing dimension. In contrast to the results of \citet{JLVR24}—which are restricted to exponential-family models, fixed network size, and do not incorporate covariates—we establish consistency (with rates) and asymptotic normality under joint growth of the number of layers and the network dimension. This framework explicitly accounts for the interaction between approximation error and sampling variability. This result is important because it provides theoretical justification for the use of Laplace approximations in large scale network settings, where their validity has so far remained unclear. Letting $K$ denote the number of layers (commodities) and $m$ the number of edges  (related to $n_V$ the number of nodes, namely the countries), we study two asymptotic regimes: (i) $K \to \infty$ with $m$ fixed; and (ii) $K, m \to \infty$ jointly. In the latter case, we show that the Laplace approximation remains valid in high dimensions and derive $M$-estimator convergence rates that depend on the joint growth of  $m$ and  $K$. A key feature of our results is that different convergence rates arise for local and global parameters; see Theorem~\ref{thm:model1_double}. This  distinguishes our results from existing Laplace-based inference, such as the one available in \citet{SMcC95}, \citet{Bianconcini2014}, and \citet{ogdenErrorLaplaceApproximations2021}. In particular, it explains why different parameter types may exhibit distinct convergence behaviour.

\textit{(iii) Algorithms and software.} We extend the \texttt{R} implementation of GLAMLE \citep{JLVR24} by incorporating covariates and flexible distributions via Laplace approximation using the Template Model Builder  (TMB, see \citet{Kr15TMB}) framework. We implement the ZAGA specification with various types of covariates, benchmark our model against the PPMLE in the \texttt{gravity} package, and resolve practical numerical issues related to centring/scaling, back-transformation, and initialization. These computational and methodological contributions are entirely new to the literature. The code to replicate our results is hosted in a public \texttt{GitHub} repository\footnote{https://github.com/annavanes/glamle-ggllvm}, which contains also the WTO dataset. Beyond this economic network, we anticipate these algorithms will benefit a broad research community, including researchers working on Laplace approximated inference in ecology and social sciences \citep{N17,Hui22}.

Taken together, our results provide a general and flexible framework for parametric inference in multiview network models, offering both theoretical guarantees and practical tools for high-dimensional data analysis.

\subsection{Related work and structure of the paper}

GLLVM (\citet{BKM11}) extend generalized linear models (GLM) by introducing latent variables and are widely used across disciplines; see, for instance, \citet{HRVF04,O16,N17}. \citet{JLVR24} extend this framework to graph generalized linear latent variable models (GGLLVM) for multiview networks under exponential-family assumptions, but do not address observable covariates or asymptotics as the network size grows. 
Our approach builds on and extends this literature. It connects to inference for mixed-effects models via Laplace approximation \citep{V96,RVL09}, approximate likelihood methods for GLLVM \citep{Bianconcini2014}, and the broader GAMLSS framework \citep{R_etal_19}. Relative to existing (G)GLLVM contributions, we go beyond exponential-family assumptions in a multiview network setting, incorporate both layer-dependent and layer-independent covariates, and develop a double asymptotic theory for all model parameters, including those associated with latent structure and
covariates.

Our work is also related to INLA \citep{rue2009approximate} and Bayesian approaches to multilayer networks \citep{Gollini2016246,Salter-Townshend20171217}. In contrast to these methods, we adopt a frequentist perspective, provide asymptotic guarantees, and mention that our estimation procedure admits an interpretation akin to an expectation-maximization (EM) algorithm. 

The use of latent variables further links our framework to the manifold hypothesis (see \citealp{WGRD26} and references therein), suggesting that high-dimensional network data often lie on low-dimensional structures.

Finally, our work connects to the literature on gravity models, typically estimated via PPMLE on aggregate data \citep{silva2006log,WAR13}. When disaggregated, such data naturally form multiview (one view for each commodity) networks exhibiting excess zeros and overdispersion—features that PPMLE cannot capture by design. While alternative pseudo-likelihood approaches (e.g. negative binomial) handle overdispersion, they do not account for latent heterogeneity. Related extensions with latent factors include \citet{chen2021nonlinear} and \citet{JLVR24}. Compared to these approaches, we incorporate flexible distributions compatible with Laplace approximation, accommodate multiple types of covariates, and study the asymptotic theory of the resulting $M$-estimator.

Due to the complexity and to the flexibility of the considered approach, the theoretical development is technically involved; to maintain readability, we present the main ideas in the body of the paper and defer technical details to the \textcolor{blue}{Supplementary Material} (henceforth \textcolor{blue}{SM}). The remainder of the paper is organized as follows. Section~\ref{Sec: frame} introduces the modelling framework, Section~\ref{Sec: Infe} presents the GLAMLE, Section~\ref{Ex.ZAGA} discusses ZAGA models, and Section~\ref{Sec: uniq} addresses identifiability. Section~\ref{Sec_Asym} develops the asymptotic theory, Section~\ref{Sec: MC_gllvm} reports Monte Carlo results, and Section~\ref{Sec: RealData} presents the empirical application. The main methodological components are contained in Sections~\ref{Sec: frame}, \ref{Ex.ZAGA}, \ref{Sec: uniq}, \ref{Sec: MC_gllvm}, and \ref{Sec: RealData}, while the remaining Sections provide detailed theoretical analysis. Proofs, additional simulation and data analysis results are collected in the \textcolor{blue}{SM}.

\section{Modelling framework} \label{Sec: frame}

We consider a network $\mathcal{G}=(V,E)$, where $V=\{1,\ldots,n_V\}$ is the set of nodes and $E$ the set of edges. For directed networks, each  pair $(i,j)$ with $i\neq j$ defines a dyad. Throughout the paper, the composite index $ij$ is used as a single dyad index, corresponding to the directed node pair $(i,j) \in E$, rather than the $(i,j)$ entry of a matrix.

We model the collection of   random variables $\{Y_{ij}\}$, defined on the edges of $\mathcal{G}$.  In our motivating example about trades, $Y_{ij} \in \mathbb{R_+}$ represents flows from country $i$ to country $j$, so have a directed edge $i \to j$. Since we do not consider self-edges among the nodes (no trades in the same country),  $m= \vert E \vert$ denotes the number of dyads in the network: it is given by $m= {n_{V}(n_{V}-1)}/2 $  for undirected relations and  $m={n_{V}(n_{V}-1)}$ for directed relations.
We use  $i \sim j$, for undirected relations and $i \to j$, in the case of directed relation from $i$ to $j$. 
To capture unobserved heterogeneity and dependence among  edge-associated random variables, we introduce latent variables  $\mathbf{Z}=(1,Z_1,\cdots, Z_q)^\top=(1,\mathbf{Z}_{(2)})^\top \in \mathbb{R}^{q+1}$, for $\mathbf{Z}_{(2)} \in \mathbb{R}^q$, where the integer $q \ll n_V$, 
and assume:  \\

\textbf{A1.} \label{ass:lv-st-norm} \textit{The $\mathbf{Z}_{(2)}$ are $\mathcal{N}(0,\boldsymbol{I}_q)$ distributed and edges are conditionally independent given the latent variables, such that}
$
P_{\bt}(\mathbf{Y}\mid \mathbf{Z}=\mathbf{z})
=
\prod_{i\neq j} P_{\bt}(Y_{ij}\mid \mathbf{z}),
$
\textit{where ${\bt}$ is an unknown (column) parameter vector whose dimension depends on the selected model; see Models 1-6 hereunder for details.} \\

The assumption on the Gaussian distribution of latent variables is common in the (G)GLLVM literature and also the conditional independence; see e.g. \citet{HRVF04,JLVR24}. The main implication of \textbf{A1} is that dependence across edges is induced by the shared latent factors. 
Differently from the existing GGLLVM and latent position models (\citet{RFR16}), our formulation extends them in two directions. First, we allow $\mathsf{f}$ to be any distribution for which likelihood inference is tractable via Laplace approximation (e.g. ZIP and ZAGA), while existing models restrict it
to the exponential family. 
Second, we allow the conditional mean to depend on both latent and observable variables.

Consider also the intercepts $\baa_0 \in \mathbb{R}^m$ and factor loadings $\baa_{(2)} \in \mathbb{R}^{q \times m}$ with $\baa_{ij}=(
  \alpha_{0,ij}, 
  [\baa_{(2)}^\top]_{\cdot, ij})^\top \in \mathbb{R}^{q+1}$, where  $[\baa_{\boldsymbol{(2)}}]_{\cdot, ij} $ represents the $ij$-th column of $\baa_{\boldsymbol{(2)}}$, and $\baa = \left( \baa_{0}^\top, \baa_{(2)}\right)$. 
We consider a multiview network setting:  we are given random views $\bY^{(1)}, \ldots, \bY^{(K)}$ of a network, representing $K$ different types of relational ties among the $n_V$ actors---that is, $K$ network views over the same set of nodes. \textbf{A1} holds for each layer $k=1,\ldots,K$, so 
\begin{equation}
\Yijk \mid \bzk, \bxx^{(k)}_{ij}, \bw_{ij}
\sim
\mathsf{f}_{\bt} (\cdot\mid \eta^{(k)}_{ij}),
\label{Eq. genf}
\end{equation}
where the (linear) predictor $\eta^{(k)}_{ij}$ characterizes the (conditional mean of the) conditional distribution 
and may depend on covariates and latent variables, each with its own coefficient. 
The quantity $\eta_{ij}^{(k)}$ represents the linear predictor associated with edge $(i,j)$ in the $k$-th view.
We consider  $\bxx_{ij}^{(k)} \in \mathbb{R}^{L_x}$, edge- and layer-dependent, and $\bw_{ij} \in \mathbb{R}^{L_w}$, layer-independent covariates. The model can include any type of covariates, e.g.\ continuous, discrete, categorical.  
We also introduce the index $\bt$ to highlight the dependence of $\mathsf{f}$ on the unknown parameter. Thus,
in Equation~\eqref{Eq. genf}, if we set $\mathsf{f}_{\bt}$ as the Gaussian pdf we recover linear latent variable models, while for $\mathsf{f}_{\bt}$ belonging to the exponential family, we have a class of nonlinear models similar to the ones in \cite{chen2021nonlinear,JLVR24}. Beside the selection of $\mathsf{f}_{\bt}$,
the specification of $\eta^{(k)}_{ij}$  yields different models. Hereunder,  we itemize some leading examples that we are going to discuss more in details in the next pages. Other specifications of $\eta^{(k)}_{ij}$ can be considered working on the structure of covariates.  \\

\noindent
-\textit{Model 1 (no covariates)}: 
$\eta^{(k)}_{ij}= \baa_{ij}^{\top}\bzk$; \label{model-lv} \\ 
-\textit{Model 2  (layer-specific covariates, edge-specific coefficients)}:
$\eta^{(k)}_{ij} = \baa_{ij}^{\top}\bzk + \bb_{ij}^{\top}\bxx^{(k)}$; \\
-\textit{Model 3  (covariates vary across edges and layers, common coefficients)}: 
$\eta^{(k)}_{ij} = \baa_{ij}^{\top}\bzk + \bb^{\top}\bxx^{(k)}_{ij}$;\\ 
-\textit{Model 4 (edge-specific covariates and coefficients)}:  
$\eta^{(k)}_{ij} =  \baa_{ij}^{\top}\bzk + \bb_{ij}^{\top}\bxx^{(k)}_{ij}$; \\ 
-\textit{Model 5 (Model 4 with additional global parameters)}: 
$\eta^{(k)}_{ij} = \baa_{ij}^{\top}\bzk + \bb_{ij}^{\top}\bxx^{(k)}_{ij} + \boldsymbol{\gamma}^{\top}\bw_{ij} $; \\
-\textit{Model 6 (Model 3 with additional global parameters)}: 
$\eta^{(k)}_{ij} = \baa_{ij}^{\top}\bzk + \bb^{\top}\bxx^{(k)}_{ij} + \boldsymbol{\gamma}^{\top}\bw_{ij}$. \\

Let $\boldsymbol{\varphi}$ contain all $\varphi_{ij}$ representing other parameters which characterize the underlying distribution (e.g. the overdispersion
and scale that may differ for the edge).
Then the parameter $\bt$ always includes $\mathrm{vec}(\baa)^{\top}$, in addition to $\mathrm{vec}(\boldsymbol{\varphi})^{\top}$,  either $\bb$ or $\mathrm{vec}(\bb)^{\top}$ (where the latter contains all $ \bb_{ij}$), and $\boldsymbol{\gamma}$, when it applies.

These model specifications encode how covariates and latent factors determine  $\eta^{(k)}_{ij}$. 
All models have edge-dependent loadings $\baa_{ij}$ (constant across layers) to account for unobserved heterogeneity. Edge-specific coefficients connect to (non)linear factor models, while common coefficients are closer to classical GLM or mixed-effects formulations. 

\section{$M$-Estimation} \label{Sec: Infe} 
For a selected $\mathsf{f}_{\bt}(\cdot\vert \eta^{(k)}_{ij})$, latent variables are unobserved and need to be integrated out. Thus, the marginal likelihood becomes:
\begin{equation}
\ell_K(\bt) =\sk \ln \l[ \int \l\{ \prod_{i\ne j}^{n_V} \mathsf{f}_{\bt} \l( \Yijk \vert \eta^{(k)}_{ij} \r) \r\}h\l( \bztk \r) d\,\bztk \r].
\label{Eq.loglik}
\end{equation} 
In principle, in (\ref{Eq.loglik}) one may use $\mathsf{f}_{\bt,ij}$, where the additional index $(ij)$ highlights that edge $(i, j)$ (either directed or indirected)  may have its own pdf/pmf which may not coincide with the one of other nodes. This is an additional flexibility feature of our model, which can deal with different types of random fields on edges (e.g. counts, dichotomous, continuous over the entire real line) and  implies that the likelihood is obtained by the combination of different $\mathsf{f}_{\bt,ij}$  (e.g. Poisson, Bernoulli, Gaussian). To lighten the notation,  in the rest of the paper, we assume all edges share the same $\mathsf{f}_{\bt}$. All formulae can be generalized replacing $\mathsf{f}_{\bt}$ by $\mathsf{f}_{\bt,ij}$.

To estimate $\bt$ we should apply the maximum likelihood method, maximizing $\ell_K(\bt)$. However, the integral over the $q$-dimensional space of factors makes the expression in Equation~\eqref{Eq.loglik} analytically intractable, even under the Gaussian assumption in \textbf{A1}. As a consequence, the exact likelihood is not available in closed form and its optimization is numerically challenging, if not impossible. To cope with this issue, we select  $\mathsf{f}_{\bt}$ to belong to the class of pdf/pmf for which the integral in Equation~\eqref{Eq.loglik} admits a Laplace approximation. This  yields our GLAMLE.  In the next subsection we provide the key details.

\subsection{Estimating equations} \label{GLAMLE_key}

Consider the case where,  for each edge $i \to j$ and for each $k$-th layer, the $Y_{ij}$s are real scalars.
 Making use of \textbf{A1}, we assume that the latent variables $\bZt$ have standard normal distributions and that they are independent. 
Then, we rewrite the marginal density function as
\beq
\label{eq:f-mQ}
f_{\bt}(\by)= \int \exp\l\{ m Q\l(\bt, \bxx,  \bw, \bz, \by \r)\r\} d\bz_{(2)}.
\eeq
The exact functional form of $Q\l(\bt, \bxx,  \bw, \bz, \boldsymbol y\r)$ depends on the selected $\mathsf{f}_{\bt}$
and on the specification of $\eta^{(k)}_{ij}$.  In the most general form, we have
\begin{eqnarray} 
mQ\l(\bt, \bxx, \bw,  \bz,\boldsymbol y\r) &=& 
  \snv 
    \ell_{ij} \l(\bt; y_{ij}, \bxx, \bw, \bz\r)
  -
  \frac{\bz^\top_{(2)}\bz_{(2)}}{2}-\frac{q}{2}\ln(2\pi),
\label{Eq: Q_gen}
\end{eqnarray} 
where $\ell_{ij} $  is the log-likelihood associated with $\mathsf{f}_{\bt}$ 
and  
$\sum_{i \ne j}^{n_V} = \underset{i \ne j}{\sum_{i=1}^{n_V}\sum_{j=1}^{n_V}}$. 
For the ease of notation, we let $D_{\bz}^rQ$ denote the tensor of $r$-th derivatives of function $Q$ w.r.t. $\bz_{(2)}$ and evaluated at $\hat{\mathbf{z}}$ (and similarly for 
function $\ell_{ij}$): e.g.,
for $r=2$ we have  the Hessian matrix
\begin{equation}
 \partial_{ \bzt^\top} \partial_{ \bzt}
    Q
    \l(
      \bt, \bxx, \bw, \bz,\by
    \r)\Big\vert_{\bz=\hat{\mathbf{z}}}
    =
    D_{\bz}^2 Q
    \l(
      \bt, \bxx, \bw, \hat{\mathbf{z}},\by
    \r).
    \label{Eq:D2Q}
\end{equation}    
    So, we obtain the Laplace approximated density function
$$ 
\tilde{f}_{\bt}(\by) 
= 
\l(
  \frac{2\pi}{m}\r)^{q/2}
  \det
  \l\{ 
    -U
    \l(
      \bt,   \hat{\mathbf{z}}
    \r)
  \r\}^{-1/2}
  \exp
    \l\{
      mQ
      \l(
        \bt, \bxx, \bw,  \hat{\mathbf{z}},\by
      \r)
  \r\},
$$
where, making use of Equation~\eqref{Eq:D2Q},
$$
  U
  \l( 
    \bt,  \hat{\mathbf{z}}    
  \r)
  =
    D_{\bz}^2 Q
    \l(
      \bt, \bxx, \bw, \hat{\mathbf{z}},\by
    \r)
    =
    -m^{-1}\Gamma
    \l(
      \bt, \bxx, \bw, \hat{\mathbf{z}}
    \r),
$$
and
$$
\Gamma
\l(
  \bt,\bxx, \bw, \hat{\mathbf{z}}
\r)
=
- \snv 
D_\bz^2 \ell_{ij}
  \l(
    \bt; y_{ij}, \bxx, \bw, \bzh 
  \r)+ \boldsymbol{I}_q.
  $$                       
  
  The $\hat{\bz}=(1, (\hat{\bz}_{(2)})^\top)^\top$ maximises $Q\l(\bt, \bxx, \bw,\bz, \by \r)$, therefore is the solution to 
\begin{equation}
\partial_{\bz} Q\l(\bt, \bxx, \bw,\bz, \by \r) =\mathbf 0,
\label{Eq. FOC_Q}
\end{equation}
 and $\hat{\bz}_{(2)}$ is defined through the fixed point equation:
 
\beq \label{Eq: Est_z}
\bzth =
\bzth
\l(
  \bt,\bxx, \bw, \by
\r)
=
\snv 
  D_{\bz}
  \ell_{ij} (\bt; y_{ij}, \bxx, \bw, \bzh)
\eeq

Thanks to the Laplace approximation, we have (see \cite{JLVR24}):
\begin{eqnarray}
f_{\bt}(\boldsymbol{y}) &=& \tilde{f}_{\bt}(\boldsymbol{y})\l\{1+O\l(m^{-1}\r)\r\}, \label{Eq. Lapl}
\end{eqnarray}
which illustrates that the accuracy of $\tilde{f}_{\bt}$ increases as $m$ diverges. The above derivation is for a fixed layer.  For a random sample $(\bY^{(1)}, \bY^{(2)}, \dots, \bY^{(K)})$ containing $n_V$ nodes in each one of the $K$ layers,  Equations Equation~\eqref{Eq: Q_gen}-Equation~\eqref{Eq: Est_z} yield the 
Laplace-approximated likelihood
\begin{eqnarray} 
\tilde{\ell}_K(\bt) 
&=&
\sk 
\l[
  -\frac{1}{2}\ln
  \l\{ 
    \det
    \l\{
      \Gamma
      \l(
        \bt,\bxx, \bw, \zhk
      \r)      
    \r\}
  \r\} 
+
 m Q 
  \l( 
    \bt, \bxx, \bw,\zhk, \boldsymbol{Y}^{(k)}
  \r)
+
\frac{q}{2} \ln 2 \pi 
\r]. \label{Eq: LA_gen}
\end{eqnarray} 

Equipped with $\tilde{\ell}_K(\bt)$ as in Equation~\eqref{Eq: LA_gen}, estimates of the model parameter $\boldsymbol\theta$ are obtained solving the equations yielded by the first order conditions (FOC), where derivatives w.r.t.\ all elements of $\bt$ are set  to zero (or equivalently the approximated log-likelihood is optimized).  

\subsection{Examples}

We consider two examples: when the conditional pmf is the Poisson and when conditional pdf is the ZAGA.  We use them in the numerical experiments; see Section \ref{Sec: MC_gllvm} and Section \ref{Sec: RealData}.

\subsubsection{Poisson} \label{Ex.Poi}

 For the sake of illustration, we select a specific model from the taxonomy in Section \ref{Sec: frame}; similar equations with obvious changes hold for other models. So, we consider Model 2, with Poisson pmf $\mathsf{f}_{\bt}$ and  $\eta^{(k)}_{ij}=\baa_{ij}^{\top}\bzk
+
\bb_{ij}^{\top}\bxx^{(k)}$. 
This is the gravity model with layer-dependent  covariates and latent factors to account for unobserved heterogeneity, it completes the results in \cite{JLVR24} for the pure factor model, to which we refer for the basic equations. 

The Laplace approximated log-likelihood is
\begin{eqnarray*} 
\tilde{\ell}_K(\bt) 
&=&\sk \l(-\frac{1}{2}\ln\l[\det\l\{\Gamma\l(\bt,\bxx, \mathbf{0}, \zhk\r)      \r\}\r]\r) 
\nonumber \\
& + & \sk \l[
\sum_{i\ne j}^{n_V}\l\{Y_{ij}^{(k)} \baa_{ij}^\top \zhk + Y_{ij}^{(k)} \boldsymbol{ \bb}^\top_{ij} \bxx^{(k)} - \exp\l(\baa_{ij}^\top \zhk + \boldsymbol{ \bb}^\top_{ij} \bxx^{(k)} \r) - \ln Y_{ij}^{(k)} !\r\}-
\frac{\hat{\bz}_{\mathbf{( 2 )}}^{(k)^{\top}} \hat{\bz}_{\mathbf{( 2 )}}^{(k)} }{2} \r], \nonumber
\end{eqnarray*} 
where $\hat{\mathbf{z}}^{(k)}$ is the
root of ${\partial}_\bz Q\l(\bt,\bxx ,\mathbf{0}, \mathbf{z},\boldsymbol{Y}^{(k)}\r) = \mathbf 0$, with functions
\begin{equation*} 
m Q\l(\bt,\bxx ,\mathbf{0}, \mathbf{z},\boldsymbol y^{(k)}\r)= \l[  \sum_{i\ne j}^{n_V}\l\{ y_{ij}^{(k)} \eta_{ij}^{(k)}  -
\exp\l(\eta^{(k)}_{ij}\r) - \ln y_{ij}^{(k)}! \r\}
-
\frac{
\hat{\bz}^{\top}_{(2)} \hat{\bz}_{(2)}}{2}-\frac{q}{2}\log(2\pi)\r]
\end{equation*}

and $$\Gamma(\bt,\bxx,\mathbf{0},\bzk)= \sum_{i\ne j}^{n_V} \left\{  \exp\left(\baa_{ij}^\top \bz^{(k)} + \boldsymbol{ \bb}^\top_{ij} \bxx^{(k)}\right) \right\}  \aat + \boldsymbol{I}_q.$$
 
 For each layer, $\Yijk$ are count data. In the presence of excess of zeros one could suggest a ZIP model, which we discuss in Appendix~\ref{AppZIP} of the \textcolor{blue} {SM}. 

\subsubsection{ZAGA} \label{Ex.ZAGA}

Let $Y_{ij} \vert \bz, \bxx, \bw \sim \text{ ZAGA}$, characterized by $\pij, \mij$ and  $\sigma \in \mathbb{R}_+$, as defined in \citet{R_etal_19}, Ch. 9 (see also Appendix~\ref{AppZAGA} of the \textcolor{blue} {SM}). So, for each $k$-th layer, and dyad
\begin{equation}
P_{\bt}\left(Y_{ij} = y_{ij} \mid \bz, \bxx, \bw\right) =  \left\{
    \begin{array}{ll}
        \pij &  \ \text{if} \ y_{ij}=0\\
            \frac{(1-\pij)}{\Gamma \left( 1/\sigma^2 \right)(\sigma^2 \mij)^{1/\sigma^2} } y_{ij}^{1/\sigma^2 - 1} \exp \left(-y_{ij}/\sigma^2 \mij\right) & \ \text{if} \  y_{ij}>0
    \end{array}.
    \right.  \label{ZIGpmf}
\end{equation}
 Now, set for each edge $i \to j$ and for each $k$-th layer, $\ln \mijk =  \eta^{(k)}_{ij}$, where $\eta^{(k)}_{ij}$ can be any model in the taxonomy. For  $\pij=0$, the ZAGA corresponds to a standard Gamma model, whilst values $\pij\in (0,1)$ allow for modelling the excess of zeros.  
We have 
\begin{equation*}
  \small
    \begin{aligned}
        \elk (\bt; \yijk, \bxx, \bw, \bzk) & = \ln \mathsf{f}_{\bt} \left( \yijk \vert \bzk, \bxx, \bw, \bt \right)  
         = \underbrace{\left\{ {\ind{=}} \ln \pij + {\ind{>}} \ln (1-\pij) \right\}}_{:= \llp} \\
        & \quad + \underbrace{{\ind{>}} \left\{  (1/\sigma^2 -1) \ln \yijk  - \ln \Gamma (1/\sigma^2) - 1/\sigma^2 \left(2 \ln \sigma + \ln \mijk \right)   - \frac{\yijk}{\sigma^2 \mijk} \right\}}_{:= \llg (\bzk)},
    \end{aligned}
\end{equation*}
where $\eta^{(k)}_{ij}$ characterizes the conditional mean of the $k$-th layer.  So,
 Equation~\eqref{Eq: LA_gen} becomes
\begin{equation*}
\begin{aligned}
    \tilde{\ell}_{K} (\bt) 
    & = \sk \Biggl(-\frac{1}{2} \ln \left[\operatorname{det}\left\{\Gamma\left(\bt, \bxx, \bw , \hat{\bz}^{(k)}\right)\right\}\right]
    -
    \frac{\hat{\bz}_{\mathbf{( 2 )}}^{(k)^{\top}} \hat{\bz}_{\mathbf{( 2 )}}^{(k)}}{2}\\
    & + \snv \Biggl\{ {\ind{>}} \left( (1/\sigma^2 -1) \ln \yijk  - \ln \Gamma (1/\sigma^2) - 1/\sigma^2 \left(2 \ln \sigma + \ln \mijk \right)   - \frac{\yijk}{\sigma^2 \mijk} \right)  \Biggr\} \Biggr),\\
\end{aligned}
\end{equation*}
where
$\hat{\mathbf{z}}^{(k)}$ is the
root of $\partial_\bz Q\l(\bt,\bxx ,\bw, \mathbf{z},\boldsymbol{Y}^{(k)}\r) = \mathbf 0$, with function $m Q (\bt, \bxx ,\bw, \mathbf{z},\boldsymbol{y}^{(k)})$ equal to  
\begin{equation*}
    \begin{aligned}
        &  
         \Biggl[\snv {\ind{>}} \Biggl\{ \left(\frac{1}{\sigma^2} -1\right) \ln \yijk  - \ln \Gamma \left(\frac{1}{\sigma^2}\right) - \frac{1}{\sigma^2} \left(2 \ln \sigma + \ln \mijk \right)   - \frac{\yijk}{\sigma^2 \mijk} \Biggr\} \\
        & \quad - \frac{q}{2}\ln (2 \pi) - \frac{\bztkt \bztk}{2} \Biggr]
    \end{aligned}
\end{equation*}
and $\Gamma\left(\bt, \bxx, \bw, \hat{\bz}^{(k)}\right) 
=  \snv {\ind{>}} \l( {\yijk}/{\sigma^2 \mijk} \r) \aat + \boldsymbol{I}_q.
$

\subsection{Solving the estimating equations}

Although our $M$-estimation approach is methodologically clear,  a number of theoretical aspects need special care. 

In the Laplace approximation, the latent factors $\bz$ are implicitly treated as  parameters, which are needed to approximate the integrals characterizing the marginal likelihood. Due to maximization of function $Q$ in Equation~\eqref{Eq: Q_gen} (which yields Equation~\eqref{Eq. FOC_Q}), $\hat{\bz}_{(2)}^{(k)}$ can be formally interpreted as the maximum likelihood estimates of the latent factors in the $k$-th network view: each $\hat{\bz}_{(2)}^{(k)}$ depends on the model parameter $\bfth$, the covariates,  and on the observation $\boldsymbol{y}^{(k)}$. Once the latent variables are estimated, the GLAMLE $\boldsymbol{\hat{\bfth}}$ is interpretable as an $M$-estimator; see among the others \citet{H81} and  \citet{vdW98} for book-length introduction.  To elaborate further in the setting of this paper, let us define $\boldsymbol{\tilde{\mathcal S}}_K(\bfth)= \partial{\tilde\ell_K(\bfth)}/\partial {\bfth}$, whose $k$-th component (for the $k$-th layer) is denoted by 
$\boldsymbol{\tilde{\mathcal S}}^{(k)}({\bfth}) = 
\partial \ln \tilde{f}_{\boldsymbol{\bfth}}(\boldsymbol{y}^{(k)})/\partial {\bfth} =   \partial_{\bfth}  \ln \tilde{f}_{\boldsymbol{\bfth}}(\boldsymbol{y}^{(k)}) $.  By definition, the GLAMLE ${\hat{\bfth}}$ 
solves the FOC:
\begin{equation}
\boldsymbol{\tilde{\mathcal S}}_K(\bfth)= \sk \boldsymbol{\tilde{\mathcal S}}^{(k)}({{\bfth}}) = \boldsymbol{0}. \label{Eq. GLAMLEshort}
\end{equation}
In principle, we should write ${\hat{\bfth}}_{K,n_V}$ instead of ${\hat{\bfth}}$, to emphasize the dependence of the $M$-estimator on the $K$ views of the relationships among the $n_V$ nodes. However, for the ease of notation, in what follows we prefer to drop the $K,n_V$.

We notice that the GLAMLE solution to Equation~\eqref{Eq. GLAMLEshort} relates to the adaptive Gauss-Hermite MLE of \cite{Bianconcini2014} and shares an EM interpretation: the E-step uses Laplace approximation, and the M-step maximises the approximated expected score. See \cite{RVL09} for analogous considerations.

\subsection{Identification} \label{Sec: uniq}
A well-known problem in the  literature on factor models is that latent factors and their loadings are not unique; see e.g. \cite{HRVF04, BL12, N17, JLVR24}.  
The same issue appears also for the solution of the estimating equations in Equation~\eqref{Eq. GLAMLEshort}. To see this, let us consider that $\eta_{ij}$ contains 
$\baa_{ij}^\top \bZ = \alpha_{0,ij} + [\baa_{\boldsymbol{(2)}}]_{\cdot, ij} \bZt$, where $\bZt$ are column vector having $q$-variate normal distribution. Let $\boldsymbol{O}_q$ be an orthogonal square matrix of dimension $q$ and $\boldsymbol{O}_{q+1}$ an orthogonal matrix with block $\boldsymbol{O}_q$ such that
$$\boldsymbol{O}_{q+1} = \left [ \begin{smallmatrix} 1 & \boldsymbol{0}_q^\top \\ \boldsymbol{0}_q & \boldsymbol{O}_q \end{smallmatrix} \right],$$
where $\boldsymbol{0}_q$ is a vector of zeroes of length $q$. It is possible to rotate the matrix $\baa = \{\baa_{ij}\}$ premultiplying it by $\boldsymbol{O}_{q+1}$ and thus obtaining a new matrix of parameters 
$
\boldsymbol{\tilde\alpha} = \boldsymbol{O}_{q+1} \baa $.

Similarly, we may consider $\mathbf{\tilde{Z}} = \boldsymbol{O}_{q+1} \bZ = ( \boldsymbol{1} \quad \boldsymbol{O}_{q}\bZt)^\top = ( \boldsymbol{1} \quad \mathbf{\tilde{Z}}_{(2)})^\top$. 
Clearly,  $\eta_{ij}$ does not change:  
 the original and the rotated solutions are observationally equivalent and only the linear span of the factors can be estimated.  To tackle this issue, we introduce \\

\textbf{A2} (i) \textit{The $q$ rows and first $q$ columns of $\baath$ form an upper triangular block: for rows $r = 1, \ldots, q$ and columns $c = 1, \ldots, m$, $[\hat{\baa}_{(2)}]_{r,c} = 0$ for $1 \le c < r \le q$. Furthermore, the diagonal entries satisfy $[\hat{\baa}_{(2)}]_{r,r} = 1$. This could be loosened by allowing $[\hat{\baa}_{(2)}]_{r,r} > 0$}. \\

Thanks to \textbf{A2} (i), the factor loadings matrix is uniquely determined, so it is guaranteed that we obtain a unique solution to (\ref{Eq. GLAMLEshort}); see Proposition 1 in \citet{HRVF04}. This results in $q(q-1)/2 + q$ constraints, with the first term removing the continuous orthogonal indeterminacy and the second removing the residual column-sign ambiguity. We refer to \cite{BL12}, pp.$442-443$ for a similar discussion.

Differently from existing results on GGLLVM (\cite{JLVR24}) and GLLVM (\cite{HRVF04}), the use of covariates entails the need for additional restrictions, which change accordingly to the  type of considered variables. To begin with, let us consider the models with layer-dependent covariates, i.e. Models~2-4. The three models have an edge-specific intercept plus a covariate term. We assume the following  \\

\textbf{A2} (ii) \textit{Consider $L_x$ layer-dependent covariates and their associated design matrices $\boldsymbol{X}_{ij} \in \mathbb{R}^{K \times L_x}$. For all edges $(i , j) \in E$, design matrix $\boldsymbol{X}_{ij}$ must be of full rank.} \\

Finally, in the presence of $L_w$ layer-independent covariates, i.e. for Models~5-6, let $\bW  \in \mathbb{R}^{m \times L_w}$ be the design matrix associated with covariates $\bw$. After inspecting the expression for the linear predictor $\eta^{(k)}_{ij}$, the layer-invariant part of the model is $\bs :=\baa_0+ \bW \bg \in \mathbb{R}^m$. For any $\boldsymbol{\delta} \in \mathbb{R}^{L_{w}}$, define $\tilde{\baa}_0:=\baa_0+ \bW \boldsymbol{\delta}$, $\tilde{\boldsymbol{ \bg}}:=\bg-\boldsymbol{\delta}$. Then $\tilde{\baa}_0+ \bW \tilde{\boldsymbol{\bg}}= \baa_0 + \bW \bg = \bs$. Hence, $(\baa_{0}, \bg) $ is not identified:  the solution set is the whole $L_{w}$-dimensional affine space $\left\{\left(\baa_0+ W\boldsymbol{\delta}, \bg -\boldsymbol{\delta}\right): \boldsymbol{\delta} \in \mathbb{R}^{L_{w}}\right\}$. The following assumption resolves the issue: \\

\textbf{A2} (iii) \textit{ In the presence of layer-independent covariates, set $\bW^\top \baa_{0} = \boldsymbol{0}$.}  \\

Assumption~\textbf{A2} (iii) allows separating the sum $\bs$ in a way such that the edge-specific intercept $\baa_0$ could not contain any component that is explainable by the covariates $\bW$. In other words, it forces $\baa_0$ to be the part of the baseline signal that is orthogonal to the covariate space spanned by the columns of $\bW$. Imposing a linear constraint $\bW^\top \baa_{0} = \boldsymbol{0}$ implies that every valid $\baa_0$ must lie in the null space (kernel) of $\bW$. 
Taken together,  \textbf{A2} (i)-(iii) identify the model parameters. They are easily implemented in software and, as shown in Section~\ref{Sec_Asym}, simplify the asymptotic theory while guaranteeing convergence of the estimated loadings without sign indeterminacy. If one wishes to interpret the estimated factor loadings, the constraints in  \textbf{A2} imply that the order of the edges matters. However, estimation of $\pij$ in ZAGA is unaffected, as rotated versions of factors and loadings are observationally equivalent. Appendix~\ref{SecIdentif} goes into more details on the constraints and their numerical implementation. 

\section{Asymptotic theory} \label{Sec_Asym}

We study the asymptotics of GLAMLE under two regimes. Regime (i): $n_V$ fixed, $K \to \infty$; the GLAMLE is a misspecified MLE converging to a pseudo-true value. Regime (ii): $m = \Theta(K^{\varrho})$ with\footnote{We recall that for generic functions $f$ and $g$, we write $f = \Theta(g)$ iff $f = \bigo{g}$ and $g = \bigo{f}$.}  $\varrho > 0$, $K \to \infty$, so $m \to \infty$ as well; the GLAMLE converges to the true parameter and is asymptotically normal, with different convergence rates for edge-specific and global parameters.

\subsection{Asymptotic regime (i):  $K \to \infty$} \label{Sec: Inter}

This asymptotic regime is already discussed in  \cite{JLVR24}, to which we refer for details. Here we briefly recall 
the central aspects and we elaborate more on testing procedures. 

We denote  as $\text{int}(\bt)$ the interior of 
the parameter space $\bt$
and  define $\boldsymbol{\mathcal{S}}_K(\bfth) = \partial {\ell _K(\bfth)}/\partial {\bfth}$. In analogy with (\ref{Eq. GLAMLEshort}), 
the contribution of the $k$-th layer to the exact likelihood score is $\boldsymbol{\mathcal{S}}^{(k)}({\bfth})=\partial  \ln {f}_{\boldsymbol{\bfth}}(\boldsymbol{y}^{(k)})/\partial {\bfth} = \partial_ {\bfth}  \ln {f}_{\boldsymbol{\bfth}}(\boldsymbol{y}^{(k)}) $, so $\boldsymbol{\mathcal{S}}_K(\bfth) = \sk \boldsymbol{\mathcal{S}}^{(k)}({\bfth}) $. An $M$-estimator is obtained solving  $\boldsymbol{\mathcal{S}}_K(\bfth) = \mathbf 0$: this is the exact MLE whose asymptotic theory can be derived following \cite{vdW98}. However,  the GLAMLE is obtained solving  $\boldsymbol{\tilde{\mathcal S}}({\bfth}) = \boldsymbol{0}$, which makes use of the pseudo-likelihood $\tilde\ell_K$, and not of the exact $\ell_K$. This implies that $\hat{\bfth}$ is a consistent estimator of the pseudo-true value $\tilde{\bfth}= \arg \min \mathbb{E}_0[\tilde{f}_{\bt}(\by)]$, where $\mathbb{E}_0[\cdot]$ indicates that the expected value is taken with respect to the true (unknown) measure. Thus, $\hat{\bfth}\overset{\mathcal{P}}{\rightarrow}\tilde{\bfth}$, as $K \to \infty$.  Moreover,
under standard regularity conditions, we have
$\sqrt{K} (\hat{\bfth}- \tilde{\bfth})\overset{\mathcal{D}}{\Rightarrow}\mathcal{N}(\boldsymbol 0, \boldsymbol V(\tilde{\bfth})),$
where $$\boldsymbol V(\tilde{\bfth}) = \boldsymbol B(\tilde{\bfth})^{-1} \boldsymbol A(\tilde{\bfth})  [\boldsymbol B(\tilde{\bfth})^{-1}]^\top,$$ and
$
\boldsymbol A(\tilde{\bfth})=\mathbb{E}_0 [( \partial_{\bfth} \ln \tilde{f}_{\boldsymbol{\bfth}}(\boldsymbol{y})\vert_{\bfth=\tilde{\bfth}}) (\partial_{\bfth} \ln \tilde{f}_{\boldsymbol{\bfth}}(\boldsymbol{y})\vert_{\bfth=\tilde{\bfth}})^\top],$  $ \boldsymbol B(\tilde{\bfth})= -\mathbb{E}_0[ \partial_{\bfth^\top} (\partial_{\bfth} \ln \tilde{f}_{\boldsymbol{\bfth}}(\boldsymbol{y}))\vert_{\bfth=\tilde{\bfth}}].$

 These results 
are useful to construct asymptotic confidence intervals and for hypothesis testing on $\tilde{\bfth}$.  For instance, assume that we want  to test 
$
\mathcal{H}_0: h(\tilde{\bfth}) = \boldsymbol 0$  vs $
\mathcal{H}_1 : h(\tilde{\bfth}) \neq \boldsymbol 0, 
$
where $h : \boldsymbol \Theta \to \mathbb{R}^r$ is a continuous  function of ${\bfth}$ such that its Jacobian at the pseudo-true value is finite with full row rank $r< \text{dim}\boldsymbol\Theta$---this is to test if some parameter in the misspecified likelihood can be set to zero. Consider also $\hat{\boldsymbol {V}}_K(\hat{\bfth})$ and $\hat{\boldsymbol {B}}_K(\hat{\bfth})$, where the estimated matrices replace the expectation with an empirical average. Then,  
$$\mathcal{W}_K(\hat{\bfth}) = K h(\hat{\bfth})^\top \l[ \partial_{\bfth} h(\hat{\bfth}) \hat{\boldsymbol {V}}_K(\hat{\bfth}) \partial_{\bfth^\top} h(\hat{\bfth})  \r]^{-1} h(\hat{\bfth})$$
is distributed 
under $\mathcal{H}_0$ as a central chi-square with $r$ degrees-of-freedom, $\chi^2_r$. 
Additionally,  let $\bar{\boldsymbol \theta} = \arg\max_{\bfth \in \Theta} \tilde{\ell}_K(\bfth)$, {subject to} $ h(\bfth) =0$, 
and define the Lagrange Multiplier (LM) test statistic 
$$\mathcal{LM}_K (\bar{\bfth}) = \left[\boldsymbol{\tilde{\mathcal S}}_K(\bar{\bfth})  \hat{\boldsymbol {B}}_K(\bar{\bfth})^{-1} \partial_{\bfth^\top} h(\bar{\bfth})\right]
\l[ \partial_{\bfth} h(\bar{\bfth}) \hat{\boldsymbol {V}}_K(\bar{\bfth})  \partial_{\bfth^\top} h(\bar{\bfth}) \r]^{-1} \nonumber  
  \left[ \boldsymbol{\tilde{\mathcal S}}_K(\bar{\bfth})  \hat{\boldsymbol {B}}_K(\bar{\bfth})^{-1} \partial_{\bfth^T} h(\bar{\bfth})\right]^\top, $$
also distributed as $\chi_r^2$, under the null.
One can prove 
that $\mathcal{LM}_K \overset{p}{\to} \mathcal{W}_K(\hat{\bfth})$, as $K \to \infty$.  This result establishes  the usual asymptotic equivalence of the Wald and LM test statistics also in our graph setting, when estimation is conducted via GLAMLE.  However, for testing the hypothesis $\tilde{\bfth}=\bfth^{(0)}$,  the  Likelihood Ratio (LR) test statistic $\mathcal{LR}_K (\hat{\bfth}) = -2 K \ln {\boldsymbol{\tilde{\mathcal S}_K}({\bfth}^{(0)})}/{\boldsymbol{\tilde{\mathcal S}_K}(\bar{\bfth})}$
 is not asymptotically equivalent to $\mathcal{W}_K$. This is because the likelihood is approximated and the second Bartlett equality does not hold.

\subsection{Asymptotic regime (ii):  $K\to \infty$ and $m \to \infty$}
\label{sec:model1_double}

\subsubsection{Consistency  and asymptotic normality}

Due to the order of the error of the Laplace approximation, see Equation~\eqref{Eq. Lapl}, one should let $m \to \infty$.  Therefore, proving consistency for the whole vector of parameter estimates requires {double asymptotics}, where $K\to\infty$ {and} ($n_V\to\infty$, {so }) $m \to\infty$, jointly. 
To have an intuition of the mechanism behind this, consider that the Laplace approximation induces an approximation error of order $O(m^{-1})$ per layer; see \cite{JLVR24} and \citet[Prop.~3.2.1]{Bianconcini2014} for a related discussion. Hence, letting $m\to\infty$ suppresses the Laplace approximation bias and stabilizes the mode and curvature terms. Combining $K\to\infty$ (sampling variability) with $m\to\infty$ (approximation error which entails a bias) gives consistency and asymptotic normality for the full  vector $\boldsymbol{\hat\theta}$, including all edge–specific estimators. In this setting, the GLAMLE converges to the true value, call it $\bt_0$. This result requires that $m$ diverges with $K$ at a certain pace, which depends on  the nature of the considered parameters (either edge-specific or global parameters), reflecting the information content (essentially the number of observations) that sample has for each parameter. Thus, we partition $\bt=(\bt^\top_E, \bt^\top_G)^\top$, where we stack the edge-dependent parameters $\bt_E $ whose dimension depends on $m$  and the fixed-dimensional and global parameter $\bt_G$. For instance, for Model 5 with ZAGA, 
we have $\bt_E = \left(\baa, \bb\right)^\top$ and $\bt_G=(\bg, \sigma)^\top$. Moreover,  
let us partition
$ \baa_0 = (\baa_{0, \mathcal{C}}, 
    \baa_{0, \mathcal{F}})^\top$ using constraints in \textbf{A2} (i), stack all constrained and unconstrained entries of $\baat$ into vectors $\baat[, \mathcal{C}], \baat[, \mathcal{F}]$, respectively, and regroup all free parameters into the vector 
$ \baa_{\mathcal{F}} = (  \baa_{0, \mathcal{F}},  \baat[, \mathcal{F}])^\top$. 
 $\dim \left( \vect{(\baa_{\mathcal{F}})}\right)= m(q+1) - \left(L_w + {q(q+1)}/{2} \right)$; 
 $\dim \left( \vect{(\bb)}\right)= L_x m$ for Models 2, 4, and 5, 
 $\dim \left( \bb\right)= L_x$ for Models~3 and 6; 
$\dim \left( \vect{(\bphi)}\right) = d_{\bphi} \le m$  (it is non-zero only for distributions with overdispersion); and
$\dim \left( \vect{(\bg)}\right) = L_w$.

The total amount of parameters is
   $$\pKMe 
   = 
   m\left(q + 1 + 
   \mathbbm{1}_{\left\{ \dim \left( \operatorname{vec}(\bphi) \right) =  m \right\}} +
   L_x \mathbbm{1}_{\left\{ \dim \left( \operatorname{vec}(\bb) \right) = L_x m \right\}} \right) - \left(L_w + {q(q+1)}/{2} \right),$$ for edge-specific parameters, and $$\pKMg = L_w + L_x \mathbbm{1}_{\left\{ \dim \left( \bb \right) = L_x \right\}} + d_{\bphi} \mathbbm{1}_{\left\{ \dim \left( \operatorname{vec}(\bphi) \right) <  m \right\}}$$ for global parameters. 
The theory developed hereunder applies to both edge-specific and global parameters, but the rates depend on whether the aforementioned parameters are defined as edge-specific or global. 

To derive the asymptotic theory, we need a series of assumptions and lemmas,  which help to  control the size of the terms of the asymptotic expansion of the FOC of the GLAMLE.   The proof, essentially, proceeds in two main steps. We first establish consistency and rates of convergence (Step 1); then, we derive  the asymptotic normality, expressing the asymptotic variance (Step 2). The mathematical derivation blends the arguments in \citet{Bianconcini2014} (derived in the case of $\mathsf{f}_{\bt}$ in the exponential family and for the adaptive Gauss-Hermite (AGH) quadrature), with those in \citet{V96}, \citet{RVL09}, and \citet{ogdenAsymptoticValidityNaive2017a}, adapting all of them to our multiview network, which includes  different types of covariates and considers pdf/pmf that are not necessarily in the exponential family.  

 To illustrate our construction, we provide some heuristics, and we focus on the estimation of $\bt_E$: we call $\bt_0$  (instead of $\bt_{0,E}$) its population value. The full proofs are available in Appendices~\ref{app:tech-det} and ~\ref{proof-thm1} of the \textcolor{blue}{SM}. \\
 
\textit{Step 1: Consistency and rates.} 
Let us define     
\begin{equation}
\epsk[\bt] = \ln \tdens-\ln \dens.
\label{Eq: epsk}
\end{equation}  The Laplace expansion yields the approximate score $\tSco$, which  
satisfies
    $$
 \frac {1}{K} \sk \tSco 
    =
    \frac{1}{K} \sk \Sco
    +  \bigo{m^{-2}},
    $$
since $\tSco = \Sco +  \partial_{\bt} \epsk[\bt]$ and 
 $ \partial_{\bt} \epsk[\bt]= O\!\left(m^{-2}\right) $ uniformly in $\bt$.  As $m\to\infty$, the Laplace bias vanishes and averaging across layers takes care of  sampling variability. Indeed, a mean‑value expansion of the FOC in Equation~\eqref{Eq. GLAMLEshort}  around $\bt_0$ yields
$$
\hbt - \bt_0
=
-\,\Bigg[
\frac{1}{K}\sum_{k=1}^K \partial_{\bt^\top} \partial_{\bt}  \ln \tilde f_{\bt}(\boldsymbol y^{(k)})\bigg|_{\bt=\boldsymbol{\bar\theta}}
\Bigg]^{-1}
\Bigg[
\frac{1}{K}\sum_{k=1}^K \tSco[\bt_0]
\Bigg],
$$
with $\boldsymbol{\bar\theta}$ between $\hbt$ and $\bt_0$. Now analyse the two components. We start from the score term 
\[
\frac{1}{K}\sum_{k=1}^K \tSco [\bt_0] 
= \underbrace{\frac{1}{K} \sum_{k=1}^K \Sco [\bt_0] }_{O_p(K^{-1/2}) \text{ by CLT}} 
+ \underbrace{\frac{1}{K}\sum_{k=1}^K  \partial_{\bt} \epsk[\bt]}_{O(m^{-2})},
\]
which can be rewritten as $({1}/{K})\sum_{k=1}^K \tSco [\bt_0] = O_p\!\big(\max\{K^{-1/2}, m^{-2}\}\big).$ 
Next we look at the {Hessian term}. Under regularity conditions, the Hessian matrix converges by the weak law of large numbers to its limit in probability: 
${1}/{K}\sum_{k=1}^K \partial_{\bt^\top} \partial_{\bt}   \ln \tilde f_{\bt}(\boldsymbol{y}^{(k)})\big|_{\bt=\bt_0} \xrightarrow{p} -\boldsymbol{B}_{\boldsymbol{\vt}} (\bt_0),$
 which (by assumption) is non-singular and its inverse 
 is  still $O_p(1)$. Combining these results yields
$ \bigl\| \boldsymbol{\hat\theta} - \bt_0 \bigr\| =
O_p\!\big(\max\{K^{-1/2}, m^{-2}\}\big)$, which 
illustrates the consistency as $K^{-1/2} \to 0$ and $m^{-2} \to 0$.

 \emph{Step 2: Asymptotic normality.} To characterize the limiting distribution, 
 let us assume $m = \Theta(K^{\varrho})$ for some $\varrho > 0$. 
 For the estimation of $\bt_E$, when $\varrho > 1/4$, we have $K^{-\varrho} = o(K^{-1/4})$, so the bias term is asymptotically negligible relative to the stochastic term at the $\sqrt{K}$ scale. Consequently, 
\[
\sqrt{K}\,(\boldsymbol{\hat\theta} - \bt_0)
=
-\Bigg[
\frac{1}{K}\sum_{k=1}^K \partial_{\bt^\top} \partial_{\bt}  \ln \tilde f_{\bt}(\boldsymbol{y}^{(k)})\big|_{\bt=\bt_0}
\Bigg]^{-1}
\frac{1}{K}\sum_{k=1}^K \tSco [\bt_0]
+ o_p(1).
\]

The central limit and Slutsky's theorems yield: 
$$\sqrt{K}\,(\boldsymbol{\hat\theta} - \bt_0) \overset{\mathcal{D}}{\Rightarrow}\mathcal{N}  (\boldsymbol 0;  \boldsymbol{B}_{\boldsymbol{\vt}} (\bt_0)^{-1}\,\boldsymbol{A}_{\boldsymbol{\vt}} (\bt_0)\,[\boldsymbol{B}_{\boldsymbol{\vt}} (\bt_0)^{-1}]^\top),$$
where $\boldsymbol{A}_{\boldsymbol{\vt}} (\bt_0) = \plim K^{-1}
\sk \tSco[\bt_0]\tSco[\bt_0]^\top$.

Similar calculation holds for $\bt_G$ with two main differences: $ \partial_{\bt} \epsk[\bt]= O\!\left(m^{-1}\right) $ for $\varrho > 1/3$, and the scaling factor
$1/K$ becomes $1/(Km)$. To formalize the above heuristics and understand why $\varrho$ takes on different values for edge-specific and global parameters, we need to introduce assumptions \textbf{A3} and \textbf{A4}, which allow to control all the terms in the expansions.  \\

\textbf{A3}  \textit{Let $m \to \infty$ and $K\to\infty$ jointly. 
Consider
$\bt=\big(\{\baa_{ij}\}_{i\neq j},\{\bb_{ij}\}_{i\neq j},\bg,\{\bphi_{ij}\}_{i\neq j} \big)^\top
\in \boldsymbol{\ThKM}$, and set $\pKM=\dim(\boldsymbol{\ThKM}).$ There exists a true parameter $\bt_0\in\mathrm{int}(\boldsymbol{\ThKM})$ and a constant $M$
independent of $(K,m)$ such that $\sup_{\tl\in\boldsymbol{\ThKM}} \mnorm{\tl} \le M < \infty$. Moreover, $\boldsymbol{\ThKM}$ is compact for each fixed pair $(K,m)$ and let $\boldsymbol{\Theta_0}\subset \boldsymbol{\ThKM}$ be compact.}  \\

\textbf{A3} specifies the parameter space and its behaviour in the asymptotic regime, where $\pKM \to \infty$. This is similar to the technical assumptions in \citet[p. 1525]{Bianconcini2014} for the AGH estimator for latent variable models. 
However, for our general model and differently from the assumptions available in the literature, we need also to specify the behaviour of edge and layer specific parameters.  To proceed further, fix $\theta_l$  in $\bt, l = 1, \ldots, \pKM$ and set $\tl = \theta_l$, to lighten the notation. Making use of Equation~\eqref{Eq:D2Q}, for a fixed integer $r\ge 2$, we define 
   $$\boldsymbol{\hat{Q}}_{r}^{(k)} (\tl) = - m \, D^r_{\bz} Q\l(\bt^{l}, \bxx,  \bw,  \mathbf{\hat{z}}^{(k)}, \boldsymbol y^{(k)}\r),$$
where we let $\bt^{l}$ denote the fact that we fix $\theta_l = \theta$ in $\bt$.  
  For notational brevity, we henceforth write $\boldsymbol{\hat{Q}}_{r}(\tl)$ in place of $\boldsymbol{\hat{Q}}_{r}^{(k)}(\tl)$, while emphasizing that this quantity retains its dependence on the layer $k$.
  Note that when $r=2$, $\boldsymbol{\hat{Q}}_2(\tl) = -m U (\bt^{l}, \bzh)$, the negative Hessian matrix (scaled by $m$) evaluated at the mode $\bztkh$. For $r\ge 3$, $\boldsymbol{\hat{Q}}_r(\tl)$ is an $r$-dimensional tensor of dimension $q^r$.
 
 Before moving forward, we introduce the following notation; we refer to Appendix~\ref{app:tech-det} of \textcolor{blue}{SM} for further details. For a tensor $\boldsymbol{A}_r$ with dimension $d^r$, define the componentwise sup-norm (max entry norm) $
    \|\boldsymbol{A}_r\|_{\max}
    :=
    \max \{ 
    |
    (\boldsymbol{A}_r)_{a_1\cdots a_r}
    | \vert, \ 
    1\le a_1,\ldots,a_r\le d
    \} 
    =
    \max_{1\le a_1,\dots,a_r\le d}
    |
    (\boldsymbol{A}_r)_{a_1 \ldots a_r}
    |.
    $
    For a matrix $\boldsymbol{A}$, the spectral norm $\snorm{\cdot}$ is defined as  $
    \snorm{A}
    :=
    \sigma_1(\boldsymbol{A}),$ which is  the largest singular value of $\boldsymbol{A}$. For any vector $\boldsymbol{x}$, $\mnorm{\boldsymbol{x}}=\max_i |x_i|$.
For any matrix $\boldsymbol{A}$, $\mnorm{\boldsymbol{A}}=\max_i \sum_j |A_{ij}|$ (the largest absolute row sum). 
The norm $\mnorm{\cdot}$ will be written as $\norm{\cdot}$ for simplicity.   Equipped with these norms,  we introduce: \\

\textbf{A4.}     \label{ass:reg}
\begingroup
\itshape
    Fix a layer $k$ and let $m \to\infty$ with latent dimension $q$ fixed. 
For each $\bt \in \boldsymbol{\Theta_0}$, whose $l$-th component is equal to $\theta$:  
    \begin{enumerate}[(i)]
    \item \label{ass:unique-optimizer}
    (Unique interior mode)
    Assume $Q$ is continuous in $\bz$ and,  for each layer $k$, it 
    has a unique maximiser $\bztkh(\tl)$. Moreover, there exists $t>0$ (not depending on $k$ or $\theta$) such that
    $Q$ is well-defined on $\mathcal B_k(\bztkh(\tl),t)$, where
    $
    \mathcal B_k(\bztkh(\tl),t)
    :=\{\bzt \in\mathbb{R}^q: \norm{\bzt -\bztkh(\tl)} <t\}.
    $
    \item \label{ass:smooth-z} (Uniform smoothness in the latent variables) 
    Fix an integer $r_{\max}\ge 4$. For each integer $r$ with $2\le r\le r_{\max}$,
    there exist constants $\epsilon_r>0$ and $C_r<\infty$ (both depending on $r$), independent of $(K,m)$, such that
    uniformly over $1\le k\le K$, $\bt  \in \boldsymbol{\Theta_0}$,
    and dyads $i \to j$ (or $i \sim j$ for undirected graph) with $i\neq j$,
    $
    \sup_{\bz \in \Ball{\zhatk(\tl)}{\epsilon}}
    \enorm{D_{\bz}^r\,\ell_{ij}^{(k)}(\tl, \bz)} 
        \le C_r.
    $
    \item \label{ass:mixed-derivs} (Uniform mixed derivatives in $(\bztk,\tl)$)    Fix $r_{\max}\ge 4$ as in Assumption~(\ref{ass:smooth-z}) above.
    For each integer $r$ with $1\le r\le r_{\max}$, there exists $C_{r,1}<\infty$
    independent of $(K,m)$ such that uniformly over $1\le k\le K$, $\tl\in\ThKM,\bt  \in \boldsymbol{\Theta_0}$,
    and dyads $i \to j$ (or $i \sim j$ for undirected graph) with $i\neq j$,
    $
    \sup_{\bz \in \Ball{\zhatk(\tl)}{\epsilon}}
     \enorm{{\partial_{\tl} } D_{\bz}^r\,\ell_{ij}^{(k)}(\tl;\bz)}
    \le C_{r,1}.
    $

    \item \label{ass:curvature-mode} (Curvature) There exist constants $0<c<C<\infty$ such that, uniformly in $k$ and $\tl\in\Theta_0$,
    $
    c \le \lambda_{\min}\! (m^{-1}\boldsymbol{\hat{Q}}_2(\tl))
    \le \lambda_{\max}\! (m^{-1}\boldsymbol{\hat{Q}}_2(\tl))
    \le C,
    $
    where  $\lambda_{\min}\! (m^{-1}\boldsymbol{\hat{Q}}_2(\tl) )$ and  $\lambda_{\max}\! (m^{-1}\boldsymbol{\hat{Q}}_2(\tl) )$ denote the smallest and largest eigenvalue of the matrix $\boldsymbol{\hat{Q}}_2(\tl)$, respectively, which is a symmetric positive definite (hence, invertible) matrix.  \\
    \end{enumerate}
    
\endgroup

Thanks to \textbf{A1-A4} we may state the key asymptotic properties of the GLAMLE. 
In what follows, we consider a subvector $\boldsymbol{\vt}_E$ of $\bt_E$ corresponding to  
edge-specific parameters (for example $\text{vec}(\boldsymbol{\beta})$, containing all dyad specific coefficients $\boldsymbol\beta_{ij}$)  and we label the corresponding true population parameter $\boldsymbol{\vt}_{E,0}$. Otherwise, we consider the  block $\bt_G$ of global parameters and label the corresponding true population parameter $\bt_{G,0}$. 

\begin{theorem}
\label{thm:model1_double}
Suppose Assumptions $\textbf{A1}-\textbf{A4}$ hold. Furthermore, assume that the Laplace approximation is such that the per-layer log-likelihood error in (\ref{Eq: epsk}) is of order $\bigo{m^{-1}}$, uniformly on a neighbourhood of $\bt_0$  (which can be either $\boldsymbol{\vt}_{E,0}$ or  $\bt_{G,0}$). Denoting $\boldsymbol{\hat\theta}$ the GLAMLE estimator of $\bt_0$, we have:

{
\renewcommand{\labelenumi}{(C\arabic{enumi})}
\begin{enumerate} 
    \item \label{thm-c1-cons-rate} \textbf{Consistency and rates:}
    As $K\to\infty$ and $m\to\infty$,
$$\bigl\| \boldsymbol{\hat\theta} - \bt_0 \bigr\|  =   \bigop{\max \left\{ n_0^{-1/2},  m^{-2}\right\}},$$  where $n_0$ is equal to $K$ for edge-specific parameters or equal to $Km$ for global  parameters.
 
    \item \label{thm-c2-as-norm} \textbf{Asymptotic normality:} 
    Assume additionally that $m = \Theta(K^{\varrho})$ for some positive $\varrho$. So, 
    for  $\boldsymbol{\vt}_{E,0}$, if $\varrho > 1/4$, it holds that
                \[
        \sqrt{K} (\boldsymbol{\hat\theta}_E - \boldsymbol{\vt}_{E,0}) \overset{\mathcal{D}}{\Rightarrow}
        \mathcal{N}\left(\boldsymbol 0;\; 
        \boldsymbol{B}_{E}(\bt_0)^{-1}\,\boldsymbol{A}_{E}(\bt_0)\,[\boldsymbol{B}_{E}(\bt_0)^{-1}]^\top
        \right),
        \]
        where $
        \boldsymbol{A}_{E}(\bt_0) = \plim \frac{1}{n_0}\sum_{k=1}^K \boldsymbol{\tilde{S}^{(k)}_E}(\bt_0) \boldsymbol{\tilde{S}^{(k)}_E}(\bt_0)^\top,$  $
        \boldsymbol{B}_{E}(\bt_0) = -\plim \frac{1}{n_0}\sum_{k=1}^K \partial_{\bt_E} \boldsymbol{\tilde{S}^{(k)}_E} (\bt_0),$
        and $\boldsymbol{\tilde{S}^{(k)}_E}(\bt) = \partial_{\bt_E} \ln \tilde{f}_{\bt}(\by^{(k)})$.    For  $\bt_G$, if $\varrho > 1/3$, we have
        \[
        \sqrt{K m} (\boldsymbol{\hat\theta}_G - \bt_{G,0}) \overset{\mathcal{D}}{\Rightarrow}
        \mathcal{N}\left(\boldsymbol 0;\; 
        \boldsymbol{B}_{G}(\bt_0)^{-1}\,\boldsymbol{A}_{G}(\bt_0)\,[\boldsymbol{B}_{G}(\bt_0)^{-1}]^\top
        \right),
        \]
        where $\boldsymbol{A}_{G}(\bt_0)$ and $\boldsymbol{B}_{G}(\bt_0)$ are defined analogously using $\boldsymbol{\tilde{S}^{(k)}_G}(\bt) = \partial_{\bt_G} \ln \tilde{f}_{\bt}(\by^{(k)})$.

    The matrices $\boldsymbol{B}_{E}(\bt_0)$ and $\boldsymbol{B}_{G}(\bt_0)$ are assumed to be nonsingular.
\end{enumerate}
}
\end{theorem}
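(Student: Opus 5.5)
The plan is to treat the GLAMLE as an $M$-estimator built from a perturbed score and to carry out the two-step heuristic of Section~\ref{sec:model1_double} rigorously, handling the edge-specific block $\bt_E$ and the global block $\bt_G$ separately. The starting point is the decomposition $\tSco=\Sco+\partial_{\bt}\epsk[\bt]$, with $\epsk$ defined in \eqref{Eq: epsk}.

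\emph{Bounding the derivative of the Laplace error.} Using \textbf{A4}(i)--(iv), I would expand $mQ(\bt^{l},\bz,\by^{(k)})$ around the mode $\bztkh(\tl)$ up to order $r_{\max}\ge 4$.
\begin{itemize}
\item By \textbf{A4}(ii), each derivative tensor $\boldsymbol{\hat Q}_r$ is a sum of $m$ terms bounded by $C_r$, so $\boldsymbol{\hat Q}_r=O(m)$.
\item By \textbf{A4}(iv), $\boldsymbol{\hat Q}_2^{-1}=O(m^{-1})$.
\item The standard Edgeworth-type bookkeeping then gives $\epsk=m^{-1}c_1^{(k)}(\bt)+O(m^{-2})$. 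Here $c_1^{(k)}$ is the usual combination of $\boldsymbol{\hat Q}_4\boldsymbol{\hat Q}_2^{-2}/m^{-1}$ and $\boldsymbol{\hat Q}_3^2\boldsymbol{\hat Q}_2^{-3}/m^{-1}$, and is $O(1)$.
\end{itemize}
Next I differentiate with respect to a single coordinate $\tl$, using the implicit function theorem on \eqref{Eq. FOC_Q}: $\partial_{\tl}\bztkh=-\boldsymbol{\hat Q}_2^{-1}\partial_{\tl}D_{\bz}Q$.

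The key observation is that the size of $\partial_{\tl}\epsk$ depends on how many dyads involve $\tl$.
\begin{itemize}
\item \emph{Edge-specific $\tl$.} Only one dyad $ij$ depends on $\tl$, so by \textbf{A4}(iii) each $\partial_{\tl}\boldsymbol{\hat Q}_r$ is $O(1)$ rather than $O(m)$, and $\partial_{\tl}\bztkh=O(m^{-1})$. Hence $\partial_{\tl}\epsk=O(m^{-2})$.
\item \emph{Global $\tl$.} All $m$ dyads are involved, so $\partial_{\tl}\boldsymbol{\hat Q}_r=O(m)$ and only $\partial_{\tl}\epsk=O(m^{-1})$ survives.
\end{itemize}
This asymmetry is the source of the different thresholds on $\varrho$. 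I would also verify that both bounds hold uniformly over $\Theta_0$ and over $k$, using the uniformity built into \textbf{A4}.

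\emph{Consistency and rates (C1).} The exact score $\Sco[\bt_0]$ has mean zero and is i.i.d.\ across layers by \textbf{A1}.
\begin{itemize}
\item For an edge-specific coordinate, only $O(1)$ observations per layer carry information, so $K^{-1}\sum_k\Sco[\bt_0]=O_p(K^{-1/2})$.
\item For a global coordinate, the per-layer score is a sum of $m$ conditionally independent contributions. After normalizing by $Km$, this gives $O_p((Km)^{-1/2})$, and the bias becomes $(Km)^{-1}\cdot Km\cdot O(m^{-1})$ relative to an information of order $Km$.
\end{itemize}
To pass from coordinatewise bounds to the sup-norm for a diverging dimension $\pKM$, I would establish:
\begin{itemize}
\item \emph{Identification.} \textbf{A2} guarantees a unique zero of the limiting score.
\item \emph{Uniform convergence of the normalized Hessian.} This follows from a ULLN on the compact $\Theta_0$ using \textbf{A3}, with bounded third derivatives coming from \textbf{A4}(iii).
\item \emph{Invertibility.} The limiting Hessians $\boldsymbol B_E,\boldsymbol B_G$ are nonsingular, and their inverses are bounded in the $\infty$-row-sum norm.
\end{itemize}
For the last point I would exploit the near block-diagonal structure of the Hessian: edge parameters of distinct dyads interact only through $\bztkh$, at order $O(m^{-1})$. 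A Neumann-series argument then bounds the inverse, and the mean-value expansion gives (C1).

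\emph{Asymptotic normality (C2).} With $m=\Theta(K^{\varrho})$ and the scaling $\sqrt{n_0}$, the bias term contributes:
\begin{itemize}
\item for $\bt_E$, $\sqrt K\,O(m^{-2})=O(K^{1/2-2\varrho})$, which is $o(1)$ iff $\varrho>1/4$;
\item for $\bt_G$, $\sqrt{Km}\,O(m^{-1})\cdot$(normalization), which reduces to $o(1)$ under $\varrho>1/3$.
\end{itemize}
I would check this exponent carefully. I expect it to come from combining the $O(m^{-1})$ score bias with the $O(m^{-1/2})$ gain in precision together with the second-order remainder $\|\hbt-\bt_0\|^2$ in the Taylor expansion. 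The leading term is then a normalized sum of i.i.d.\ layer scores. A Lindeberg CLT, applied to fixed finite-dimensional linear combinations since $\pKM\to\infty$, together with Slutsky and the Hessian convergence, yields the sandwich limits with $\boldsymbol A_E,\boldsymbol B_E,\boldsymbol A_G,\boldsymbol B_G$.

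\emph{Main obstacle.} I expect the main obstacle to be controlling the Taylor remainder and the inverse Hessian when the dimension $\pKM\asymp m$ grows. Specifically, I need to show that the quadratic remainder, which sums third derivatives over $\asymp m$ coordinates, stays negligible at the $\sqrt K$ and $\sqrt{Km}$ scales. This is where the exact exponents $1/4$ and $1/3$ must be verified. My first attempt would bound the remainder entrywise via \textbf{A4}(iii), exploiting the sparsity that each dyad's likelihood depends on only $O(1)$ edge parameters.
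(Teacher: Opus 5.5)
Your architecture is the paper's: split $\tilde{\mathcal S}^{(k)}=\mathcal S^{(k)}+\partial_{\boldsymbol\theta}\epsilon^{(k)}$, show $\partial_{\theta_l}\epsilon^{(k)}=O(m_{\theta_l}m^{-2})$ by differentiating the Laplace-error expansion term by term and counting the number $m_{\theta_l}$ of dyads involving $\theta_l$, then use a mean-value expansion normalised by $n_0$, a CLT and Slutsky. The gap is in the bias bookkeeping for $\boldsymbol\theta_G$, and it is exactly why you cannot derive the threshold $1/3$.

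The bound $\partial_{\boldsymbol\theta_G}\epsilon^{(k)}=O(m^{-1})$ is a \emph{per-layer} quantity. The $m$ dyads are already summed inside it, which is why it is $m^{-1}$ and not $m^{-2}$. So the layer sum has $K$ such terms, not $Km$:
$\|(Km)^{-1}\sum_{k=1}^K\partial_{\boldsymbol\theta_G}\epsilon^{(k)}\|\le (K/n_0)\sup_k\|\partial_{\boldsymbol\theta_G}\epsilon^{(k)}\|=m^{-1}O(m^{-1})=O(m^{-2})$.
This is the paper's bound. Your count $(Km)^{-1}\cdot Km\cdot O(m^{-1})=O(m^{-1})$ contradicts the $m^{-2}$ in (C1), and at the $\sqrt{Km}$ scale it would force $\varrho>1$.

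With the correct count the exponent is immediate: $\sqrt{Km}\,O(m^{-2})=O(K^{1/2-3\varrho/2})$, which vanishes iff $\varrho>1/3$. Likewise $\sqrt K\,O(m^{-2})=O(K^{1/2-2\varrho})$ gives $\varrho>1/4$. Both thresholds are just the bias conditions $\sqrt{n_0}\,m^{-2}\to0$: per-layer score bias, divided by per-layer information, times $\sqrt{n_0}$. The second-order Taylor remainder is not their source. The paper never expands to second order; it uses the mean-value form with the Hessian at $\bar{\boldsymbol\theta}$ and a WLLN.

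Two steps you add beyond the paper will also not go through as planned.

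First, $\boldsymbol B_E^{-1}$ is not bounded in the row-sum norm, so no Neumann series can deliver that bound. Within each layer the latent factor couples all dyads through $\Gamma_k^{-1}$. The edge block is therefore block-diagonal minus a rank-$q$ term whose entries are $O(m^{-1})$ but whose row sums are $O(1)$. Moreover, shifting the intercepts along the loadings is nearly a shift of $\boldsymbol z^{(k)}_{(2)}$, identified only through the $\mathcal N(\boldsymbol 0,\boldsymbol I_q)$ prior. For Poisson intercepts with fixed loadings, the inverse of the per-layer negative Hessian is, to leading order, $D_k^{-1}+\boldsymbol\alpha_{(2)}^\top\boldsymbol\alpha_{(2)}$ with $D_k=\mathrm{diag}(\mu^{(k)}_{ij})$. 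Hence $\lambda_{\min}(\boldsymbol B_E)\lesssim m^{-1}$ and $\|\boldsymbol B_E^{-1}\|_\infty\ge\|\boldsymbol B_E^{-1}\|_2\gtrsim m$. Only entrywise control is available; the paper simply posits a nonsingular $\boldsymbol B_E$ with $O_p(1)$ inverse.

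Second, suppose you do track the quadratic remainder for $\boldsymbol\theta_G$. The $\asymp m$ edge coordinates, each estimated with error $O_p(K^{-1/2})$, generically contribute an incidental-parameter bias of order $K^{-1}$ to $\hat{\boldsymbol\theta}_G$. This already happens in the Gaussian factor model with a common variance, where $\epsilon^{(k)}\equiv0$. That bias is negligible at the $\sqrt{Km}$ scale only if $m/K\to0$. This condition is separate from $\varrho>1/3$, and the paper's mean-value argument does not address it. So that step cannot close in general when $\varrho\ge1$.
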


\subsubsection{Discussion} \label{Sec. Disc} 

\textit{General comments.} 
Theorem~\ref{thm:model1_double} implies that the GLAMLE of $\bb$ and of $\bg$ is consistent and asymptotically normal under double asymptotics. 
 Its convergence rate depends on $K$ and $m$: $O_p(K^{-1/2})$ from $M$-estimation sampling variability, and $O(m^{-2})$ from the Laplace error. The max-rate in (C1) shows the error is constrained by the slower of these two sources. For edge-specific parameters, the stochastic component is of order $O_p(K^{-1/2})$, whereas for global parameters it is of order $O_p((Km)^{-1/2})$, since the latter exploit information from all $Km$ dyadic observations. In both cases, the Laplace approximation contributes an additional bias term whose order is determined by Theorem~\ref{thm:model1_double}. 
 From (C2), we observe that if $m=\Theta(K^{\varrho})$, then the Laplace contribution becomes asymptotically negligible and the GLAMLE attains the same asymptotic distribution as the corresponding infeasible MLE. The required growth condition depends on the parameter type: $\varrho>1/4$ for edge-specific parameters and $\varrho>1/3$ for global parameters.   The stronger condition for global parameters should not be interpreted as these parameters being harder to estimate. On the contrary, global parameters are estimated more efficiently, converging at the faster rate $(Km)^{-1/2}$. The difference arises because global parameters enter all dyadic likelihood contributions simultaneously and are therefore more sensitive to the accumulation of Laplace approximation errors across the network. Consequently, a faster growth of $m$ relative to $K$ is required to render the approximation bias asymptotically negligible. We also notice that $\varrho>1/2$ is a sufficient condition for the GLAMLE to be asymptotically equivalent to the true MLE for all parameter types, mirroring the results in \citet{RVL09}. If $\varrho$ is below the corresponding threshold, the Laplace approximation error is no longer negligible at the asymptotic scale and contributes to the leading-order behaviour of the estimator; see \cite{V96} for a related discussion.
The asymptotic normality result in (C2)  requires $m=\Theta(K^{\varrho})$. This technical condition is discussed in Appendix~\ref{app:tech-det} of the \textcolor{blue}{SM} (see Remark~14 in Appendix~\ref{proof-thm1}) and is related to the admissible growth rate of the number of edge-specific parameters for the validity of the central limit theorem in the high-dimensional regime considered here. 

In the statement of Theorem~\ref{thm:model1_double}, for the sake of simplicity, we assume
a common number of edges $m$ across all $K$ network views. In practice, different layers
may exhibit different numbers of dyads $m_k$, $k=1,\dots,K$. In the WTO dataset, this can be due, for instance,
to non-reporting of trade values for specific commodities, layer-specific trimming of sparsely-traded pairs, 
or missing covariate values for particular dyads. This heterogeneity can be accommodated similarly to 
 \citet{RVL09} (for unbalanced longitudinal designs) and to
\citet{V96} (for the nonlinear mixed-effects setting). Namely, one can let $m:=\min_k(m_k)$:
this does not entail the need for major changes in the existing proofs. Lemmas~\ref{lem:hatQr-order}--\ref{lem:delta-inf} in
Appendix~\ref{app:tech-det} of the \textcolor{blue}{SM} are already stated for a fixed layer $k$, so
considering $m_k$ requires no change to their arguments. The only place where the
common-$m$ assumption is used substantively is in the supremum in Lemma~\ref{lem:delta-inf}, which
controls the Laplace bias uniformly across layers. With heterogeneous $m_k$, this supremum
is instead governed by $\min_k(m_k)$---the supremum over $k$ of a decreasing function
of $m_k$ is attained at the layer with the fewest edges. Consistency (C1) is thus
unaffected. The asymptotic normality result (C2) also continues to hold, but with the
correct normalizing sequence for global parameters given by $\sum_k m_k$, the total number
of dyad--layer observations, rather than $Km$; the growth condition $\varrho>1/3$ applied
to $\min_k(m_k)$ remains sufficient for the Laplace bias to be negligible, irrespective of
how the other $m_k$ grow. When the $m_k$ are, in addition, of the same asymptotic order,
$\sum_k m_k = \Theta(Km)$, so that the normalizing sequence simplifies to $Km$ as in the
homogeneous case.  We refer to the Appendix~\ref{SecNbEdges} in the \textcolor{blue}{SM} for
details.

Finally, we emphasize that although the theory is asymptotic, our numerical experience suggests that the finite-sample accuracy of the Laplace approximation is remarkably good. In particular, the simulations in Section~\ref{Sec: MC_gllvm} show that the approximation bias can already be virtually negligible, for all parameters, in networks with as few as $n_V=10$ nodes.

\textit{Some technical aspects.} The proof of Theorem~\ref{thm:model1_double} requires several intermediate results contained in Lemma~\ref{lem:hatQr-order}--\ref{lem:delta-inf}, available in Appendix~\ref{app:tech-det} of 
the \textcolor{blue}{SM}; the full argument is given in Appendix~\ref{proof-thm1} of 
the \textcolor{blue}{SM}, and we highlight its key steps here. Following \cite{SMcC95}, \citet{Bianconcini2014}, and \citet{ogdenAsymptoticValidityNaive2017a,ogdenErrorLaplaceApproximations2021}, we express the Laplace approximation error using the series expansion as 
in Equation~Equation~\eqref{eq:eps-error},
which depends on the terms $\hat{Q}_2^{-1} (\tl)$ and $\hat{Q}_r (\tl)$, from which we select scalar entries in a particular combination of indices. 
In our development, we use this expansion to determine the asymptotic order of the Laplace approximation error and its derivatives. We begin by showing (Lemmas~\ref{lem:hatQr-order}--\ref{lem:orders-hatQr-vector}) that $\epsilon^{(k)} (\bt) = \bigo{m^{-1}}$. Consequently, the error incurred by replacing the exact log-likelihood with its Laplace approximation is $O(m^{-1})$. The resulting $M$-estimator admits a von Mises expansion with stochastic error $O_p(K^{-1/2})$ and deterministic bias of order $O(m^{-1})$. This aligns with the results in \citet{V96} for mixed-effects models in longitudinal studies and already constitutes a novel contribution in our multiview network setting. 

However, our derivation goes further. As noted by \citet{ogdenAsymptoticValidityNaive2017a}, the above result on $\epsilon^{(k)}(\bt)$  controls the log-likelihood error but is of limited use for controlling the error in the log-likelihood score $\Sco$: analysing the scores requires studying $\partial_{\boldsymbol\theta} \epsilon^{(k)}(\bt)$. 
To address this, and following \citet{ogdenAsymptoticValidityNaive2017a}, we consider the error term as an asymptotic expansion, and study the error of the highest contribution.
More in detail, see Lemma~\ref{lem:delta-inf},  we prove that $\partial_{\boldsymbol\theta} \epsilon (\bt^l) = O(m^{-2})$ for dyad-specific parameters and $O(m^{-1})$ for global parameters. 

We combine all these ingredients to
prove Theorem \ref{thm:model1_double}, which yields conclusions similar to those of \citet{RVL09} and \citet{Bianconcini2014} and characterizes the order of the Laplace approximation in the estimated parameters. In spite of this theoretical similarity with the existing literature, unlike previous work,  our multiview setting distinguishes between edge-specific and global parameters, and explains how to derive the Gaussian asymptotic distribution under joint growth (at rate depending on $\varrho$)  of the network size $m$ and the number of views $K$.

\section{Monte Carlo experiments} \label{Sec: MC_gllvm}

We assess the finite-sample performance of the GLAMLE via Monte Carlo (MC) experiments. Section~\ref{PZ_MC} evaluates the Poisson and ZAGA versions of GLAMLE. Section~\ref{Sec: GLPP_comp} compares GLAMLE with the Poisson PMLE (PPMLE) with fixed effects, the natural off-the-shelf competitor for gravity models.

\subsection{GLAMLE for Poisson and ZAGA models} \label{PZ_MC}

We  study the performance of the GLAMLE under different distributions, with a small number of nodes ($n_V=10$, $m=90$ directed edges $i \to j$). Full design details are in Appendix~\ref{App: MCsettings}  of the \textcolor{blue}{SM}, here we provide the key details. Specifically, we consider the
\textit{Poisson case} (Models~3 and~4 with one latent variable, $q=1$,  one edge-specific covariate, $L_x=1$,  no global covariates, $L_w=0$), and
the \textit{ZAGA case} (Model~5 with one latent variable, $q=1$, one edge-specific covariate, and one global covariate, $L_x=L_w=1$). In both cases,  $K=500$. 

To illustrate the simulated random fields, Figure~\ref{fig:b_xij} shows chord diagrams for Poisson (left) and ZAGA (right) across two layers. The plots illustrate that Poisson generates dense graphs with few zeros, while ZAGA produces sparse networks with clear zero flows and heterogeneous edge weights, mimicking real trade data.

For all models, estimation is based on the Laplace-approximated log-likelihood, implemented in \texttt{R} using the \texttt{RTMB} package, a native \texttt{R} interface to TMB, connecting  C++ and \texttt{R}. The resulting output allows the log-likelihood to be maximised using the \texttt{PORT} routines  in \texttt{nlminb}. 
The estimated parameters were compared with their true values, excluding parameters fixed by the identifiability constraints, see \Autoref{Tfig:bij_xij}(a) and \Autoref{Tfig:bij_xij}(b) for Poisson Models~3  and 4, respectively, and  \Autoref{Tfig:bij-xij-wij-o} 
for ZAGA Model~5. 
The plots illustrate that all GLAMLE estimates 
are essentially unbiased, so the Laplace approximation to the true density works remarkably well even when $n_V=10$. 
For Poisson Model~3,  the precision for the estimator $\boldsymbol{\hat\beta}$  is even larger (Figure \ref{Tfig:bij_xij}(a)): this is a global parameter and this numerical result is in line with Theorem \ref{thm:model1_double}, illustrating that global parameters may converge faster than edge-specific parameters. 

\begin{figure}[t]
  \centering

  \begin{minipage}[t]{0.49\textwidth}
    \vspace{0pt}
    \centering

    \begin{minipage}[t][0.15\textheight][t]{\linewidth}
      \centering
      \includegraphics[
        width=0.95\linewidth,
        height=0.35\textheight,
        keepaspectratio
      ]{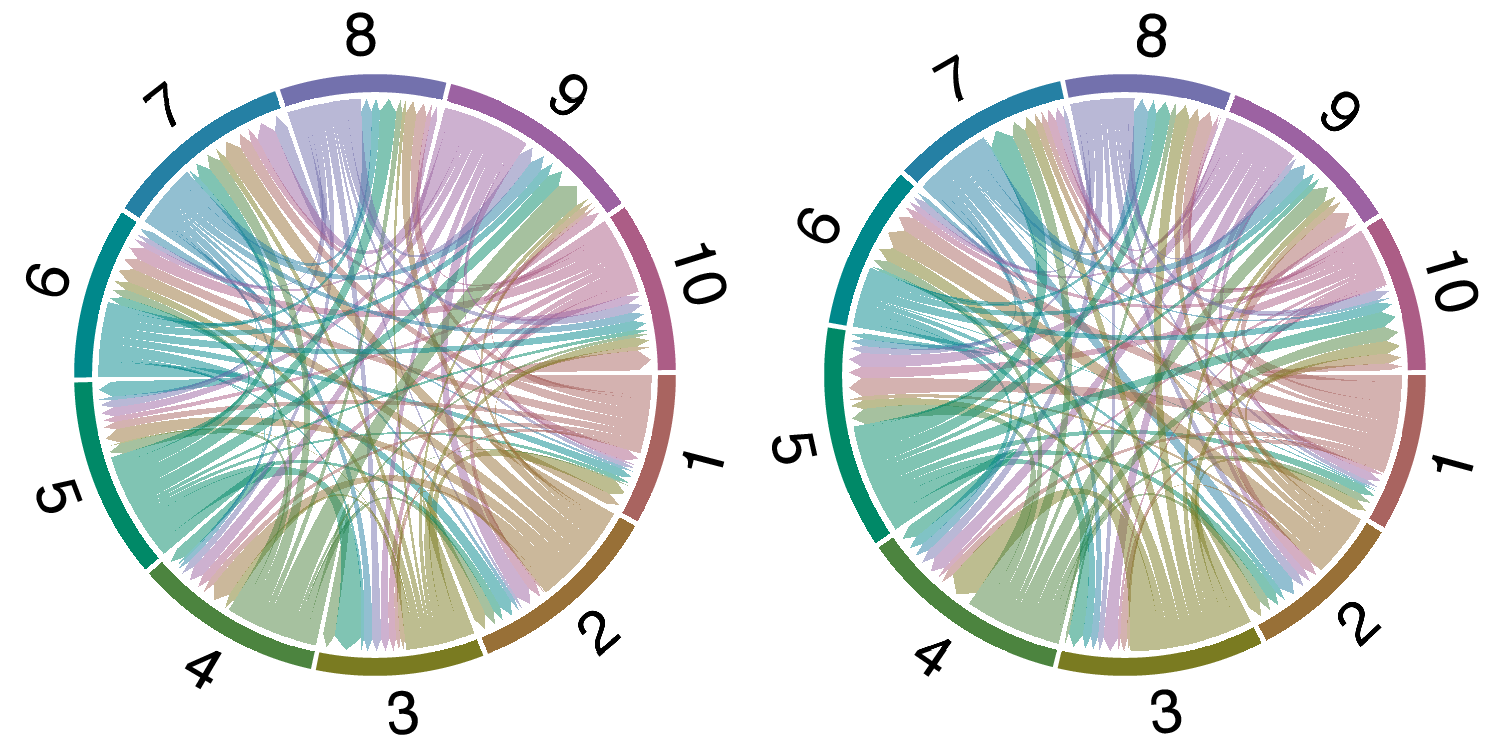}
    \end{minipage}

    \vspace{0.5em}

    {\small (a)~Poisson.}
  \end{minipage}
  \hfill
  \begin{minipage}[t]{0.49\textwidth}
    \vspace{0pt}
    \centering

    \begin{minipage}[t][0.15\textheight][t]{\linewidth}
      \centering
      \includegraphics[
        width=0.95\linewidth,
        height=0.45\textheight,
        keepaspectratio
      ]{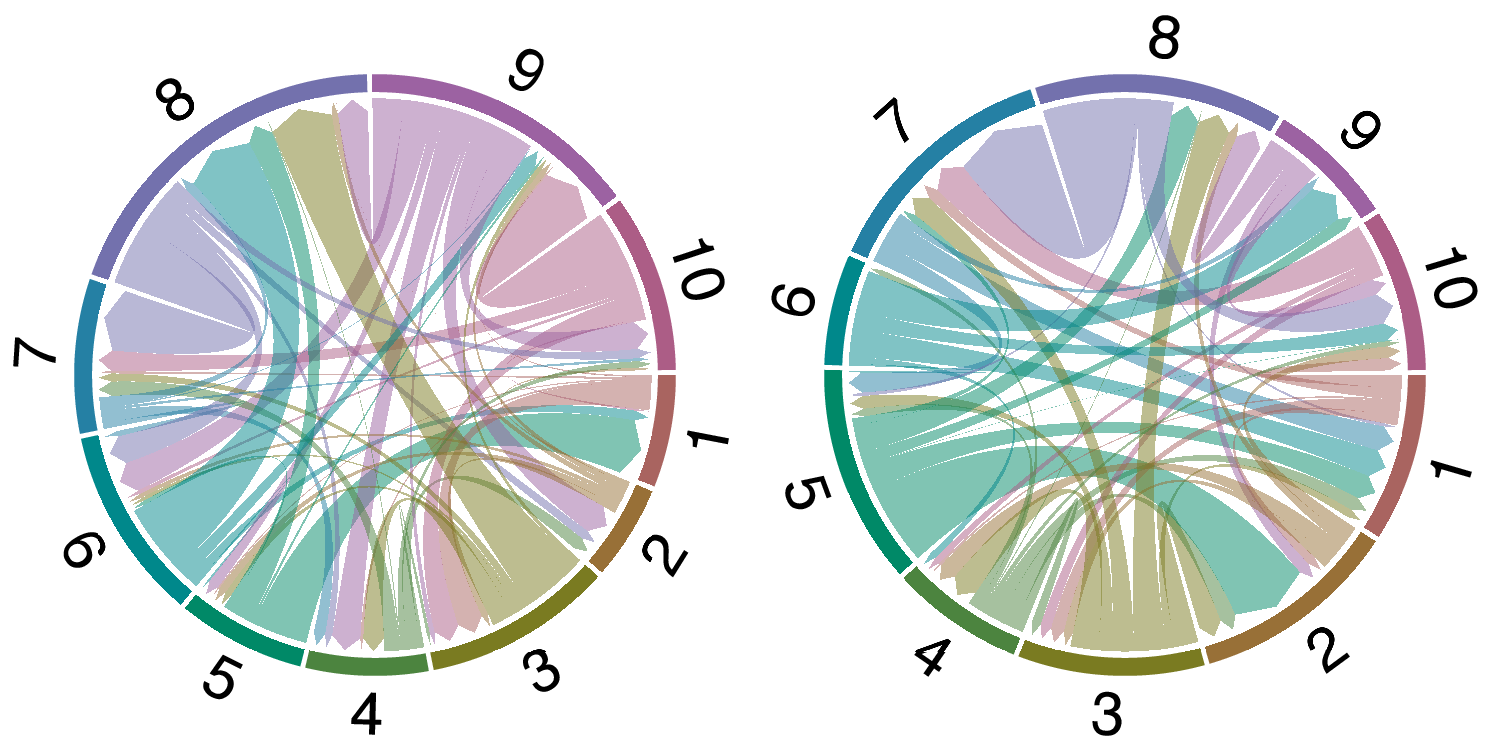}
    \end{minipage}

    \vspace{0.5em}

    {\small (b)~ZAGA.}
  \end{minipage}

  \caption{Network representation via chord plots of Poisson Model 2 
  (left) and ZAGA Model 3 (right) simulated data: connections and counts volume per edge for two layers.}
  \label{fig:b_xij}
\end{figure}

\begin{figure}[t]
  \centering

  \begin{minipage}[t]{0.49\textwidth}
    \vspace{0pt}
    \centering

    \begin{minipage}[t][0.15\textheight][t]{\linewidth}
      \centering
      \includegraphics[
        width=0.95\linewidth,
        height=0.35\textheight,
        keepaspectratio
      ]{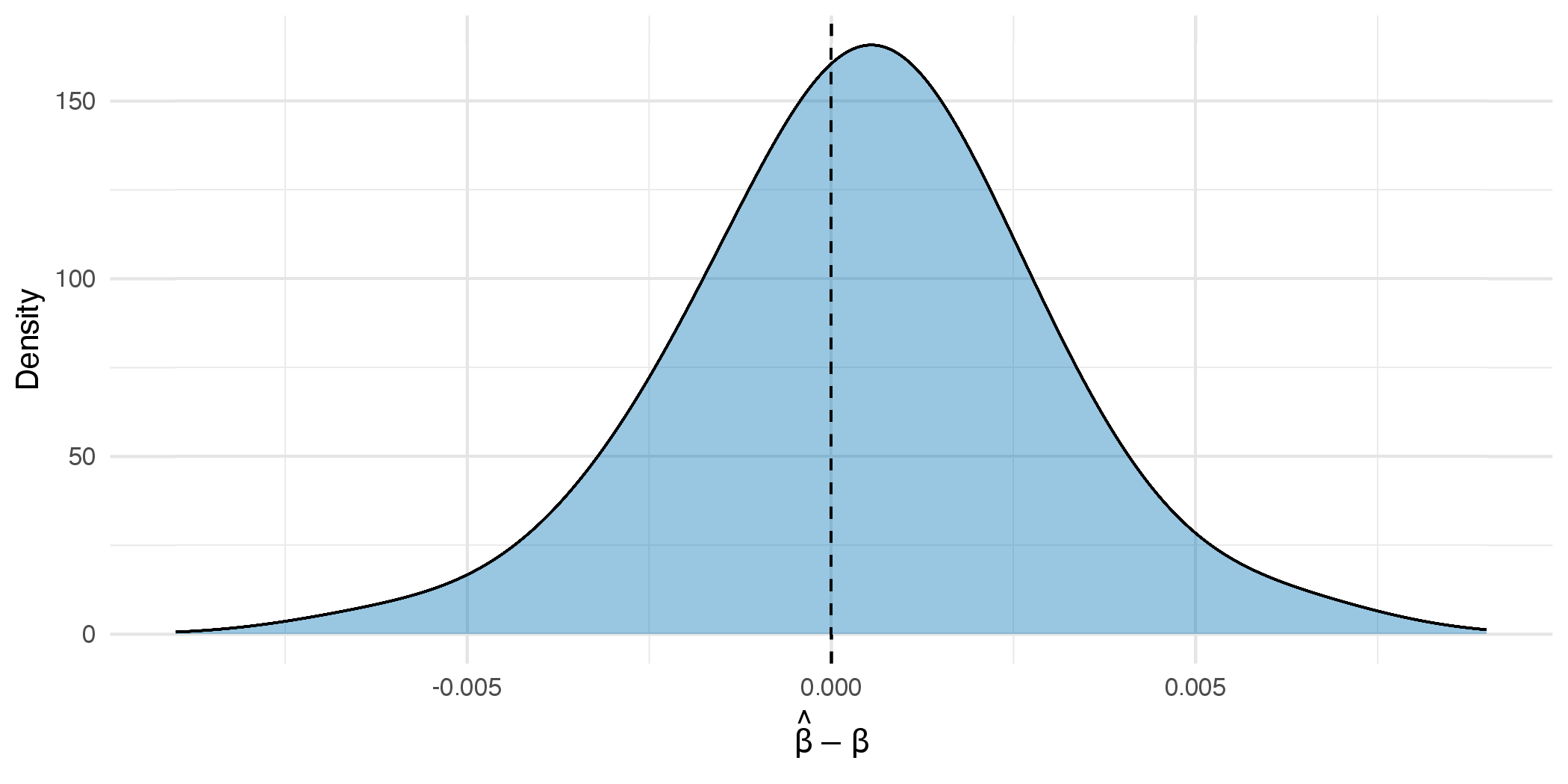}
    \end{minipage}

    \vspace{0.5em}

    {\small \begin{flushleft}
      (a)~Density plot of ($\hat{\beta} - \beta$) for Poisson Model~3.
    \end{flushleft}}
  \end{minipage}
  \hfill
  \begin{minipage}[t]{0.49\textwidth}
    \vspace{0pt}
    \centering

    \begin{minipage}[t][0.15\textheight][t]{\linewidth}
      \centering
      \includegraphics[
        width=0.95\linewidth,
        height=0.45\textheight,
        keepaspectratio
      ]{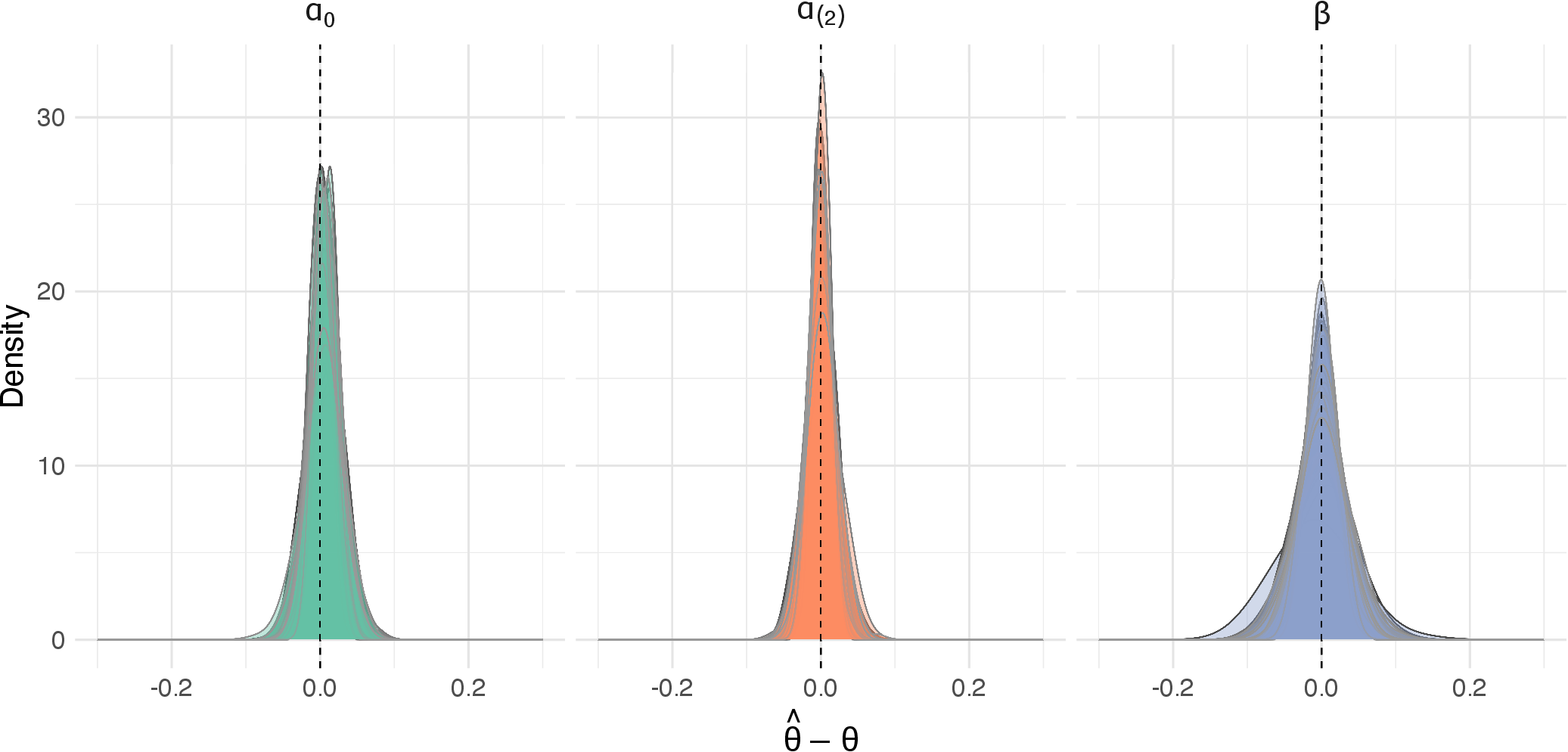}
    \end{minipage}

    \vspace{0.5em}

    {\small  \begin{flushleft}
      (b)~Densities of (estimates - true values) for Poisson Model~4. Densities for all edges are superimposed.
    \end{flushleft}}
  \end{minipage}

  \caption{Poisson Models 3 and 4, 
  with $L=1, K = 500, m = 90$, and 100 MC runs.}
  \label{Tfig:bij_xij}
\end{figure}

\begin{figure}[h!]
  \centering
  \begin{minipage}{0.45\textwidth}
    \centering
    \includegraphics[width=1.1\textwidth, height=0.23\textheight]{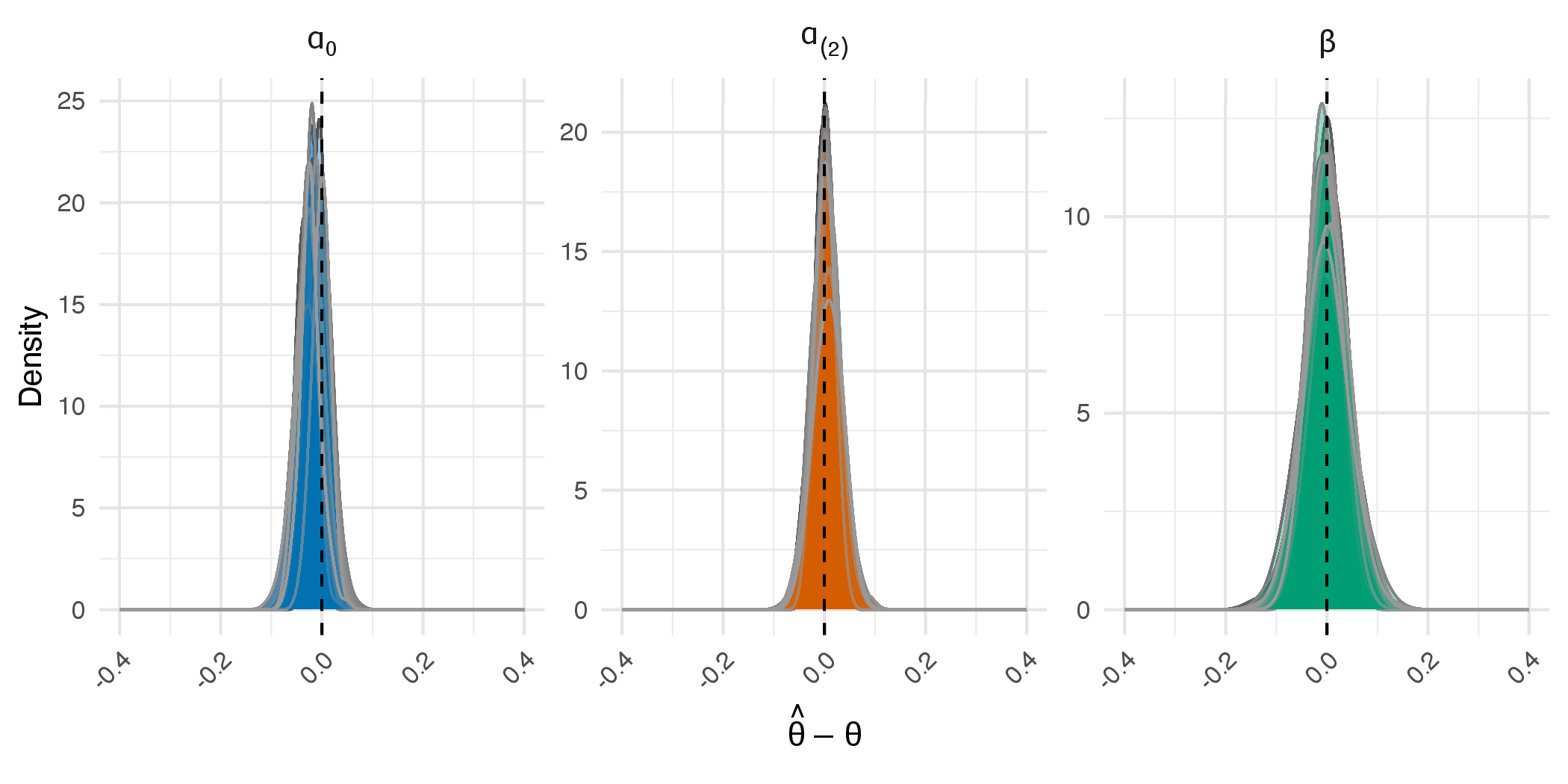}
    \label{Tfig:bij-xij-wij-densities-o}
  \end{minipage}
  \quad
  \begin{minipage}{0.45\textwidth}
    \centering
    \includegraphics[width=1.1\textwidth, height=0.23\textheight]{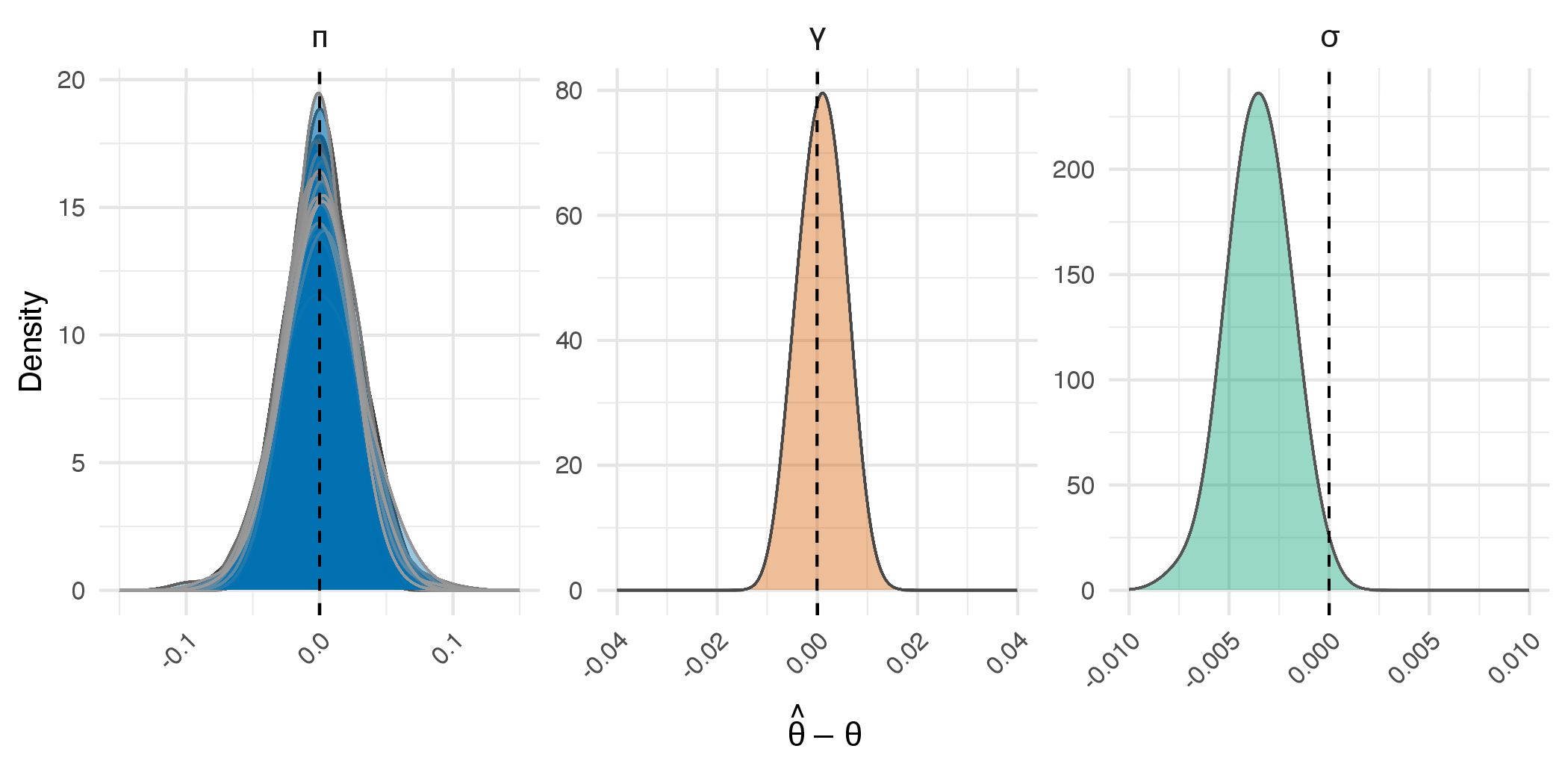}
    \label{Tfig:bij-xij-wij-rmse}
  \end{minipage}
 \caption{Densities of (estimates - true values) for ZAGA Model 5, with $L_x=L_w = q = 1, K = 500, m = 90$, and $100$ MC runs.  Densities for all edges are superimposed. }
 \label{Tfig:bij-xij-wij-o} 
\end{figure}

\subsection{Comparison between GLAMLE  and PPMLE approach}
\label{Sec: GLPP_comp}

To evaluate our  approach under conditions that mirror the stylized
facts of trades data, we simulate from a ZAGA model with latent variables,
producing flows that simultaneously exhibit excess zeros and right-skewed
positive values. 

We compare the resulting ZAGA-GLAMLE fit against the
routinely applied PPMLE fit, as
implemented in the \texttt{gravity} package: 
 $ \Yijk \sim \mathrm{Poi}(\mijk),$ and
 $ \ln \mijk
  =
  \beta_{0}
  +
  \bb^\top \bxx_{ij}^{(k)}
  +
  \bg^\top \bw_{ij}
  +
  \nu_{i}  
  +
  \nu_{j}.$
This benchmark includes observed covariates and exporter/importer fixed
effects ($\nu_{i}$ and $\nu_j$, respectively) for fairer comparison with the latent-variable GLAMLE, but neither a latent factor structure for each edge, nor a separate zero-inflation
component. 

For each MC run $s = 1, \ldots, S=100$, we fit both models and
extract their respective fitted quantities. Under PPMLE approach, the fitted mean is 
$\hat{\mu}^{\mathrm{PPMLE},s}_{ij,k} = \hat{\mathbb{E}}_{\mathrm{PPMLE}}
(\Yijk \mid \bxx_{ij}^{(k)}, \bw_{ij}, \nu_i, \nu_j)$.
Under ZAGA-GLAMLE, the fit yields a mean for the Gamma (modelling positive trades)
$\hat{\mu}^{\mathrm{GLAMLE},s}_{ij,k}$ and a zero-inflation probability
$\hat{\pi}^{\mathrm{GLAMLE},s}_{ij}$, so that the fitted mean, conditional on
the latent and observable variables, is
$\hat{\mathbb{E}}_{\mathrm{GLAMLE}}(\Yijk \mid \bxx_{ij}^{(k)}, \bw_{ij}, \hat{\bz}_{(2),s}^{(k)})
= (1 - \hat{\pi}^{\mathrm{GLAMLE},s}_{ij})\,\hat{\mu}^{\mathrm{GLAMLE},s}_{ij,k}.$

The corresponding true conditional mean 
$(1 - \pi^0_{ij})\,\mu^0_{ij,k}$ is used as benchmark---with obvious notation,
 $\mu^0_{ij,k}$ and $\pi^0_{ij}$ denote the true Gamma mean and
zero-inflation probability from the data-generating mechanism. We assess the two models along three complementary dimensions: the excess of zeros (dimension 1), the modelling of positive trades (dimension 2), and the overall fitting of volume of trades (dimension 3). Below we comment on the outcomes.

\textit{Dimension 1.} For the ZAGA-GLAMLE approach, in each $s$-th MC run, the fitted zero probability is the
zero-inflation component:
$\hat{p}^{\mathrm{GLAMLE},s}_{0,ij,k} = \hat{\pi}^{\mathrm{GLAMLE},s}_{ij}$.
For PPMLE approach, which has no such component, we use the Poisson-implied zero
probability,
$\hat{p}^{\mathrm{PPML},s}_{0,ij,k} = \exp\{-\hat{\mu}^{\mathrm{PPMLE},s}_{ij,k}\}$.
To provide a visual representation of the performance yielded by the two approaches, 
the fitted probability of zeros is plotted against the population counterpart: a well-calibrated model should track the
$45^\circ$ line. As \autoref{fig:zaga-vs-ppml-mc}(a) shows for one layer ($k=1$), our ZAGA-GLAMLE approach
reproduces the zero mechanism closely (all points line up), 
while PPMLE approach fails to identify it (many points are off from the $45^\circ$ line). Other layers yield similar plots. 

\textit{Dimension 2.} Restricting to strictly positive
observations, 
we compare fitted positive means with
observed positive responses. For the ZAGA-GLAMLE approach the relevant quantity is
$\widehat{\mu}^{\mathrm{GLAMLE},s}_{ij,k}$, since the model separates the
zero probability from the magnitude of positive trade. For the PPMLE approach, we have 
$\hat{\mathbb{E}}_{\mathrm{PPMLE}}(\Yijk \mid \Yijk > 0,
\widehat{\mu}^{\mathrm{PPMLE}}_{ij,k})
= \widehat{\mu}^{\mathrm{PPMLE}}_{ij,k} /
(1 - \exp(-\widehat{\mu}^{\mathrm{PPMLE}}_{ij,k})).$
Using the same representation as for Dimension 1, we select one layer (still, $k=1$): \autoref{fig:zaga-vs-ppml-mc}(b) shows that
ZAGA-GLAMLE tracks the positive part of the distribution almost exactly,
whereas the PPMLE approach underestimates the mean of positive trades. 

 \textit{Dimension 3.} Using all observations,
including zeros, we compare the fitted conditional mean with the observed
response. Specifically,  in each $k$-th layer we compute the mean absolute error (MAE). So, for PPMLE, we have
$
\mathrm{MAE}^{\mathrm{PPMLE}}_{k}
=
\sum_{(i,j)\in\mathcal{I}_{k}}
\left|
\Yijk
-
\widehat{\mu}^{\mathrm{PPMLE}}_{ij,k}
\right|/{n_k}$, where $n_k$ 
is the cardinality of the set of dyads retained for each layer $k$. 
We proceed similarly for the ZAGA-GLAMLE.

\autoref{fig:zaga-vs-ppml-mc}(c) confirms the pattern observed for the $k$-th layer with $k=1$: ZAGA-GLAMLE yields an MAE which is smaller
than the one of PPMLE, whose fit is not accurate, especially
at higher mean values---this highlights the inability of the routinely applied
PPMLE approach to capture the observed skewness of trade data.

\begin{figure}[t]
  \centering

  \begin{minipage}[t]{0.31\textwidth}
    \vspace{0pt}
    \centering

    \begin{minipage}[t][0.23\textheight][t]{\linewidth}
      \centering
      \includegraphics[
        width=\linewidth,
        height=0.23\textheight,
        keepaspectratio
      ]{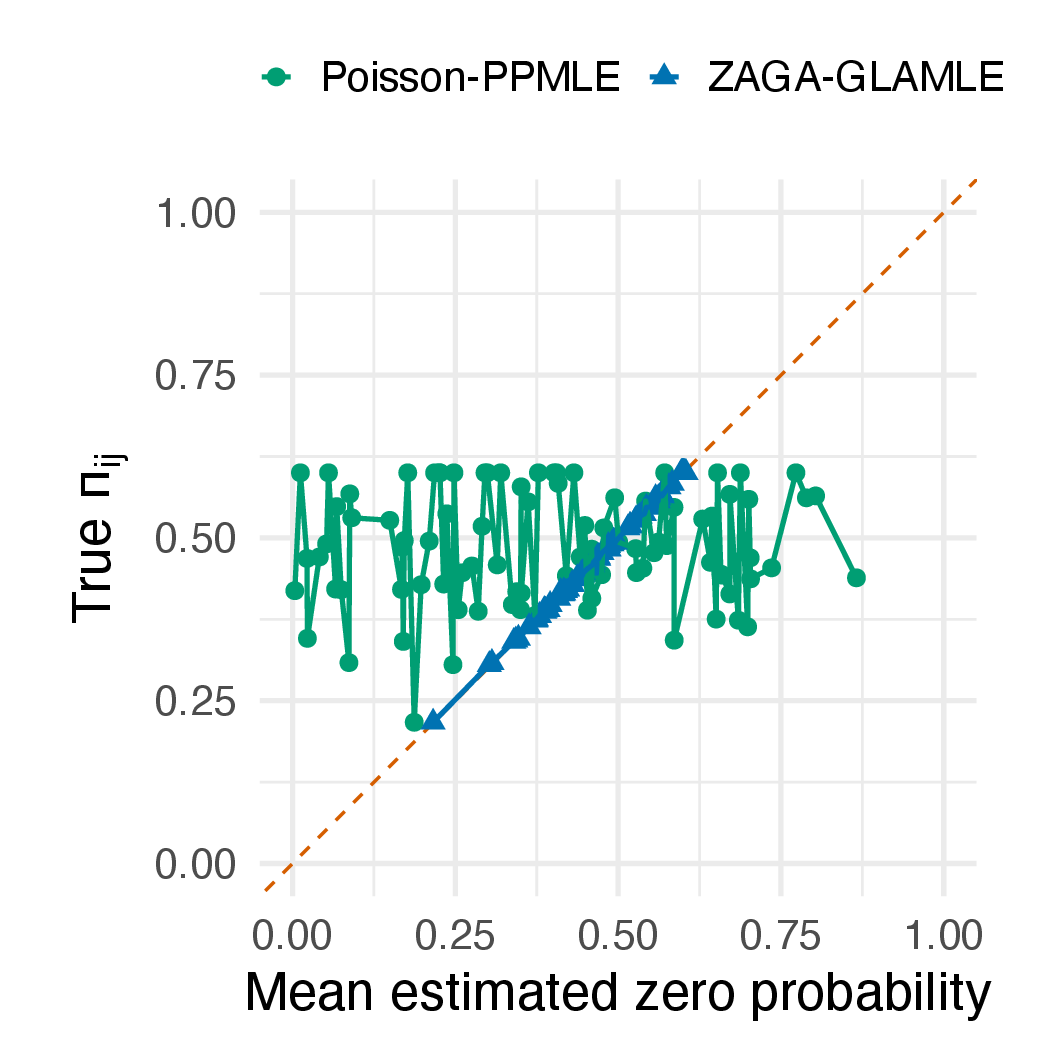}
    \end{minipage}

    \vspace{0.5em}

    {\small (a)~Dimension 1.}
  \end{minipage}
  \hfill
  \begin{minipage}[t]{0.31\textwidth}
    \vspace{0pt}
    \centering

    \begin{minipage}[t][0.23\textheight][t]{\linewidth}
      \centering
      \includegraphics[
        width=\linewidth,
        height=0.23\textheight,
        keepaspectratio
      ]{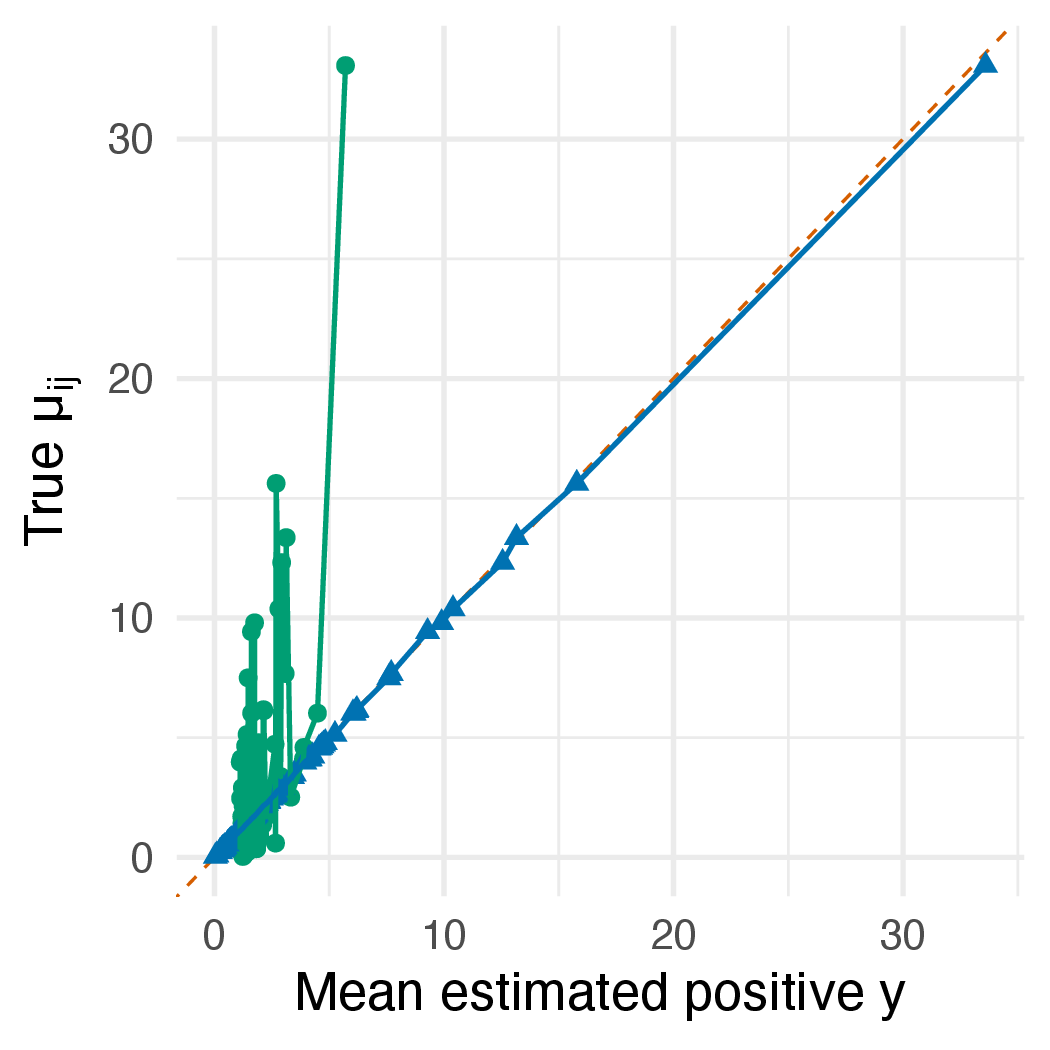}
    \end{minipage}

    \vspace{0.5em}

    {\small (b)~Dimension 2.}
  \end{minipage}
  \hfill
  \begin{minipage}[t]{0.31\textwidth}
    \vspace{0pt}
    \centering

    \begin{minipage}[t][0.23\textheight][t]{\linewidth}
      \centering
      \includegraphics[
        width=\linewidth,
        height=0.23\textheight,
        keepaspectratio
      ]{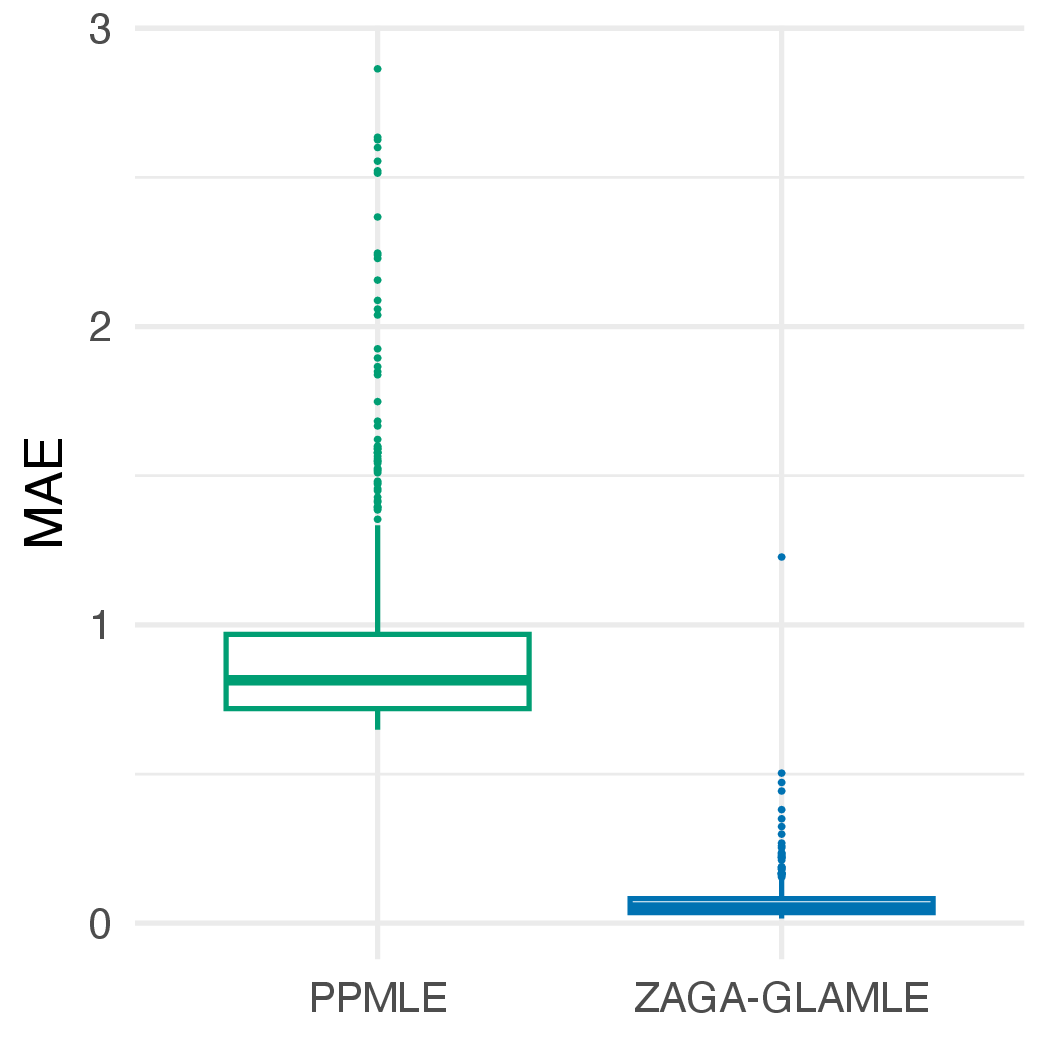}
    \end{minipage}

    \vspace{0.5em}

    {\small (c)~Dimension 3.}
  \end{minipage}

  \caption{Monte Carlo results comparing ZAGA-GLAMLE and PPMLE approaches for all dimensions: $L_x=L_w=q=1$, $K=500$, $m=90$, with $100$ repetitions. Panels (a) and (b) are for layer $k=1$, whereas panel (c) is across all layers.}
  \label{fig:zaga-vs-ppml-mc}
\end{figure}

\section{WTO data analysis} \label{Sec: RealData} 
\textit{Overview.} We apply the ZAGA-GLAMLE approach to
the WTO data introduced in Section~\ref{Sec:Motiv} with the purpose to illustrate the key features of the proposed methodology. In particular, we provide empirical evidence of the flexibility and ability of the ZAGA-GLAMLE framework, which can  jointly model the observed zero inflation, positive trade magnitudes, and latent heterogeneity, thereby capturing key features of the data that are not accommodated by standard approaches such as PPMLE with fixed effects. The empirical patterns that emerge are consistent with our theoretical findings and with the MC evidence, where distinct components of the model contribute separate sources of variation. 

\textit{Data description and modelling.}  For comparison with the gravity model
applied in empirical economics (see e.g. \cite{Yotov2016}), we augmented the initial WTO dataset with additional covariates from the \texttt{cepiigeodist} package in \texttt{R} and from the WTO Data Portal. We centred and scaled the continuous variables. We mention the layer independent variables \texttt{dist} (distance in Km between the most populated cities of two countries), \texttt{\texttt{comlang\_ethno}} (whether the two countries have at least $9\%$ of their population speaking the same language), \texttt{colony} (whether the exporter country was ever a colony of the importer country), \texttt{contig} (whether the two countries share a border) and the layer dependent variable \texttt{best} (bilateral best applied simple average tariff for 2022), that we will discuss more deeply below, and refer to Appendices~\ref{tradesdata} and \ref{tradesdata-estim} of the \textcolor{blue}{SM} for additional details.

For each directed edge $i \to j$ and product layer $k$, the ZAGA model is
characterized by two quantities: $\pij$, the edge-specific probability of an
excess zero, and $\mu_{ij}^{(k)} = \exp(\eta_{ij}^{(k)})$, the conditional mean of positive trade. We use covariates to model both quantities. 
We considered $q=1$ and $q=2$ latent variables in the specification of $\eta_{ij}^{(k)}$ in Model~6: as both
yielded very similar results, we retain the more parsimonious $q=1$
specification, with $q=2$ results available, for completeness, in Appendix~\ref{tradesdata-estim} of 
the \textcolor{blue}{SM}. 

All the $m=1980$ directed country pairs are used for the estimation of  $\pij$, whereas only 1416 pairs (after removal of $47$ pairs that exhibit no trade in any of the $K=72$ product layers, and $517$ pairs trading fewer than $30$ goods) contribute to the estimations of $\mu_{ij}^{(k)}$.

\textit{Starting values.} 
A non-trivial implementation aspect is  the selection of starting values. 
The detailed procedure is available in Appendix~\ref{Sec-implem} of the \textcolor{blue}{SM}, here we briefly describe its two key steps. First, for each edge $i\to j$, we fit a ZAGA model using the \texttt{gamlss} package in \texttt{R} (no latent variables): $\ln \mu_{ij} = \eta_{ij} = s_{ij} + \bb_{ij}^{\top}\bxx_{ij}$. This yields initial estimates $\hat{\eta}_{ij}$, $\hat{s}_{ij}$, $\hat{\bb}_{ij}$, and $\hat{\sigma}$. The edge-specific intercepts $\hat{s}_{ij}$ are then decomposed into a component orthogonal to the design data matrix $\bW$ and a component lying in its span; see Section \ref{Sec: uniq}. Using the projection matrices $\mathbf{P}_w = \bW(\bW^{\top}\bW)^{-1}\bW^{\top}$ (onto the column space) and $\mathbf{M}_w = \mathbf{I}_m - \mathbf{P}_w$ (onto the orthogonal complement), we obtain the unique decomposition $\mathbf{s} = \baa_0 + \bW\boldsymbol{\gamma}$ with $\baa_0 = \mathbf{M}_w \mathbf{s}$ and $\bW\boldsymbol{\gamma} = \mathbf{P}_w \mathbf{s}$. The global covariate coefficients are then $\hat{\boldsymbol{\gamma}} = (\bW^{\top}\bW)^{-1}\bW^{\top}\hat{\mathbf{s}}$, which can be efficiently computed via a QR decomposition as $\hat{\boldsymbol{\gamma}} = \mathbf{R}_1^{-1}\mathbf{Q}_1^{\top}\hat{\mathbf{s}}$. Finally, starting values for the factor loadings $\baa$ and the latent variables $\mathbf{z}$ are obtained by performing factor analysis on the Dunn--Smyth residuals from this initial fit. 
The resulting estimates are then used to set up the starting values for the likelihood optimization that defines the GLAMLE.

\textit{Estimation results.} Table~\ref{tab:wto-glamle-estimates-colony-contig} reports some of the global parameter estimates with sandwich standard errors for both parts of the model. These regression coefficients act on the positive Gamma mean $\hat{\mu}_{ij}^k$, capturing the effect of covariates on the intensity of positive trade flows. However, the fitted mean of the observed response depends jointly on $\hat{\mu}_{ij}^{(k)}$ and the estimated zero probability $\hat{\pi}_{ij}$. A dyad may therefore have a large fitted positive-trade mean but a small unconditional mean when the estimated zero probability is high. Thus, the ZAGA-GLAMLE approach allows us to disentangle, for each covariate, its contribution to the probability of trade occurrence and the magnitude of trade given occurrence.

We see that \texttt{best} increases the probability of zero trade (see \autoref{fig:wto-best-dist-pi}(a) for a visual representation), but its coefficient is only significant on the zero part\footnote{This covariate is of particular interest to WTO statistical officers, as it highlights the role of tariff in the commodity trade network.}. Hence, it decreases the probability the trade happens, but if it does, it has no influence on the magnitude of the value traded.  
Overall, these patterns suggest that once trade is established, changes in tariffs do not translate immediately into reduced trade volumes: the intensity of positive trade flows appears governed by historical, institutional, and long-term (e.g. related to production plans) factors. 

\begin{footnotesize}

\begin{table}[t]
    \centering
    \caption{GLAMLE global parameter estimates for the zero and positive parts for the WTO trade data, 2022.  Continuous covariates are centred and scaled, their estimates are transformed back and presented on the original scale. All values reported are on the $10^3$ scale except $\sigma$. We only present the estimates of global parameters.}
    \label{tab:wto-glamle-estimates-colony-contig}
    \begin{tabular}{l|rr|rr}
     \toprule 
  & \multicolumn{2}{c|}{Zero part} & \multicolumn{2}{c}{Positive part} \\
 \midrule 
  Parameter & Estimate & SE  & Estimate & SE \\
 \midrule 
     Intercept &-2553.29&106.703 &  &   \\
     \texttt{best} &7.414&0.605 & 2.099 & 2.751\\
     \texttt{dist} &0.126 &0.002 & 1.435 & 0.018\\
     \texttt{comlang\_ethno} &-291.789 &27.015 & 4815.905 & 98.851\\
     \texttt{colony} &-594.58 &76.447 & 4138.528 & 122.826\\
     \texttt{contig} &-1751.404 &68.006 & 15729.506 & 187.48\\
     $\sigma$ & & & 1.854 & 0.032 \\
    \bottomrule
    \end{tabular}
\end{table}

\end{footnotesize}

\begin{figure}[t]
  \centering

  \begin{minipage}[t]{0.49\textwidth}
    \vspace{0pt}
    \centering

    \begin{minipage}[t][0.15\textheight][t]{\linewidth}
      \centering
      \includegraphics[
        width=0.95\linewidth,
        height=0.45\textheight,
        keepaspectratio
      ]{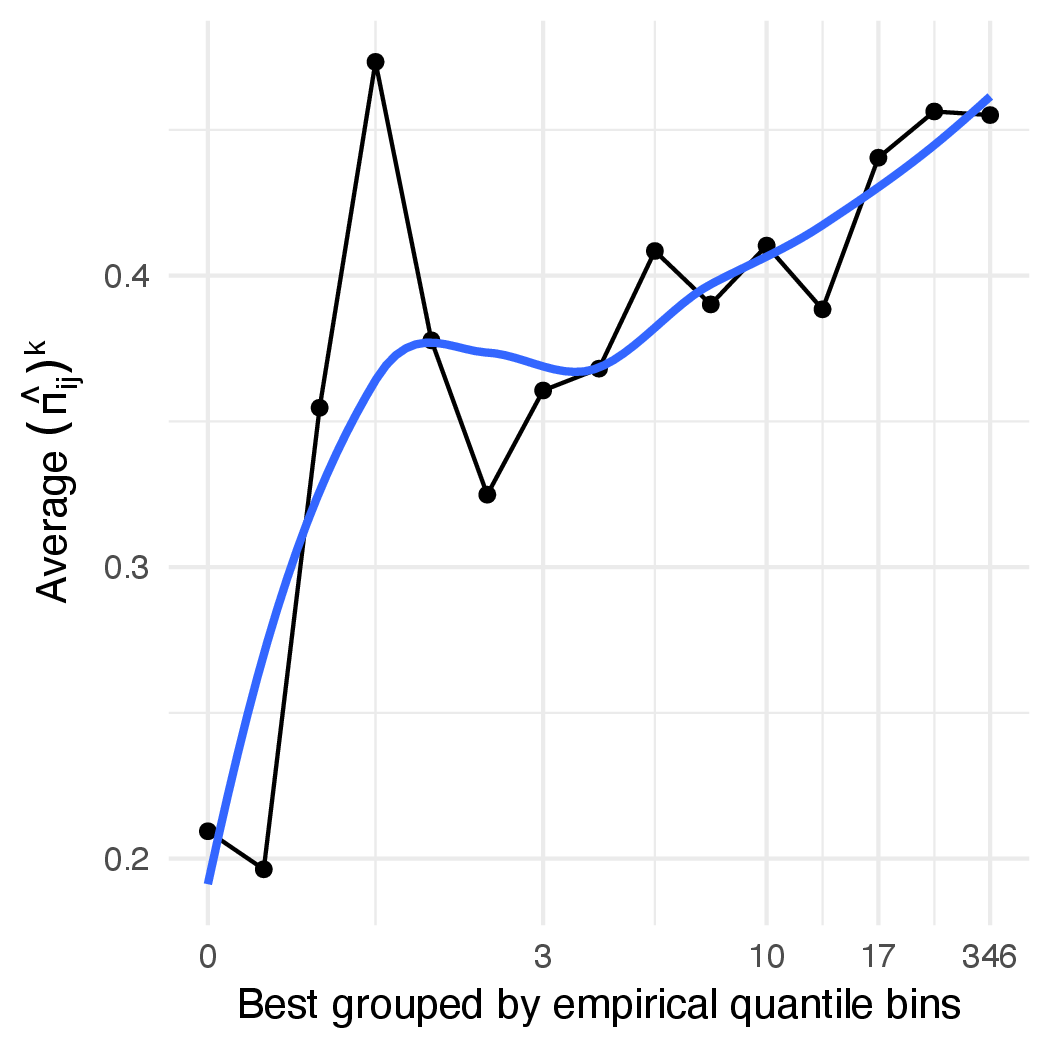}
    \end{minipage}

    \vspace{0.5em}

    {\small \begin{flushleft}
      (a)~Relationship of \texttt{best} versus estimated zero probability. Continuous curve is a locally weighted regression line (loess smoother).
    \end{flushleft}}
  \end{minipage}
  \hfill
  \begin{minipage}[t]{0.49\textwidth}
    \vspace{0pt}
    \centering

    \begin{minipage}[t][0.15\textheight][t]{\linewidth}
      \centering
      \includegraphics[
        width=0.95\linewidth,
        height=0.45\textheight,
        keepaspectratio
      ]{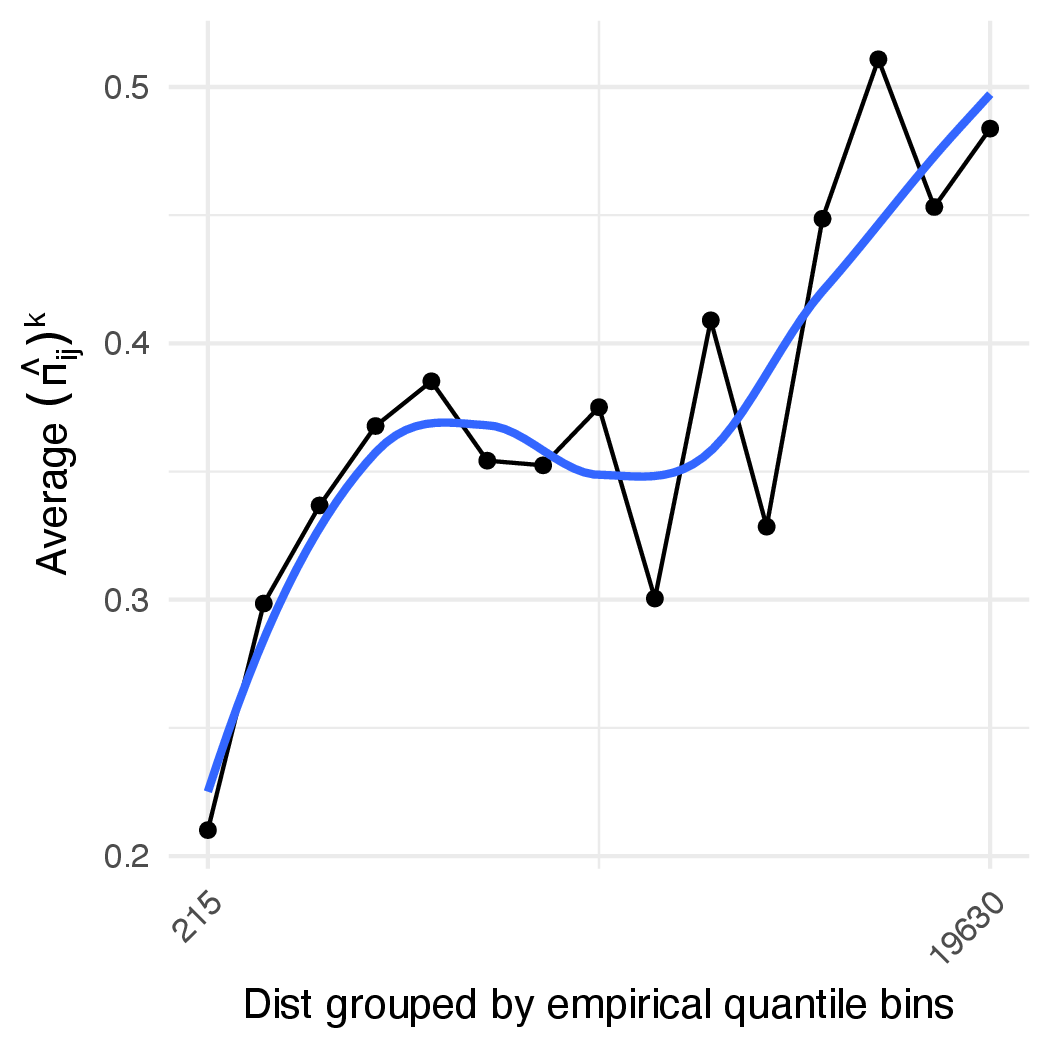}
    \end{minipage}

    \vspace{0.5em}

    {\small \begin{flushleft}
      (b)~Relationship of \texttt{dist} versus estimated zero probability. Continuous curve is a locally weighted regression line (loess smoother).
    \end{flushleft}}
  \end{minipage}
  \vspace{0.2em}
  \caption{Relationship of \texttt{best} (left) and \texttt{dist} (right) versus estimated zero probability.}
  \label{fig:wto-best-dist-pi}
\end{figure}

The positive coefficient of \texttt{dist} should not be interpreted as a standard gravity effect; rather, it reflects the selection mechanism induced by the ZAGA specification, whereby long-distance dyads tend to exhibit fewer (see \autoref{fig:wto-best-dist-pi}(b)) but larger (see \autoref{tab:wto-glamle-estimates-colony-contig}) flows. This feature underscores the importance of jointly modelling zero inflation and the magnitude of positive trade. In addition, the inclusion of latent variables allows the model to capture unobserved heterogeneity across dyads.  The significant effect of \texttt{comlang\_ethno} 
 indicates that a shared ethnological language  increases the conditional positive-trade mean. The estimated $\sigma$ confirms substantial residual dispersion among positive trade values. 

We compare the trade volumes  fitted by our ZAGA-GLAMLE and by the PPMLE approaches. 
For the PPMLE,  we consider the model 
  $\ln \mijk
  =
  \beta_{0}
  +
  \beta_{1} \mathrm{best}_{ij}^{(k)} 
  +
  \beta_{2}\ln \mathrm{dist}_{ij} +
  \bg^\top \boldsymbol{\omega}_{ij}
  +
  \nu_{i}
  +
  \nu_{j},$
with $\boldsymbol{\omega}$ containing the  covariates  $\texttt{comlang\_ethno}$, $\texttt{colony}$, and $\texttt{contig}$, whereas $ \nu_{i}, \nu_{j}$ are sender and receiver fixed effects.

We compute the fitted mean for the PPMLE approach and the fitted Gamma mean for the GLAMLE approach,  
as well as the fitted zero-inflation probability. 
In Figure \ref{fig:wto-glamle-vs-ppml-cottoncars}, we already gave a preview of the superior performance yielded by our method. Here, we complete the picture reporting that the PPMLE method entails a median MAE  which is about 2 times higher than the one of the ZAGA-GLAMLE.  

Finally, we report that our routine (most of it in C++ via TMB) for the ZAGA-GLAMLE takes on average about 0.52 seconds per parameter, with about 0.03 seconds for the estimate and the rest for the computation of the Hessian matrix in \texttt{R}.\footnote{This computation time could be reduced by coding the Hessian in C++.}. By comparison, the PPMLE takes 0.06 seconds for each parameter estimate and its standard error. 

\section{Conclusion}
We propose a novel Laplace-based inference framework for multiview network data and establish its asymptotic theory. Our methodology accommodates nonlinear latent variable models beyond the exponential family and allows for the simultaneous inclusion of local and global covariates. Extensive Monte Carlo experiments and the WTO trade application demonstrate the practical effectiveness of the approach.

The results presented here  open several promising research directions. A first important challenge concerns the selection of the latent dimension $q$, which remains unresolved. Preliminary investigations based on SCAD-penalized Laplace approximate likelihood have yielded encouraging results and are currently under further study. Another promising direction is the development of robust inference procedures, including both estimation and testing, based on bounded Laplace-approximated scores \citep{HR94,CR01}. Such ideas may also prove useful in the emerging literature on tensor factor models for dynamic networks \citep{chen2022factor,chang2023modelling,barigozzi2025factor}, where the use of Laplace approximation is unexplored.

From an economic perspective, the foundations of the proposed ZAGA specification for trade flows on networks also deserve further investigation. In particular, an open question is whether ZAGA-distributed equilibrium flows can be rationalized through an appropriate random utility framework and what is the related E-OT formulation. Such a framework would be analogous in spirit to \citet{GS22} and \citet{GH26}, while relying on alternative notions of entropy, such as Einstein--Bose, Fermi--Dirac, or Tsallis entropy. Establishing this connection would provide an economic micro-foundation for the model and further bridge the gap between modern network methods and economic equilibrium theory.

\section{Acknowledgments}
We thank Edvinas Drevinskas, Statistical Officer at WTO in Geneva, for providing  the data and for his explanation about variables (especially the need to study \texttt{best}) and data collection process at WTO. Davide La Vecchia thanks the Swiss National Science Foundation grant number CR00-5L-239816 for the financial support.

\bibliographystyle{abbrvnat}
\bibliography{reference}

@article{WGRD26,
	author = {Whiteley, Nick and Gray, Annie and Rubin-Delanchy, Patrick},
	journal = {Journal of the Royal Statistical Society Series B: Statistical Methodology},
	number = {2},
	pages = {353--385},
	publisher = {Oxford University Press UK},
	title = {Statistical exploration of the manifold hypothesis},
	volume = {88},
	year = {2026}}

@article{CR01,
	author = {Cantoni, Eva and Ronchetti, Elvezio},
	journal = {Journal of the American Statistical Association},
	number = {455},
	pages = {1022--1030},
	publisher = {Taylor \& Francis},
	title = {Robust inference for generalized linear models},
	volume = {96},
	year = {2001}}

@book{S10,
	author = {Small, Christopher G},
	publisher = {Chapman and Hall/CRC},
	title = {Expansions and asymptotics for statistics},
	year = {2010}}

@article{GS22,
	author = {Galichon, Alfred and Salani{\'e}, Bernard},
	journal = {The Review of Economic Studies},
	number = {5},
	pages = {2600--2629},
	publisher = {Oxford University Press},
	title = {Cupid's invisible hand: Social surplus and identification in matching models},
	volume = {89},
	year = {2022}}

@book{R_etal_19,
	author = {Rigby, Robert A and Stasinopoulos, Mikis D and Heller, Gillian Z and De Bastiani, Fernanda},
	publisher = {Chapman and Hall/CRC},
	title = {Distributions for modeling location, scale, and shape: Using GAMLSS in R},
	year = {2019}}

@article{rue2009approximate,
	author = {Rue, H{\aa}vard and Martino, Sara and Chopin, Nicolas},
	journal = {Journal of the Royal Statistical Society Series B: Statistical Methodology},
	number = {2},
	pages = {319--392},
	publisher = {Oxford University Press},
	title = {Approximate Bayesian inference for latent Gaussian models by using integrated nested Laplace approximations},
	volume = {71},
	year = {2009}}

@article{GH26,
	author = {Galichon, Alfred and Henry, Marc},
	journal = {arXiv preprint arXiv:2604.04227},
	title = {An econometrician's guide to optimal transport},
	year = {2026}}

@article{barigozzi2025factor,
	author = {Barigozzi, Matteo and Cavaliere, Giuseppe and Moramarco, Graziano},
	journal = {Journal of Business \& Economic Statistics},
	number = {4},
	pages = {1105--1118},
	publisher = {Taylor \& Francis},
	title = {Factor network autoregressions},
	volume = {43},
	year = {2025}}

@article{Gollini2016246,
	author = {Gollini, I. and Murphy, T.B.},
	journal = {Journal of Computational and Graphical Statistics},
	number = {1},
	pages = {246-265},
	title = {Joint Modeling of Multiple Network Views},
	volume = {25},
	year = {2016}}

@article{Salter-Townshend20171217,
	author = {Salter-Townshend, M. and McCormick, T.H.},
	journal = {Annals of Applied Statistics},
	number = {3},
	pages = {1217-1244},
	title = {Latent space models for multiview network data},
	volume = {11},
	year = {2017}}

@article{silva2006log,
	author = {Santos Silva, J.M.C. and Tenreyro, Silvana},
	journal = {The Review of Economics and statistics},
	pages = {641--658},
	publisher = {JSTOR},
	title = {The log of gravity},
	year = {2006}}

@book{Yotov2016,
	author = {Yotov, V. and Piermartini, R. and Monteiro, J.A. and Larch, M.},
	publisher = {WTO iLibrary},
	title = {An advanced guide to trade policy analysis: the structural gravity model},
	year = {2016}}

@article{AvW03,
	author = {Anderson, James E and Van Wincoop, Eric},
	journal = {American economic review},
	number = {1},
	pages = {170--192},
	publisher = {American Economic Association},
	title = {Gravity with gravitas: {A} solution to the border puzzle},
	volume = {93},
	year = {2003}}

@article{chang2023modelling,
	author = {Chang, Jinyuan and He, Jing and Yang, Lin and Yao, Qiwei},
	journal = {Journal of the Royal Statistical Society Series B: Statistical Methodology},
	pages = {127--148},
	title = {Modelling matrix time series via a tensor CP-decomposition},
	volume = {85},
	year = {2023}}

@article{chen2021nonlinear,
	author = {Chen, Mingli and Fern{\'a}ndez-Val, Iv{\'a}n and Weidner, Martin},
	journal = {Journal of Econometrics},
	number = {2},
	pages = {296--324},
	publisher = {Elsevier},
	title = {Nonlinear factor models for network and panel data},
	volume = {220},
	year = {2021}}

@article{chen2022factor,
	author = {Chen, Rong and Yang, Dan and Zhang, Cun-Hui},
	journal = {Journal of the American Statistical Association},
	pages = {94--116},
	title = {Factor models for high-dimensional tensor time series},
	volume = {117},
	year = {2022}}

@book{H81,
	author = {Huber, P.J.},
	publisher = {Wiley, New York},
	series = {(2nd edition, 2009, {H}uber, P.J. and {R}onchetti, E.)},
	title = {Robust Statistics},
	year = {1981}}

@article{HR94,
	author = {Heritier, S. and Ronchetti, E.},
	journal = {Journal of the American Statistical Association},
	number = {427},
	pages = {897--904},
	publisher = {Taylor \& Francis},
	title = {Robust Bounded-influence Tests in General Parametric Models},
	volume = {89},
	year = {1994}}

@article{HRVF04,
	author = {Huber, P. and Ronchetti, E. and Victoria-Feser, M.-P.},
	journal = {Journal of the Royal Statistical Society: Series B (Statistical Methodology)},
	number = {4},
	pages = {893--908},
	publisher = {Wiley Online Library},
	title = {Estimation of Generalized Linear Latent Variable Models},
	volume = {66},
	year = {2004}}

@article{Hui22,
	author = {Kidzinski, Lukasz and Hui, Francis KC and Warton, David I and Hastie, Trevor J},
	journal = {Journal of machine learning research},
	number = {291},
	pages = {1--29},
	title = {Generalized Matrix Factorization: efficient algorithms for fitting generalized linear latent variable models to large data arrays},
	volume = {23},
	year = {2022}}

@article{JLVR24,
	author = {Jiang, Chaonan and La Vecchia, Davide and Rastelli, Riccardo},
	journal = {Econometrics and Statistics},
	publisher = {Elsevier},
	ISSN={2452-3062},
	title = {{GLAMLE}: inference for multiview network data in the presence of latent variables, with an application to commodities trading},
	volume = {in press},
	year = {2024}}

@article{Kr15TMB,
	author = {Kristensen, K. and Nielsen, A. and Berg, C.W. and Skaug, H. and Bell, B.},
	journal = {arXiv preprint arXiv:1509.00660},
	title = {{TMB}: automatic differentiation and {L}aplace approximation},
	year = {2015}}

@article{N17,
	author = {Niku, J. and Warton, D.I. and Hui, F.K.C. and Taskinen, S.},
	journal = {Journal of Agricultural, Biological, and Environmental Statistics},
	number = {4},
	pages = {498--522},
	title = {Generalized Linear Latent Variable Models for Multivariate Count and Biomass Data in Ecology},
	volume = {22},
	year = {2017}}

@article{O16,
	author = {Ovaskainen, O. and Abrego, N. and Halme, P. and Dunson, D.},
	journal = {Methods in Ecology and Evolution},
	number = {5},
	pages = {549--555},
	publisher = {Wiley Online Library},
	title = {Using latent variable models to identify large networks of species-to-species associations at different spatial scales},
	volume = {7},
	year = {2016}}

@article{RFR16,
	author = {Rastelli, R. and N. Friel and A.E. Raftery},
	journal = {Network Science},
	month = {dec},
	number = {4},
	pages = {407-432},
	publisher = {Cambridge University Press ({CUP})},
	title = {Properties of {L}atent {V}ariable {N}etwork Models},
	volume = {4},
	year = 2016}

@article{RVL09,
	author = {Rizopoulos, D. and Verbeke, G. and Lesaffre, E.},
	journal = {Journal of the Royal Statistical Society: Series B (Statistical Methodology)},
	number = {3},
	pages = {637--654},
	publisher = {Wiley Online Library},
	title = {Fully Exponential {L}aplace Approximations for the Joint Modelling of Survival and Longitudinal Data},
	volume = {71},
	year = {2009}}

@article{WAR13,
	author = {Ward, Michael D and Ahlquist, John S and Rozenas, Arturas},
	journal = {Network Science},
	number = {1},
	pages = {95--118},
	publisher = {Cambridge University Press},
	title = {Gravity's rainbow: A dynamic latent space model for the world trade network},
	volume = {1},
	year = {2013}}

@article{Bianconcini2014,
	author = {Bianconcini, Silvia},
	journal = {Bernoulli},
	pages = {1507--1531},
	title = {Asymptotic properties of adaptive maximum likelihood estimators in latent variable models},
	volume = {3},
	year = {2014}}

@article{BL12,
	author = {Jushan Bai and Kunpeng Li},
	journal = {The Annals of Statistics},
	number = {1},
	pages = {436 -- 465},
	publisher = {Institute of Mathematical Statistics},
	title = {{Statistical analysis of factor models of high dimension}},
	volume = {40},
	year = {2012}
	}

@book{vdW98,
	author = {Van der Vaart, Aad W},
	publisher = {Cambridge University Pressniversity press},
	title = {Asymptotic Statistics},
	volume = {3},
	year = {2000}}

@article{V96,
	author = {Vonesh, Edward F},
	journal = {Biometrika},
	number = {2},
	pages = {447--452},
	publisher = {Oxford University Press},
	title = {A Note on the Use of {L}aplace's Approximation for Nonlinear Mixed-effects Models},
	volume = {83},
	year = {1996}}

@article{SMcC95,
	author = {Shun, Zhenming and McCullagh, Peter},
	journal = {Journal of the Royal Statistical Society: Series B (Methodological)},
	number = {4},
	pages = {749--760},
	publisher = {Wiley Online Library},
	title = {Laplace Approximation of High Dimensional Integrals},
	volume = {57},
	year = {1995}}

@book{BKM11,
	author = {Bartholomew, David J and Knott, Martin and Moustaki, Irini},
	publisher = {John Wiley \& Sons},
	title = {Latent Variable Models and Factor Analysis: A Unified Approach},
	volume = {904},
	year = {2011}}

@article{ogdenAsymptoticValidityNaive2017a,
	author = {Ogden, H. E.},
	doi = {10.1093/biomet/asx002},
	issn = {0006-3444},
	journal = {Biometrika},
	month = mar,
	number = {1},
	pages = {153--164},
	title = {On Asymptotic Validity of Naive Inference with an Approximate Likelihood},
	urldate = {2025-12-08},
	volume = {104},
	year = 2017}

@article{ogdenErrorLaplaceApproximations2021,
	author = {Ogden, Helen},
	copyright = {\copyright{} 2021 The Author. Stat published by John Wiley \& Sons Ltd.},
	doi = {10.1002/sta4.380},
	issn = {2049-1573},
	journal = {Stat},
	langid = {english},
	number = {1},
	pages = {e380},
	title = {On the Error in {{Laplace}} Approximations of High-Dimensional Integrals},
	urldate = {2025-12-03},
	volume = {10},
	year = 2021}

@article{dasCentralLimitTheorem2021,
  title = {Central {{Limit Theorem}} in High Dimensions: {{The}} Optimal Bound on Dimension Growth Rate},
  shorttitle = {Central {{Limit Theorem}} in High Dimensions},
  author = {Das, Debraj and Lahiri, Soumendra},
  year = 2021,
  month = jul,
  journal = {Transactions of the American Mathematical Society},
  issn = {0002-9947, 1088-6850},
  doi = {10.1090/tran/8459},
  urldate = {2026-02-27},
  langid = {english}
}

\clearpage
\newpage

\section*{Supplementary Material for ``Flexible latent variable models on graphs: Laplace approximated inference for multiview network data"}

This file contains supplementary material to the main paper. Section~\ref{AppModels} introduces the details specifications of ZIP and ZAGA models. Section \ref{AppA}  contains technical material (analytical derivations, assumptions, lemmas, corollaries, and detailed proofs); Section \ref{App: MCsettings} contains details for the Monte Carlo simulations. Finally, Section \ref{AppSec: RealData} provides details and further results for WTO data analysis. For the taxonomy of the models considered in this document, we refer to Section~\ref{Sec: frame} of the main paper.

\section{Models definition} \label{AppModels}

In this Section we derive the expression of the Laplace approximated log-likelihood 
\begin{eqnarray} 
\tilde{\ell}_K(\bt) 
&=&
\sk 
\l(
  -\frac{1}{2}\ln
  \l[
    \det
    \l\{
      \Gamma
      \l(
        \bt,\bxx, \bw, \zhk
      \r)      
    \r\}
  \r] 
+
 m Q 
  \l( 
    \bt, \bxx, \bw,\zhk, \boldsymbol{Y}^{(k)}
  \r)
+
\frac{q}{2} \ln 2 \pi 
\r) \label{Laplace-approx-SM}
\end{eqnarray} 
for the ZIP and the ZAGA models. $\bt$ is the entire set of parameters that includes the parameters of the underlying distribution function and those implied by the model specification of the (linear) predictor $\eta_{ij}$ (see Models~1-6 in the main text).

\subsection{ZIP model}  \label{AppZIP} 

The ZIP pmf is 

\begin{equation}
  P_{\bfth}(Y_{ij} = y \vert \bz, \bxx, \bw) =  \left\{
    \begin{array}{ll}
      \pij + (1-\pij) \exp\{ - \lambda_{ij}\} &  \ \text{if} \ y=0\\
      (1-\pij) \exp\{ - \lambda_{ij}\} \frac{\lambda_{ij}^{y}}{y!}  & \ \text{if} \  y>0,
    \end{array}
  \right.  \label{ZIPpmfApp}
\end{equation}
with $\ln \lambda_{ij} =  \eta_{ij}$. For 
$\pij=0$, 
the ZIP corresponds to a standard Poisson model, whilst values 
$\pij\in (0,1)$ 
allow for modelling the excess of zeros. 

For each $k$-th layer, the corresponding marginal density is 
\begin{align*}
    \dens & = \int \exp \Biggl\{  \snv  \Biggl[  \ln\left\{ \pij^{(k)}+\left(1-\pij^{(k)}\right) \exp \left\{-\lambda_{ij}^{(k)}\right\} \right\} \ind{=} \\
    & + \ln \left[ \left(1-\pij^{(k)}\right) \exp \left\{-\lambda_{ij}^{(k)}\right\} \frac{(\lambda_{ij}^{(k)})^{\yijk}}{\yijk!} \right]   \ind{>} \Biggr] + \ln \left(  (2 \pi)^{-q / 2} \exp \left(-\frac{\bztkt \bztk}{2}\right) \right) \Biggr\} d \bztk \\
    & = \int \exp \Bigl\{  \snv  \Biggr[ \ln\left\{ \pij^{(k)}+\left(1-\pij^{(k)}\right) \exp \left\{-\lambda_{ij}^{(k)}\right\} \right\} \ind{=} + \bigl(  \ln \left(1-\pij^{(k)}\right) -\exp (\eijk) + \yijk \ln (\eijk) \\
    & - \ln \left(  \yijk! \right)  \bigr) \ind{>} \Biggl] + \ln \left(  (2 \pi)^{-q / 2} \exp \left(-\frac{\bztkt \bztk}{2}\right) \right) \Bigr\} d \bztk,
\end{align*}
which is equal to $\int \exp \left\{m \qf \right\} d \bztk$
with 
\begin{equation*}
\begin{aligned}
m
\qf
&=  \snv \Bigl\{  \ind{=} \ln\left( \pij^{(k)} + \left(1-\pij^{(k)} \exp \left( - \lambda_{ij}^{(k)} \right) \right) \right) \\
& + \ind{>} \left( \ln \left(1-\pij^{(k)}\right) - \exp \left( \eijk\right) + \yijk \eijk -\ln \left( \yijk ! \right) \right) -\frac{\bztkt \bztk}{2}-\frac{q}{2} \ln (2 \pi) \Bigr\}.
\end{aligned}
\end{equation*}
Moreover,
$$
\begin{aligned}
\gf
= & \snv
\left(\exp \left(\hat{\eta}^{(k)}_{i j}\right) \ind{>} -\left(\frac{\left(1-\pij^{(k)}\right) \hat\xi^{(k)}_{ij} \exp \left(\hat{\eta}^{(k)}_{i j}\right)\left(\exp \left(\hat{\eta}^{(k)}_{i j}\right)-1\right)}{\pij^{(k)}+\left(1-\pij^{(k)}\right) \hat\xi^{(k)}_{i j}}\right.\right. \\
& \left.\left.-\frac{\left(1-\pij^{(k)}\right)^2 \left(\hat{\xi}^{(k)}_{ij}\right)^2 \exp \left(2 \hat{\eta}^{(k)}_{i j}\right)}{\left(\pij^{(k)}+\left(1-\pij^{(k)}\right) \hat{\xi}^{(k)}_{i j}\right) ^2}\right) \ind{=}\right) \aat
+
\boldsymbol{I}_q,
\end{aligned}
$$
where $\hat\xi^{(k)}_{ij}=\exp \left\{-\exp \left(\eijkh\right)\right\}$ and $\zhk$ is the maximum of $\qf$.

The Laplace approximated log-likelihood~Equation~\eqref{Laplace-approx-SM} is 

\begin{equation*}
\begin{aligned}
    \tilde{\ell}_K(\bt)&=\sk\Biggl(-\frac{1}{2} \ln \left[\operatorname{det}\left\{
      \gf
      \right\}\right]\\
    & + \snv \Biggl\{ \ind{=}  \ln\left( \pij^{(k)} + \left(1-\pij^{(k)} \right) \hat{\xi}_{ij}^{(k)} \right) \\
    & + \ind{>} \left( \ln \left(1-\pij^{(k)}\right) - \exp \left( \eijkh \right) + \yijk \eijkh -\ln \left( \yijk ! \right) \right) \Biggr\}-\frac{\left(\bztkh\right)^\T \bztkh}{2}\Biggr).
\end{aligned}
\end{equation*}

\subsection{ZAGA model} \label{AppZAGA}

Let $Y_{ij} \vert \bz, \bxx, \bw \sim \text{ ZAGA}$ with $\pij$, $\mij$, and scale $\sigma$ with pmf defined by Equation Equation~\eqref{ZIGpmf}, namely 
\begin{equation}
P_{\bt}\left(Y_{ij} = y_{ij} \mid \bz, \bxx, \bw\right) =  \left\{
    \begin{array}{ll}
        \pij &  \ \text{if} \ y_{ij}=0\\
             (1-\pij) \frac{1}{\Gamma \left( 1/\sigma^2 \right)(\sigma^2 \mij)^{1/\sigma^2} } y_{ij}^{1/\sigma^2 - 1} \exp \left(-y_{ij}/(\sigma^2 \mij)\right) & \ \text{if} \  y_{ij}>0,
    \end{array}
    \right.  \label{ZIGpmf}
\end{equation}

with $\ln \mij =  \eta_{ij}$. For  $\pij=0$, the ZAGA corresponds to a standard Gamma model, whilst values $\pij\in (0,1)$ allow for modelling the excess of zeros. 

For each $k$-th layer, the corresponding marginal density is 
\begin{equation}
\begin{aligned}
    \dens & = \int \exp \Biggl\{  \snv   \elk (\bt; \yijk, \bxx, \bw, \bzk) + \ln \left(  (2 \pi)^{-q / 2} \exp \left(-\frac{\bztkt \bztk}{2}\right) \right) \Biggr\} d \bztk    \\
    & = \int \exp \Biggl\{  \snv \left( \llp + \llg (\bzk)\right) - \frac{q}{2}\ln (2 \pi) - \frac{\bztkt \bztk}{2}\Biggr\} d \bztk    \\
    & = \exp \left(  \snv \llp \right) \int \exp \Biggl\{  \snv \left( \llg (\bzk)\right) - \frac{q}{2}\ln (2 \pi) - \frac{\bztkt \bztk}{2}\Biggr\} d \bztk,\\
    \label{marg-dens-ZAGA}
\end{aligned}
\end{equation}

where we have
\begin{equation*}
    \begin{aligned}
        \elk (\bt; \yijk, \bxx, \bw, \bzk) & = \underbrace{\left({\ind{=}} \ln \pij + {\ind{>}} \ln (1-\pij) \right)}_{:= \llp} \\
        & \quad + \underbrace{{\ind{>}} \left( (1/\sigma^2 -1) \ln \yijk  - \ln \Gamma (1/\sigma^2) - 1/\sigma^2 \left(2 \ln \sigma + \ln \mijk \right)   - \frac{\yijk}{\sigma^2 \mijk} \right)}_{:= \llg (\bzk)}.
    \end{aligned}
\end{equation*}

The integral in the last expression of~Equation~\eqref{marg-dens-ZAGA} is equal to $\int \exp \left\{m \qf \right\} d \bztk$ with 
\begin{equation}
\label{mQ-ZAGA}
    \begin{aligned}
        & m \qf[,\sigma] = \Bigl[\snv \llg (\bzk)  - \frac{\bztkt \bztk}{2}-\frac{q}{2} \ln (2 \pi) \Bigr]  \\
        & =  \Biggl\{\snv {\ind{>}} \Biggl[ (1/\sigma^2 -1) \ln \yijk  - \ln \Gamma (1/\sigma^2) - 1/\sigma^2 \left(2 \ln \sigma + \ln \mijk \right)   - \frac{\yijk}{\sigma^2 \mijk} \Biggr] \\
        & \quad - \frac{q}{2}\ln (2 \pi) - \frac{\bztkt \bztk}{2} \Biggr\}.
    \end{aligned}
\end{equation}

We see that the likelihood derived from~(\ref{marg-dens-ZAGA}) factorizes in two parts and can be optimized separately to estimate $\pi_{ij}$ and to estimate all the other parameters.  

For the zero-inflation part, the exact MLE can be obtained based on a dichotomized version of all the $y_{ij}^{(k)}$. 
More precisely, to estimate the $\pi_{ij}$, define  $ \ell_{\pi} := \sk \snv \llp$, $ \noij : = \sk \ind{=}$ and  $\sk \ind{>} = K - \noij$.  For each edge $i\to j$, we solve
    \begin{eqnarray*}
    \label{eq:pihatmle}
      \lefteqn{\sk \partial_{\pij} \elk (\bt; \yijk, \bxx, \bw, \bzk) = \sk \partial_{\pij} \left( \llp + \llg (\bzk)\right) = \sk \partial_{\pij} \llp}\\
        = & 
        \sk \partial_{\pij} \left({\ind{=}} \ln \pij + {\ind{>}} \ln (1-\pij) \right)  =
        \frac{\noij}{\pij} - \frac{K - \noij}{1- \pij} = 0,
    \end{eqnarray*}
    to obtain $ \hat{\pi}_{ij, MLE} = \frac{\noij}{K}$.
Alternatively, in the presence of a set of covariates with design matrix $\boldsymbol{X}_{\pi, ij}$, 
one can link $\pi_{ij}^{(k)}$ to a linear predictor $\eta_{\pi,ij}^{(k)} = \boldsymbol({X}_{\pi,ij}^{(k)})^\T \bb_{\pi}$ through a logit transformation $\pi_{ij}^{(k)} = \operatorname{logit}^{-1} (\eta_{\pi,ij}^{(k)})$. The covariates can be either edge- and layer-dependent or not, and $\bb_{\pi}$ are their associated coefficients.

For the Gamma part, whose likelihood is in integral form, the log-likelihood need to be approximated using Laplace and the estimation is done based on $K - \noij$ observations per layer. In addition to $m \qf[,\sigma]$ as in~Equation~\eqref{mQ-ZAGA}, we have that 
$$
\gf
= 
\snv {\ind{>}} \left( \frac{\yijk}{\sigma^2 \mijk} \right) \aat + \boldsymbol{I}_q,
$$
so that finally the Laplace approximated log-likelihood~Equation~\eqref{Laplace-approx-SM}  corresponding to the Gamma part is 
\begin{equation*}
\begin{aligned}
    \tilde{\ell}_{K}(\bt)&= \sk \Biggl(-\frac{1}{2} \ln \left[\operatorname{det}\left\{\gf\right\}\right] 
    + 
    \snv \Biggl\{ \llg \left( \bzhk\right) \Biggr\}-\frac{\hat{\bz}_{\mathbf{( 2 )}}^{(k)^{\top}} \hat{\bz}_{(2)}^{(k)}}{2}\Biggr) \\
    & = \sk \Biggl(-\frac{1}{2} \ln \left[\operatorname{det}\left\{\gf\right\}\right]\\
    & + \snv \Biggl\{ {\ind{>}} \left( (1/\sigma^2 -1) \ln \yijk  - \ln \Gamma (1/\sigma^2) - 1/\sigma^2 \left(2 \ln \sigma + \ln \mijk \right)   - \frac{\yijk}{\sigma^2 \mijk} \right)  \Biggr\} \\
    & \quad  -\frac{\hat{\bz}_{\mathbf{( 2 )}}^{(k)^{\top}} \hat{\bz}_{(2)}^{(k)}}{2}\Biggr).
\end{aligned}
\end{equation*}

\section{Analytical derivations and  proofs} \label{AppA}

\subsection{Lemmas, Corollaries, and Proofs}
\label{app:tech-det}

To prove \Autoref{thm:model1_double} about double asymptotics, 
we develop a sequence of intermediate lemmas establishing the ingredients needed for the final result. The proof strategy is closely related to that of Theorem $1$ in \citet{ogdenAsymptoticValidityNaive2017a}. Specifically, let $\epsk[\bt]=\ln \tdens-\ln \dens$ denote the difference between the Laplace-approximated and exact log-likelihoods. We first represent the Laplace approximation error as a series expansion, analogously to Equation~$(4)$ in \citet{SMcC95}. We then determine the order of this series by studying the order of its individual components in Lemmas~\ref{lem:hatQr-order}-\ref{lem:Q2inv-order}. Combining these intermediate results yields Lemma~\ref{lem:diff-ll}, which establishes the order of $\epsk[\bt]$. We then study the derivative of $\epsk[\bt]$ with respect to a fixed parameter. Proceeding in the same way, we derive the orders of the relevant intermediate terms in Lemmas~\ref{lem:mode-deriv-vector}-Corollary~\ref{lem:order-dQP-vector}, and combine them in Lemma~\ref{lem:order-depsilon-vector} to obtain the order of this derivative. Finally, Lemma~\ref{lem:delta-inf} provides the final ingredient needed to complete the proof of \autoref{thm:model1_double} and establish the desired asymptotic result. 

Prior to start, it is useful to introduce the following Definitions. 

\begin{definition}[Sup-norm] 
    \label{def:sup-norm}
    For a tensor $\boldsymbol{A}_r$ with dimension $q^r$, define the componentwise sup-norm (max entry norm)

    $$
    \|\boldsymbol{A}_r\|_{\max}
    :=
    \max \left\{ 
    \left|
    \bigl(\boldsymbol{A}_r\bigr)_{a_1\cdots a_r}
    \right| \vert
    1\le a_1,\ldots,a_r\le q
    \right\} 
    =
    \max_{1\le a_1,\dots,a_r\le q}
    \left|
    (\boldsymbol{A}_r)_{a_1 \ldots a_r}
    \right|.
    $$
\end{definition}

\begin{definition}[Spectral norm]
    \label{def:eig-norm}
    For a matrix $\boldsymbol{A}$, the spectral norm $\snorm{\cdot}$ is defined as

    $$
    \snorm{\boldsymbol{A}}
    :=
    \sigma_1(\boldsymbol{A}),
    $$

    the largest singular value of $\boldsymbol{A}$.
\end{definition}

\begin{definition}[Norms and balls used throughout]
For any vector $\bxx$, $\mnorm{\bxx}=\max_i |x_i|$.
For any matrix $\boldsymbol{A}$, $\mnorm{\boldsymbol{A}}=\max_i \sum_j |A_{ij}|$ (the largest absolute row sum).
For any tensor $\boldsymbol{T}$, $\enorm{\boldsymbol{T}}$ is as in Definition~\ref{def:sup-norm}. For any centre $\boldsymbol{a} \in \mathbb{R}^n$ and radius $t>0$, define the ball
$$
\Ball{\boldsymbol{a}}{t}:=\{\boldsymbol{u} \in \mathbb{R}^n:\mnorm{\boldsymbol{u}-\boldsymbol{a}}<t\}.
$$

The norm $\mnorm{\cdot}$ will be written as $\norm{\cdot}$ for simplicity. 
\end{definition}

In this Section, we write $Q(\bt, \bztkh)$ for $\qf[,\zhatk_{(2)}]$ and $\elk (\bt, \bzk)$ for $\elk (\bt; \yijk, \bxx, \bw, \bzk)$ for brevity. We let $\bt^{l}$ denote the fact that we fix $\theta_l = \theta$ in $\bt$, where $l=1,\ldots, \dim(\boldsymbol{\ThKM})$.

We introduce an assumption to control the differentiability term-by-term of the series, which in turns controls the differentiability of the Laplace error expansion in the likelihood score. To this end, we proceed similarly to \cite{SMcC95} and \cite{ogdenAsymptoticValidityNaive2017a,ogdenErrorLaplaceApproximations2021} and make use of their formal expansion of the Laplace approximation error. More precisely, for positive integers $v$ and $s$, we define the set of S-bipartitions $\mathcal{M}_{v,s}$ to be all $(P, T)$ such that $P = \left(p_1 | \ldots | p_v\right)$ and $T = \left(t_1 | \ldots | t_s\right)$ are both partitions of $\left\{ 1, \ldots , 2s\right\}$, such that each block of $P$ contains at least three elements and each block of $T$ contains exactly two elements. For each $(P,T) \in \mathcal{M}_{v,s}$, define a corresponding graph $\mathcal{G} (P,T)$ with vertices $1, \ldots , 2s$, and an edge between each pair of vertices contained in the same block of either $P$ or $T$. If $\mathcal{G} (P,T)$ is a connected graph, we say that $(P,T)$ is a connected bipartition, and write $(P,T) \in \mathcal{M}_{v,s}^C$. We define the level of $(P,T) \in \mathcal{M}_{v,s}^C$ to be $d=s-v$, and write $\mathcal{M}_d^C$ for all connected  level-$d$ S-bipartitions. Write 
$$
  \boldsymbol{\boldsymbol{\hat{Q}}}_{r}^{(k)} (\tl) = - m \, D^r_{\bz} Q\l(\bt^{l}, \bxx,  \bw,  \mathbf{\hat{z}}^{(k)}, \boldsymbol y^{(k)}\r),
  $$ 
  an order $r$ tensor of $r$-th derivatives and for an index tuple $j_t \in T$ write $\boldsymbol{\hat{Q}}^{j_t} = \left(\boldsymbol{\hat{Q}}_2^{-1} (\tl)\right)_{j_t}$ for the chosen entries of the inverse of the Hessian, according to the chosen bipartition. 
  For notational brevity, we henceforth write $\boldsymbol{\boldsymbol{\hat{Q}}}_{r}(\tl)$ in place of $\boldsymbol{\boldsymbol{\hat{Q}}}_{r}^{(k)}(\tl)$, while emphasizing that this quantity retains its dependence on the layer $k$.
  For each $\tl$, define the 
  expansion
\begin{equation}
  \label{eq:eps-error}
  \epsk[\bt^l]  
  =
  \sum_{s=1}^\infty 
  \sum_{P, T} 
  \frac{(-1)^v }{(2s)!}
  \sum_{j \in \left[ 1:q\right]^{2s}}
  \boldsymbol{\hat{Q}}_{j_{p_1}}(\tl) \ldots
  \boldsymbol{\hat{Q}}_{j_{p_v}}(\tl)
  \boldsymbol{\hat{Q}}^{j_{t_1}}  \ldots
  \boldsymbol{\hat{Q}}^{j_{t_s}} =  \sum_{s=1}^\infty 
  \sum_{P, T} 
  \frac{(-1)^v }{(2s)!}  Q_{P,T}(\tl),
\end{equation}

where $\left[1:q\right]^{2s} = \left\{ \left(j_1, \ldots, j_{2s}\right): j_l \in \left\{1, \ldots, q\right\}\right\}$ and $j_p$ is the sub-vector of $j=\left(j_1, \ldots, j_{2s}\right)$ corresponding to the indices in $p$, where $\sum_{P, T}$ considers all connected partitions of level $d$, and $Q_{P,T}$ depends on $k$ since it is related to $\hat Q_(j_p)$ and $\hat Q^{j_t}$ which all depend on $k$. 
Following \citep{ogdenAsymptoticValidityNaive2017a,ogdenErrorLaplaceApproximations2021}, we interpret Equation~Equation~\eqref{eq:eps-error} as a formal expansion. Before proceeding, we illustrate the meaning of its terms with an example of the computation of $Q_{P,T}$ in the right hand-side expression in Equation~Equation~\eqref{eq:eps-error} that also provides intuition for the procedure.\\

\textit{Example: Take $q=3$, $s=2$, then $2s=4$ and  
$
[1:3]^4
=
\left\{(j_1,j_2,j_3,j_4):j_r\in\{1,2,3\},\ r=1,\ldots,4\right\},
$
which contains $3^4=81$ index tuples. Since every block of $P$ must contain at least three elements, and we have $4$ indices, necessarily the only possible option consists of a single block with $4$ elements, hence
$v=1$ and $P=(1234)$. Since each block of $T$ must contain exactly two elements, the possible pair partitions are $T_1 = (12|34), T_2 = (13|24), T_3 = (14|23)$ and $\mathcal{M}_{1,2}= \left\{ (P,T_1), (P,T_2), (P,T_3)\right\}$. 
All three bipartitions are connected because the single block of $P$
contains all four indices. This will be of importance for Equation~Equation~\eqref{eq:eps-error-per-level-d} below, which regroups all connected bipartitions of level $d$. 
For example, for $(P,T_1)$,
\begin{align*}
Q_{P,T_1}(\tl)
&=
\sum_{j\in[1:3]^4}
\hat Q_{j_{p_1}}(\tl)
\hat Q^{j_{t_1}}
\hat Q^{j_{t_2}}\\
&=
\sum_{j_1=1}^{3}
\sum_{j_2=1}^{3}
\sum_{j_3=1}^{3}
\sum_{j_4=1}^{3}
\hat Q_{j_1j_2j_3j_4}(\tl)
\hat Q^{j_1j_2}
\hat Q^{j_3j_4} \\
&\quad=
\underbrace{
\hat Q_{1111}(\tl)
\hat Q^{11}
\hat Q^{11}
}_{(j_1,j_2,j_3,j_4)=(1,1,1,1)}
\\
&\qquad\quad+
\underbrace{
\hat Q_{1112}(\tl)
\hat Q^{11}
\hat Q^{12}
}_{(j_1,j_2,j_3,j_4)=(1,1,1,2)}
+\cdots
\\
&\qquad\quad+
\underbrace{
\hat Q_{3333}(\tl)
\hat Q^{33}
\hat Q^{33}
}_{(j_1,j_2,j_3,j_4)=(3,3,3,3)}.
\end{align*}
Similarly,
\begin{align*}
Q_{P,T_2}(\tl)
&=
\sum_{j_1,j_2,j_3,j_4=1}^{3}
\hat Q_{j_1j_2j_3j_4}(\tl)
\hat Q^{j_1j_3}
\hat Q^{j_2j_4},\\
Q_{P,T_3}(\tl)
&=
\sum_{j_1,j_2,j_3,j_4=1}^{3}
\hat Q_{j_1j_2j_3j_4}(\tl)
\hat Q^{j_1j_4}
\hat Q^{j_2j_3}.
\end{align*}
The $s=2$ contribution in the right hand-side expression in Equation~Equation~\eqref{eq:eps-error} with $v=1$ is
$$
-\frac{1}{4!}
\left\{
Q_{P,T_1}(\tl)
+
Q_{P,T_2}(\tl)
+
Q_{P,T_3}(\tl)
\right\}.
$$
Thus, the contribution considered above is of level $d=s-v=2-1=1$. By comparison, when $s=3$, the underlying set contains $2s=6$ elements, and $P$ may consist either of two blocks of size three, yielding $v=2$ and hence $d=1$, or of a single block of size six, yielding $v=1$ and hence $d=2$. 
} \\

We now define the following asymptotic expansion to $\epsk[\tl]$:
\begin{equation}
  \label{eq:as-eps-error-per-level-d}
  \begin{aligned}
    \epsk[\bt^l] & \sim \sum_{d=1}^{\infty} \epsilon_d^{(k)} (\tl), \quad m \to \infty,
  \end{aligned}
\end{equation}
with 
\begin{equation}
  \label{eq:eps-error-per-level-d}
  \begin{aligned}
    \epsilon_d^{(k)} (\bt^l) & 
    =
  \sum_{P, T \in \mathcal{M}_d^C} 
  \underbrace{\frac{(-1)^{v (P)}}{(2s(P,T))!}
  Q_{P,T}(\tl)}_{:=\epsilon_{P,T}(\tl)}.
  \end{aligned}
\end{equation}
For example, the contribution for $d=1$ will come from $s \in \{2,3\}$.
We interpret Equation~\eqref{eq:as-eps-error-per-level-d} as a formal asymptotic expansion ordered by the level $d$, without requiring convergence of the infinite series. Our analysis follows the termwise approach used in Section~1 of the supplementary material to \citet{ogdenAsymptoticValidityNaive2017a}. To relate the formally differentiated expansion to the derivative of the actual approximation error, we impose the following remainder condition.

For each fixed truncation level $D \ge 1$, write
\begin{equation}
  \label{eq:eps-D-rem}
  \epsk[\bt^l]
  =
  \sum_{d=1}^{D}\epsilon_d^{(k)}(\tl)
  +
  R_{D+1}^{(k)}(\tl).
\end{equation}
We assume that the terms and remainder are continuously differentiable with respect to $\tl$. We first establish the orders of the tensor contractions contributing to each level $d$ and of their derivatives with respect to $\tl$. These bounds show that successive levels decrease by a factor $m^{-1}$ and identify the leading orders of the formal expansions for $\epsk[\bt^l]$ and $\partial_{\tl}\epsk[\bt^l]$. 

\setcounter{theorem}{0} % make sure Lemma numbering starts from 1 

\begin{lemma}[Order of $\boldsymbol{\boldsymbol{\hat{Q}}}_{r} (\tl)$]\label{lem:hatQr-order}
Under Assumption~\textbf{A4}~(\ref{ass:smooth-z}), for each fixed integer $r_{max}\ge4$ and 
$2 \le r \le r_{max}$,
$$
\|\boldsymbol{\boldsymbol{\hat{Q}}}_{r} (\tl)\|_{\max}= \bigo{m},
$$
uniformly in $k$ and $l$.
\end{lemma}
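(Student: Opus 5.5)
The bound follows by writing $\hat{Q}_r(\tl)$ explicitly as a sum over dyads and applying Assumption~\textbf{A4}~(\ref{ass:smooth-z}) term by term. From Equation~\eqref{Eq: Q_gen},
\[
mQ(\bt^l,\bz)=\snv \elk(\bt^l,\bz)-\tfrac{1}{2}\bzt^\top\bzt-\tfrac{q}{2}\ln(2\pi).
\]
The Gaussian prior term is quadratic in $\bzt$. Its second derivative is $-\boldsymbol{I}_q$, and all derivatives of order $r\ge 3$ vanish. Hence, for $2\le r\le r_{\max}$,
\[
\hat{Q}_r(\tl)=-\snv D^r_{\bz}\elk(\bt^l,\bztkh(\tl))+\mathbbm{1}_{\{r=2\}}\boldsymbol{I}_q .
\]

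First, note that the mode $\bztkh(\tl)$ is the centre of the ball $\Ball{\zhatk(\tl)}{\epsilon_r}$ in \textbf{A4}~(\ref{ass:smooth-z}). By \textbf{A4}~(\ref{ass:unique-optimizer}) it exists, is unique, and $Q$ is well defined in a neighbourhood of it. So the supremum bound of \textbf{A4}~(\ref{ass:smooth-z}) applies in particular at $\bz=\bztkh(\tl)$:
\[
\enorm{D^r_{\bz}\elk(\bt^l,\bztkh(\tl))}\le C_r,
\]
uniformly over $k$, dyads, and $\bt\in\boldsymbol{\Theta_0}$, hence uniformly in $l$.

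Second, fix an index tuple $(a_1,\dots,a_r)\in[1:q]^r$ and apply the triangle inequality to the entry:
\[
\bigl|(\hat{Q}_r(\tl))_{a_1\cdots a_r}\bigr|\le \snv \bigl|(D^r_{\bz}\elk)_{a_1\cdots a_r}\bigr|+\mathbbm{1}_{\{r=2\}}\le mC_r+1 .
\]
Taking the maximum over the $q^r$ index tuples, which is a finite set since $q$ is fixed, gives $\|\hat{Q}_r(\tl)\|_{\max}\le mC_r+1=\bigo{m}$. The constant $C_r$ does not depend on $(K,m)$, $k$ or $l$, so the bound is uniform in $k$ and $l$. Since $r$ ranges over the finite set $\{2,\dots,r_{\max}\}$, we may also take $\max_r C_r$ to get a single constant.

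The main point to check is the bookkeeping rather than any analytic difficulty. Three things need confirming. The number of summands in $\snv$ must be exactly $m$ (directed: $n_V(n_V-1)$; undirected: $n_V(n_V-1)/2$). The derivatives must be taken only with respect to $\bzt$, the free coordinates, so the constant first coordinate of $\bz$ contributes nothing. And the evaluation point must lie in the region where \textbf{A4}~(\ref{ass:smooth-z}) holds, which is immediate since it is the ball's centre.

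One could also remark that \textbf{A4}~(\ref{ass:curvature-mode}) shows the bound is sharp for $r=2$. There, $\hat{Q}_2$ has eigenvalues of exact order $m$, so $\hat{Q}_2(\tl)$ is $\Theta(m)$ entrywise in norm. This will be used later, in Lemma~\ref{lem:Q2inv-order}, for the inverse.
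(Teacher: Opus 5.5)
Your proof is correct and follows the paper's argument essentially step for step. You write $\boldsymbol{\hat{Q}}_r(\tl)$ as minus the sum over the $m$ dyads of $D^r_{\bz}\ell_{ij}^{(k)}$ at the mode, plus the $\boldsymbol{I}_q$ term when $r=2$; bound each summand entrywise by $C_r$ using \textbf{A4}~(ii) at the centre of the ball; and sum to get $mC_r+1=O(m)$ uniformly in $k$ and $l$. Your explicit treatment of the sign and size of the $r=2$ identity contribution is a little more careful than the paper's one-line remark, but the route is the same.
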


\begin{proof}
By definition, $Q(\bt, \bzk)=m^{-1}\{\snv \elk(\bt, \bzk)- \tfrac12 \bztkt \bztk -\tfrac q2\ln(2\pi)\}$. For $r\ge 3$ the Gaussian term $- \tfrac12 \bztkt \bztk -\tfrac q2\ln(2\pi)$ contributes no derivatives, hence
$$
\boldsymbol{\boldsymbol{\hat{Q}}}_{r} (\tl)
=
-m\,D^r_{\bz} \, Q(\bt^l, \bztkh)
=
-\snv D^r_{\bz} \, \elk (\bt^l, \bzk).
$$

By Assumption~\textbf{A4}~(\ref{ass:smooth-z}),
$$
\|\boldsymbol{\boldsymbol{\hat{Q}}}_{r} (\tl)\|_{\max}
\le
\snv\|D^r_{\bz} \, \, \elk (\bt^l, \bzk) \|_{\max}
\le 
\snv \sup_{\bz \in \Ball{\zhatk(\tl)}{\epsilon}} \|D^r_{\bz} \, \, \elk (\bt^l, \bzk) \|_{\max}
\le m C_r.  
$$
For $r=2$ the same argument applies, with an additional $+\boldsymbol{I}_q$ term in $D^r_{\bz} \, Q(\bt^l, \bztkh)$ (from the derivative of $\tfrac12 \bztkt \bztk$), which is $\bigo{1}$.
\end{proof}

\begin{lemma}[Order of $\boldsymbol{\hat{Q}}_2(\tl)^{-1}$ and $\left(\boldsymbol{\hat{Q}}_2^{-1} (\tl)\right)_{j_k}$]\label{lem:Q2inv-order}
Under Assumption~\textbf{A4}~(\ref{ass:curvature-mode}) and for a fixed $q$,

$$
\snorm{\boldsymbol{\hat{Q}}_2(\tl)^{-1}}
= \bigo{m^{-1}}
\quad\text{and hence}\quad
\mnorm{\boldsymbol{\hat{Q}}_2(\tl)^{-1}}
= \bigo{m^{-1}},
\quad\text{and}\quad
\left|  \left(\boldsymbol{\hat{Q}}_2^{-1} (\tl)\right)_{j_k} \right| = \bigo{m^{-1}},
$$

uniformly in $k$ and $l$. 
\end{lemma}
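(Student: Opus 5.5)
The plan is to read the three claims off the eigenvalue bounds in Assumption~\textbf{A4}~(\ref{ass:curvature-mode}) and then pass between norms using the fact that $q$ is fixed. The starting point is the matrix $\boldsymbol{\hat{Q}}_2(\tl) = -m\,U(\bt^l,\bzh)$, which is symmetric. By \textbf{A4}~(\ref{ass:curvature-mode}) it is positive definite, with
$$
c\,m \le \lambda_{\min}\bigl(\boldsymbol{\hat{Q}}_2(\tl)\bigr) \le \lambda_{\max}\bigl(\boldsymbol{\hat{Q}}_2(\tl)\bigr)\le C\,m ,
$$
uniformly in $k$ and $l$. Here I simply multiply the stated bounds for $m^{-1}\boldsymbol{\hat{Q}}_2(\tl)$ by $m$.

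\emph{Spectral norm.} The inverse $\boldsymbol{\hat{Q}}_2(\tl)^{-1}$ is again symmetric positive definite. Its eigenvalues are the reciprocals of those of $\boldsymbol{\hat{Q}}_2(\tl)$. For a symmetric matrix the largest singular value is the largest absolute eigenvalue, so
$$
\snorm{\boldsymbol{\hat{Q}}_2(\tl)^{-1}} = \lambda_{\min}\bigl(\boldsymbol{\hat{Q}}_2(\tl)\bigr)^{-1} \le (c\,m)^{-1}.
$$
The constant $c$ does not depend on $k$, $l$, $K$ or $m$, so this bound is uniform.

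\emph{Row-sum norm.} For any $q\times q$ matrix $\boldsymbol{A}$, each row $\boldsymbol{a}_i^\top$ satisfies $\sum_j |A_{ij}| \le \sqrt{q}\,\|\boldsymbol{a}_i\|_2 = \sqrt{q}\,\|\boldsymbol{A}^\top \boldsymbol{e}_i\|_2 \le \sqrt{q}\,\snorm{\boldsymbol{A}}$. Hence $\mnorm{\boldsymbol{A}}\le \sqrt{q}\,\snorm{\boldsymbol{A}}$. Since $q$ is fixed, this gives $\mnorm{\boldsymbol{\hat{Q}}_2(\tl)^{-1}} \le \sqrt{q}/(c\,m) = \bigo{m^{-1}}$.

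\emph{Single entries.} Any index pair $j_t=(a,b)$ selected by a block of $T$ satisfies
$$
\bigl|(\boldsymbol{\hat{Q}}_2^{-1}(\tl))_{ab}\bigr| = \bigl|\boldsymbol{e}_a^\top \boldsymbol{\hat{Q}}_2(\tl)^{-1}\boldsymbol{e}_b\bigr| \le \snorm{\boldsymbol{\hat{Q}}_2(\tl)^{-1}}.
$$
Alternatively, the entry is at most $\mnorm{\cdot}$. Either way the entry is $\bigo{m^{-1}}$ uniformly.

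There is no real obstacle. The only point needing care is that the uniformity in $k$ and $l$ is inherited from the constant $c$ in \textbf{A4}~(\ref{ass:curvature-mode}). That constant bounds the curvature at the mode $\bztkh(\tl)$ for every layer and every $\tl\in\Theta_0$, and it does so independently of $(K,m)$. The factor $\sqrt{q}$ in the norm comparison is harmless because $q$ is fixed.
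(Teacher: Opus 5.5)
Your proposal is correct and follows the paper's own argument: \textbf{A4}~(iv) gives $\lambda_{\min}(\hat{Q}_2)\ge cm$, so the spectral norm of the inverse is at most $(cm)^{-1}$, and the fixed factor $\sqrt{q}$ transfers this bound to the row-sum norm and hence to every entry. Your Cauchy--Schwarz justification of the row-sum comparison and the direct entry bound $|e_a^\top A\, e_b|\le \|A\|_2$ simply make explicit steps that the paper states without detail.
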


\begin{proof}
Assumption~\textbf{A4}~(\ref{ass:curvature-mode}) implies
$\lambda_{\min}(\boldsymbol{\hat{Q}}_2(\tl))\ge c m$ and $\lambda_{\max}(\boldsymbol{\hat{Q}}_2(\tl))\le C m$. Since $\boldsymbol{\hat{Q}}_2(\tl)$ is symmetric positive definite with fixed dimension $q$, eigenvalues and singular values are identical and
$$
\enorm{\boldsymbol{\hat{Q}}_2(\tl)^{-1}}
\le
\mnorm{\boldsymbol{\hat{Q}}_2(\tl)^{-1}} 
\le 
\sqrt{q} \snorm{\boldsymbol{\hat{Q}}_2(\tl)^{-1}} 
= 
\sqrt{q} \lambda_{\max} (\boldsymbol{\hat{Q}}_2(\tl)^{-1}) 
=  
\frac{\sqrt{q}}{\lambda_{\min}(\boldsymbol{\hat{Q}}_2(\tl))}
\le 
\frac{\sqrt{q}}{cm},
$$

which proves the result.
Recall $\mnorm{.}$ is the largest absolute row sum. Furthermore, note that for any pair-block $j_k$, $\left|\left(\boldsymbol{\hat{Q}}_2^{-1} (\tl)\right)_{j_k} \right| \le \mnorm{\boldsymbol{\hat{Q}}_2^{-1} (\tl)} \le \frac{\sqrt{q}}{cm}$, hence 
$$
\left|  \left(\boldsymbol{\hat{Q}}_2^{-1} (\tl)\right)_{j_k} \right| = \bigo{m^{-1}}.
$$
\end{proof}

\begin{lemma}[Difference of the likelihoods]
    \label{lem:diff-ll}
    Fix $\tl \in \bt^l $. 
    Under Assumption~\textbf{A4} and Conditions 1-2 of \citet{ogdenErrorLaplaceApproximations2021}, with effective information about each integrated coordinate proportional to $m$ and fixed $q$, uniformly in $\theta$, it holds that

    \begin{equation}
        \label{eq:diff-ll}
        \epsk[\bt^l] = \ln \tilde{f}_{\bt^l}(\boldsymbol{y}^{(k)})-\ln {f}_{\bt^l}(\boldsymbol{y}^{(k)}) = \bigo{m^{-1}}.
    \end{equation}
\end{lemma}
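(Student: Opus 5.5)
The plan is to bound the level-ordered expansion \eqref{eq:as-eps-error-per-level-d} term by term, show that the level $d$ contribution is $\bigo{m^{-d}}$, and then control the remainder $R_{2}^{(k)}(\tl)$ in \eqref{eq:eps-D-rem} with truncation level $D=1$. That gives $\epsk[\bt^l]=\epsilon_1^{(k)}(\tl)+\bigo{m^{-2}}=\bigo{m^{-1}}$.

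The first step is a counting argument. Fix a connected S-bipartition $(P,T)\in\mathcal M^C_{v,s}$ of level $d=s-v$. The contraction $Q_{P,T}(\tl)$ is a sum over the $q^{2s}$ index tuples of products of $v$ entries of tensors $\boldsymbol{\hat Q}_{|p_i|}(\tl)$ and $s$ entries of $\boldsymbol{\hat Q}_2^{-1}(\tl)$. By Lemma~\ref{lem:hatQr-order}, each factor of the first kind is bounded in absolute value by $C m$, provided every block size satisfies $|p_i|\le r_{\max}$. By Lemma~\ref{lem:Q2inv-order}, each entry of $\boldsymbol{\hat Q}_2^{-1}(\tl)$ is bounded by $\sqrt q/(cm)$. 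Since $q$ is fixed, this gives
\[
|Q_{P,T}(\tl)|\le q^{2s}\,(C m)^{v}\,\Bigl(\tfrac{\sqrt q}{cm}\Bigr)^{s}=\bigo{m^{v-s}}=\bigo{m^{-d}},
\]
uniformly in $k$ and $l$, because the constants in \textbf{A4} do not depend on $k$ or $\tl$. Blocks of $P$ have size at least three, so $2s\ge 3v$ and hence $s\le 3d$. For a given $d$ there are therefore finitely many bipartitions in $\mathcal M_d^C$, all with $s\le 3d$. For $d=1$ only $s\in\{2,3\}$ occur, with block sizes at most $4$. Since $r_{\max}\ge4$, every tensor needed for $\epsilon_1^{(k)}$ is covered by \textbf{A4}(ii), and summing gives $\epsilon_1^{(k)}(\tl)=\bigo{m^{-1}}$.

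The second step, which I expect to be the main obstacle, is the remainder $R_2^{(k)}(\tl)$. Since \eqref{eq:eps-error} is only a formal series, we cannot simply sum the level bounds. Here I would invoke Conditions 1--2 of \citet{ogdenErrorLaplaceApproximations2021}, which the statement assumes. Those conditions give a rigorous Laplace error bound of the form $|\epsk|\lesssim q^{a}/m$ for integrals whose effective information grows linearly in $m$. To apply them I need to check their hypotheses in our setting:
\begin{itemize}
\item[(a)] the rescaled Hessian $m^{-1}\boldsymbol{\hat Q}_2$ has eigenvalues bounded away from $0$ and $\infty$, which is \textbf{A4}(iv);
\item[(b)] third and fourth derivatives of $mQ$, divided by $m$, are bounded on a fixed neighbourhood of the mode, which is \textbf{A4}(ii) with a radius $\epsilon$ independent of $k$;
\item[(c)] the mode is unique and interior, and the tail mass outside $\mathcal B_k(\bztkh(\tl),t)$ is exponentially small in $m$. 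This follows from \textbf{A4}(i) together with the Gaussian factor $e^{-\bz_{(2)}^\top\bz_{(2)}/2}$ and log-concavity near the mode.
\end{itemize}
Ogden's theorem then yields $\epsk[\bt^l]=\bigo{m^{-1}}$ directly, uniformly in $\tl$, since all constants are uniform over $\boldsymbol{\Theta_0}$ and $k$. The term-by-term bound from the first step serves to identify the leading term $\epsilon_1^{(k)}$ and to show that $R_2^{(k)}=\bigo{m^{-2}}$ is consistent with it. Finally, the uniformity in $\theta$ claimed in the statement follows because every constant used ($C_r$, $c$, $C$, $t$, $\epsilon$) is independent of $(K,m)$, $k$ and $l$.
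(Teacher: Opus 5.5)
Your Step~1 is the paper's proof. The paper also bounds each contraction via Lemmas~\ref{lem:hatQr-order}--\ref{lem:Q2inv-order} as $Q_{P,T}=O(m^{v-s})$, and uses $2s\ge 3v$ to get $v-s\le\lfloor -s/3\rfloor\le -1$.

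The two routes part ways on the tail. The paper applies the bound to every term of the formal series, which implicitly uses Lemma~\ref{lem:hatQr-order} for block sizes beyond $r_{\max}$. It then passes directly from ``each term is $O(m^{-1})$'' to $\epsilon^{(k)}=O(m^{-1})$; that step is justified only by its convention that the level-ordered series is an asymptotic expansion in $m$.

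You instead restrict the counting to level $d=1$. There $s\in\{2,3\}$, blocks have size at most $4\le r_{\max}$, and there are finitely many bipartitions. You then make the remainder the responsibility of the result of \citet{ogdenErrorLaplaceApproximations2021}, whose Conditions~1--2 are hypotheses of the lemma. Your version is consistent with the finite $r_{\max}$ in \textbf{A4}(ii) and makes explicit where remainder control comes from. The paper's version is shorter and records the general level-$d$ order $O(m^{-d})$, which is the template for the later derivative bounds (Lemma~\ref{lem:order-depsilon-vector}).

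Three points to tighten.

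\emph{Your check (c) does not follow from \textbf{A4}(i).} Uniqueness of the maximiser gives no uniform gap between $Q$ at the mode and its supremum outside the ball. Log-concavity near the mode is only local. The Gaussian factor contributes $O(1)$ to $mQ$, so it cannot make the tail mass exponentially small in $m$. Either take Ogden's conditions as given, as the statement does, or add a separation (or global concavity) assumption.

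\emph{What the cited result delivers.} If it controls only the terms of the formal expansion rather than the actual error, your Step~2 rests on the same asymptotic-expansion interpretation the paper uses. It is then not an independent estimate.

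\emph{The announced remainder order.} A first-order bound $\epsilon^{(k)}=O(m^{-1})$ only yields $R_2^{(k)}=\epsilon^{(k)}-\epsilon_1^{(k)}=O(m^{-1})$. So the $O(m^{-2})$ remainder announced in your opening paragraph is not established, although the lemma does not need it.
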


\begin{proof}
    Analogously to Supplementary file of \cite{ogdenAsymptoticValidityNaive2017a}, and using Eq.$(4)$ in \citet{SMcC95},
the error in the Laplace approximation to the log-likelihood $\ell^{(k)}$ can be written as a series as in Equation~\eqref{eq:eps-error}.

Using Lemma~\ref{lem:hatQr-order},

\begin{equation}
    \label{eq:order-of-qr}
    \|\boldsymbol{\hat{Q}}_{r} (\tl)\|_{\max} 
    = 
    \bigo{m} \mbox{ for } r\ge 2.
\end{equation}

Note that $\boldsymbol{D}_{\bz}^r Q(\bt^l, \bztkh)= \boldsymbol{0}$ is identically zero when $r=1$ and corresponds to the value $Q(\bt^l, \bztkh)$, already present in the leading Laplace term, when $r=0$. Hence, we focus on $r \ge 2$. Using Lemma~\ref{lem:Q2inv-order}, uniformly in $k$ and $l$, 
$\enorm{\boldsymbol{\hat{Q}}_2(\tl)^{-1}}=\bigo{m^{-1}}$.
Hence, for any admissible partitions $(P,T)$ with $v$ blocks and the corresponding $s$ in Equation~\eqref{eq:eps-error}, with dimension of the integral ($q$) fixed, the fully contracted scalar result satisfies
$$
Q_{P,T}(\tl) = \bigo{m^{v-s}} = \bigo{m^{-d}}.
$$
Since there are $2s$ indices in total and each of $v$ blocks' size is at least $3$, 
we have that $|j_{p_i}| \ge 3, i = 1, \ldots, v$, $s \ge 2$, hence $2s = \sum_{i=1}^{v} |j_{p_i}| \ge 3v \Leftrightarrow v \le 2s/3$ and 
    $$
    v \le \left\lfloor 2s/3 \right\rfloor \Leftrightarrow v - s \le \left\lfloor 2s/3 \right\rfloor - s = \left\lfloor -s/3 \right\rfloor.
    $$
    This is a non-increasing piecewise constant (step) function, and for $s\ge 2$, the maximum is attained at $s\in \left\{2,3\right\}$ and is equal to $-1$. Putting it differently, among admissible partitions, the largest possible $-d=v-s$ is $-1$. Consequently, each term in Equation~\eqref{eq:eps-error} is at most of order $\bigo{m^{-1}}$, and therefore $Q_{P,T}(\tl) = \bigo{m^{-1}}$ and $\epsk[\bt^l] = \bigo{m^{-1}}$. This aligns with the results obtained in \citet{Bianconcini2014}, Appendix A, where the expansion in Equation~$(4)$ of \citet{SMcC95} is applied for AGH-based estimators. 
\end{proof}

\begin{lemma}[Mode equation and derivative of $\bztkh(\tl)$]
    \label{lem:mode-deriv-vector}
    For each $k$, we have 
    \begin{equation}
    \label{eq:dzhat-vector-general}
    \frac{d}{d \tl}\bztkh(\tl)
    =
    -\Bigl\{\boldsymbol{D}_{\bz}^2 Q(\bt^{l}, \bztkh)\Bigr\}^{-1}
    \partial_{\tl}\boldsymbol{D}_{\bz} Q(\bt^{l}, \bztkh).
    \end{equation}

    Moreover, 

    \begin{equation}
    \label{eq:dzhat-vector-gamma}
    \frac{d}{d \tl}\bztkh(\tl)
    =
    \Gamma_k(\tl)^{-1}
    \snv \partial_{\tl}
    \Bigl\{\boldsymbol{D}_{\bz}\elk (\bt^{l}, \zhk )\Bigr\},
    \end{equation}
    where
    \begin{equation}
    \label{eq:Gamma-k-def}
    \Gamma_k(\tl)
    :=
    -\snv \boldsymbol{D}_{\bz}^2\elk (\bt^{l}, \zhk)  
    +\boldsymbol{I}_q
    =
    -m\,\boldsymbol{D}_{\bz}^2 Q(\bt^{l}, \bztkh).
    \end{equation}
\end{lemma}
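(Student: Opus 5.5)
The plan is to differentiate the mode equation \eqref{Eq. FOC_Q} implicitly with respect to $\tl$. By \textbf{A4}~(\ref{ass:unique-optimizer}), for each layer $k$ and each $\tl$ the mode $\bztkh(\tl)$ is the unique interior maximiser of $Q(\bt^l,\cdot)$. It therefore satisfies the identity
\[
\boldsymbol{D}_{\bz} Q\bigl(\bt^{l}, \bztkh(\tl)\bigr) = \boldsymbol{0},
\]
and this identity holds for every $\tl$ in a neighbourhood.

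\emph{Step 1 (existence and smoothness).} First I will check that the implicit function theorem applies to the map $(\tl,\bzt)\mapsto \boldsymbol{D}_{\bz}Q(\bt^l,\bzt)$ at the point $(\tl,\bztkh(\tl))$.

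Continuous differentiability in $\bzt$ on the ball $\Ball{\zhatk(\tl)}{\epsilon}$ follows from \textbf{A4}~(\ref{ass:smooth-z}) with $r=2,3$. Differentiability in $\tl$ of $\boldsymbol{D}_{\bz}Q$ follows from \textbf{A4}~(\ref{ass:mixed-derivs}) with $r=1$. The Jacobian in $\bzt$ is $\boldsymbol{D}_{\bz}^2 Q(\bt^l,\bztkh) = -m^{-1}\boldsymbol{\hat{Q}}_2(\tl)$. By \textbf{A4}~(\ref{ass:curvature-mode}) it is negative definite, with eigenvalues bounded away from zero, so it is invertible.

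Hence $\bztkh(\tl)$ is locally a $C^1$ function of $\tl$. Uniqueness of the mode ensures that the branch given by the implicit function theorem coincides with the maximiser.

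\emph{Step 2 (chain rule).} Differentiating the identity in $\tl$ gives
\[
\partial_{\tl}\boldsymbol{D}_{\bz}Q(\bt^l,\bztkh) + \boldsymbol{D}_{\bz}^2 Q(\bt^l,\bztkh)\,\frac{d}{d\tl}\bztkh(\tl)=\boldsymbol{0}.
\]
Solving with the inverse from Step 1 yields \eqref{eq:dzhat-vector-general}.

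\emph{Step 3 (rewrite in terms of $\Gamma_k$).} From \eqref{Eq: Q_gen},
\[
m\,\boldsymbol{D}_{\bz}Q = \snv \boldsymbol{D}_{\bz}\elk - \bzt, \qquad m\,\boldsymbol{D}_{\bz}^2Q=\snv \boldsymbol{D}_{\bz}^2\elk - \boldsymbol{I}_q .
\]
The second identity gives \eqref{eq:Gamma-k-def}. Since the Gaussian prior term does not depend on $\tl$, we also have
\[
\partial_{\tl}\boldsymbol{D}_{\bz}Q = m^{-1}\snv\partial_{\tl}\boldsymbol{D}_{\bz}\elk .
\]
Substituting both expressions into \eqref{eq:dzhat-vector-general} makes the factors of $m$ cancel, and the two minus signs cancel as well. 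This gives \eqref{eq:dzhat-vector-gamma}.

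\emph{Main obstacle.} The only delicate point is notational. The derivative $\partial_{\tl}$ acting on $\boldsymbol{D}_{\bz}\elk(\bt^l,\zhk)$ must be read as the partial derivative taken with $\bz$ held fixed and then evaluated at the mode, not as a total derivative. The implicit-dependence term is exactly what the chain rule in Step 2 has already separated out.

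A secondary point is the regularity in Step 1: the needed differentiability must hold uniformly on the ball around the mode. This is provided by \textbf{A4}~(\ref{ass:smooth-z})--(\ref{ass:mixed-derivs}), whose constants are independent of $k$, so the conclusion holds for each $k$.
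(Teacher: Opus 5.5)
Your proposal is correct and follows the paper's proof. Like the paper, you differentiate the first-order condition $\boldsymbol{D}_{\mathbf{z}}Q=\mathbf{0}$ in $\theta$ by the chain rule, solve for the derivative of the mode, and substitute $m\,\boldsymbol{D}^2_{\mathbf{z}}Q=-\Gamma_k(\theta)$ and $m\,\partial_{\theta}\boldsymbol{D}_{\mathbf{z}}Q=\sum_{i\neq j}\partial_{\theta}\boldsymbol{D}_{\mathbf{z}}\ell^{(k)}_{ij}$, so the factors of $m$ and the signs cancel. Your Step~1 adds an implicit-function-theorem justification the paper leaves implicit, but matching that local branch to the global maximiser needs continuity of $\theta\mapsto\hat{\mathbf{z}}^{(k)}_{(2)}(\theta)$ (e.g.\ via a maximum-theorem argument using the coercive Gaussian term), not uniqueness of the mode alone.
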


\begin{proof}
By the first order condition we have that $\boldsymbol{D}_{\bz}Q(\bt^l, \bztkh) = \mathbf 0$. Differentiating this identity with respect to $\tl$ and applying the multivariate chain rule, we obtain:

$$
\frac{d}{d \tl}    \boldsymbol{D}_{\bz}Q(\bt^l, \bztkh) 
=
\Bigl\{ \boldsymbol{D}_{\bz}^2 Q(\bt^l, \bztkh)\Bigr\} \frac{d}{d \tl}\bztkh(\tl)
+
\partial_{\tl} \boldsymbol{D}_{\bz} Q(\bt^l, \bztkh) = \mathbf 0.
$$

Solving the expression above for $d\bztkh(\tl)/d\tl$ gives Equation~\eqref{eq:dzhat-vector-general}. Using the definitions of $Q$ and $\Gamma$ functions yields

$$
\boldsymbol{D}_{\bz}  Q(\bt^l, \bztkh)
=
m^{-1}\left\{\snv   \boldsymbol{D}_{\bz}  \elk(  \bt^l, \bzk)-\bztkh\right\},
$$

hence

$$
\boldsymbol{D}_{\bz}^2 Q(\bt^l, \bztkh)
=
m^{-1}\left\{\snv \boldsymbol{D}_{\bz}^2\elk( \bt^l,\zhk) 
-\boldsymbol{I}_q\right\}
=
-m^{-1}\Gamma_k(\tl),
$$

and
$$
\partial_{\tl} \boldsymbol{D}_{\bz} Q(\bt^l, \bztkh) 
=
m^{-1}\snv \partial_{\tl}\Bigl\{\boldsymbol{D}_{\bz}\elk( \bt^l, \zhk )\Bigr\}.
$$

Substituting the last expression into Equation~\eqref{eq:dzhat-vector-general} gives Equation~\eqref{eq:dzhat-vector-gamma}.
\end{proof}

%--------------------------------------------------------
% Orders
%--------------------------------------------------------

\setcounter{theorem}{0} % make sure Corollary numbering starts from 1 

\begin{corollary}[Order of $d\bztkh(\tl)/d\tl$]
    \label{cor:order-dzhat-vector}
    Let $m_{\tl}:=|\mathcal I_{\tl}|$ with

    $$
    \mathcal I_{\tl}:=\{(i,j):i\neq j,\ \elk(\cdot)\  \text{depends on } \tl \}.
    $$

    Under Assumption~\textbf{A4}~(\ref{ass:mixed-derivs})-(\ref{ass:curvature-mode}), we have

    \begin{equation}
        \label{eq:order-dzhat-vector}
        \frac{d}{d \tl}\bztkh(\tl)=\bigo{\frac{m_{\tl}}{m}},
    \end{equation}

    uniformly in $k$ and $l$. In particular, $m_{\tl}=1$ for dyad-specific $\alpha_{0,ij}, \alpha_{(2),ij}, \varphi_{ij}$ (or dyad-specific $\beta_{ij}^{(l_x)}$) and $m_{\tl}=m$ for global $\bg^{(l_w)}$, $l_x = 1, \ldots, L_x$, $l_w = 1, \ldots, L_w$.
\end{corollary}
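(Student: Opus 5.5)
The plan is to start from the representation in Lemma~\ref{lem:mode-deriv-vector}, Equation~\eqref{eq:dzhat-vector-gamma}, namely
$d\bztkh(\tl)/d\tl = \Gamma_k(\tl)^{-1}\snv \partial_{\tl}\{\boldsymbol{D}_{\bz}\elk(\bt^l,\zhk)\}$, and bound the two factors separately. The bound will come from a sup-norm inequality of the form $\|\boldsymbol{A}\boldsymbol{v}\|\le \mnorm{\boldsymbol{A}}\,\|\boldsymbol{v}\|$, with $\mnorm{\cdot}$ the maximum absolute row sum.

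\textbf{First factor.} By \eqref{eq:Gamma-k-def}, $\Gamma_k(\tl) = -m\,\boldsymbol{D}^2_{\bz}Q(\bt^l,\bztkh) = \boldsymbol{\hat{Q}}_2(\tl)$. Lemma~\ref{lem:Q2inv-order}, which rests on Assumption~\textbf{A4}~(\ref{ass:curvature-mode}), then gives $\mnorm{\Gamma_k(\tl)^{-1}}\le \sqrt{q}/(cm)=\bigo{m^{-1}}$, uniformly in $k$ and $l$.

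\textbf{Second factor.} The key observation is sparsity in the dyad sum. For $(i,j)\notin\mathcal I_{\tl}$ the summand $\elk$ does not depend on $\tl$, so $\partial_{\tl}\boldsymbol{D}_{\bz}\elk\equiv\boldsymbol{0}$. The sum therefore reduces to $\sum_{(i,j)\in\mathcal I_{\tl}}\partial_{\tl}\boldsymbol{D}_{\bz}\elk(\bt^l,\zhk)$.

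\begin{itemize}
\item Since $\zhk$ lies in its own ball $\Ball{\zhatk(\tl)}{\epsilon}$, Assumption~\textbf{A4}~(\ref{ass:mixed-derivs}) with $r=1$ bounds each surviving entry by $C_{1,1}$.
\item Hence the sum has $\mnorm{\cdot}\le m_{\tl}C_{1,1}$, uniformly in $k$, $l$ and $\bt\in\boldsymbol{\Theta_0}$.
\end{itemize}

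\textbf{Combining.} Putting the two factors together gives
$\mnorm{d\bztkh/d\tl}\le \frac{\sqrt{q}}{cm}\, q\, m_{\tl}C_{1,1}$. The factor $q$ appears because the row-sum norm of a $q\times q$ matrix acting on a $q$-vector is at most $q$ times the max entry; it may already be absorbed in the row-sum bound. With $q$ fixed this is $\bigo{m_{\tl}/m}$.

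\textbf{Counting $m_{\tl}$.} It remains to read off $m_{\tl}$ from the model taxonomy.
\begin{itemize}
\item $\alpha_{0,ij}$, each entry of $[\baa_{(2)}]_{\cdot,ij}$, $\varphi_{ij}$ and $\beta^{(l_x)}_{ij}$ enter only $\eta^{(k)}_{ij}$, or only $\ell_{ij}$, so $m_{\tl}=1$.
\item $\gamma^{(l_w)}$ enters every $\eta^{(k)}_{ij}$ through $\bg^\top\bw_{ij}$. Here one has to check that it genuinely appears in all $m$ terms. If some $w_{ij}^{(l_w)}=0$, then $m_{\tl}\le m$ anyway, and the bound is stated as an upper bound.
\end{itemize}

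\textbf{Main obstacle.} The delicate point is uniformity. Assumption~\textbf{A4}~(\ref{ass:mixed-derivs}) controls derivatives only on a ball around the mode, while $\zhk$ moves with $\tl$. I would first invoke the implicit function theorem: the Hessian is nonsingular by \textbf{A4}~(\ref{ass:curvature-mode}), and \textbf{A4}~(\ref{ass:unique-optimizer}) supplies the uniform radius $t$. This ensures the mode map is differentiable and stays in the region where the constants $C_{1,1}$ and $c$ apply. The bound then holds uniformly over $k$ and over $\tl\in\Theta_0$.
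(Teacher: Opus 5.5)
Your proposal is correct and follows essentially the same route as the paper: start from the representation in Lemma~\ref{lem:mode-deriv-vector}, restrict the dyad sum to $\mathcal{I}_{\theta}$, bound $\Gamma_k(\theta)^{-1}$ by $O(m^{-1})$ via Lemma~\ref{lem:Q2inv-order} and \textbf{A4}~(\ref{ass:curvature-mode}), bound the surviving sum by $O(m_{\theta})$ via \textbf{A4}~(\ref{ass:mixed-derivs}) with $r=1$, and multiply. Your extra factor $q$ is unnecessary, since the maximum row-sum norm is exactly the operator norm induced by the max norm, so $\|\Gamma_k^{-1}v\|\le \|\Gamma_k^{-1}\|\,\|v\|$ holds directly; your implicit-function-theorem remark only makes explicit the differentiability of the mode map, which the paper takes for granted in Lemma~\ref{lem:mode-deriv-vector}.
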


\begin{proof}
  First, note 
  $$
  \snv
  \partial_{\tl}
  \boldsymbol{D}_{\bz}  \elk(  \bt^l, \bzk)
  =
  \sum_{(i,j)\in\mathcal I_{\tl}}
  \partial_{\tl}
  \boldsymbol{D}_{\bz}  \elk(  \bt^l, \bzk)
  $$
  since the likelihood terms not depending on $\theta$ have no contribution to the sum.
    From Lemma~\ref{lem:mode-deriv-vector}, 

    $$
    \norm{\frac{d}{d \tl}\bztkh(\tl)}
    \le
    \norm{\Gamma_k(\tl)^{-1}}
    \norm{\sum_{(i,j)\in\mathcal I_{\tl}}
    \partial_{\tl}\Bigl\{\boldsymbol{D}_{\bz}\elk \left(\bt^l, \bzk \right)\Bigr\}}.
    $$

    Under Assumption~\textbf{A4}~(\ref{ass:curvature-mode}), and using Lemma~\ref{lem:Q2inv-order}, $\norm{\boldsymbol{\hat{Q}}_2(\tl)^{-1}} = \norm{\Gamma_k(\tl)^{-1}} =\bigo{m^{-1}}$. 
    Using Assumption~\textbf{A4}~(\ref{ass:mixed-derivs}) bounds the sum by $\bigo{m_{\tl}}$.
   Multiplying all the terms yields Equation~\eqref{eq:order-dzhat-vector}. 
\end{proof}

\setcounter{theorem}{4} % make sure Lemma numbering continues from 5

\begin{lemma}[Orders of $\partial_{\tl} \boldsymbol{\hat{Q}}_r({\tl})$, and $\boldsymbol{\hat{Q}}_r'({\tl})$]
    \label{lem:orders-hatQr-vector}

    Under Assumption~\textbf{A4}, uniformly in $k$ and $l$, it holds

    \begin{align}
        \label{eq:hatQr-order-vector}
        \boldsymbol{\boldsymbol{\hat{Q}}}_{r} (\tl)&=\bigo{m},\\
        \label{eq:partial-hatQr-order-vector}
        \partial_{\tl}\boldsymbol{\boldsymbol{\hat{Q}}}_{r}\bigl(\tl\bigr)&=\bigo{m_{\tl}},\\
        \label{eq:hatQr-prime-order-vector}
        \boldsymbol{\boldsymbol{\hat{Q}}}_{r}' (\tl)=\frac{d}{d \tl}\boldsymbol{\boldsymbol{\hat{Q}}}_{r} (\tl)&=\bigo{m_{\tl}}.
    \end{align}
\end{lemma}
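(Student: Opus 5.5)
The first bound, $\boldsymbol{\hat{Q}}_r(\tl)=\bigo{m}$, is exactly Lemma~\ref{lem:hatQr-order}, so I will simply invoke it. The other two bounds come from writing $\boldsymbol{\hat{Q}}_r(\tl)$ as a sum over dyads and differentiating. For $r\ge3$, the proof of Lemma~\ref{lem:hatQr-order} gives
$$\boldsymbol{\hat{Q}}_r(\tl)=-\snv D^r_{\bz}\elk(\bt^l,\bzk)\big\vert_{\bzk=\zhk(\tl)}.$$
For $r=2$ there is an extra constant $\boldsymbol{I}_q$, which does not depend on $\tl$ and so vanishes under differentiation. Throughout I keep the distinction between the partial derivative $\partial_{\tl}$, taken at a frozen mode $\bztkh$, and the total derivative $d/d\tl$, which also moves the mode.

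\emph{Partial derivative.} Only dyads in $\mathcal I_{\tl}$ depend on $\tl$, so
$$\partial_{\tl}\boldsymbol{\hat{Q}}_r(\tl)=-\sum_{(i,j)\in\mathcal I_{\tl}}\partial_{\tl}D^r_{\bz}\elk(\bt^l,\zhk).$$
Each summand has entries bounded by $C_{r,1}$, by Assumption~\textbf{A4}~(\ref{ass:mixed-derivs}); this applies because the mode lies in the ball $\Ball{\zhatk(\tl)}{\epsilon}$. Summing at most $m_{\tl}$ such terms gives $\enorm{\partial_{\tl}\boldsymbol{\hat{Q}}_r(\tl)}\le m_{\tl}C_{r,1}=\bigo{m_{\tl}}$, uniformly in $k$ and $l$.

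\emph{Total derivative.} By the chain rule,
$$\boldsymbol{\hat{Q}}_r'(\tl)=\partial_{\tl}\boldsymbol{\hat{Q}}_r(\tl)-\snv D^{r+1}_{\bz}\elk(\bt^l,\zhk)\Big[\frac{d}{d\tl}\bztkh(\tl)\Big],$$
where the second term is the $(r+1)$-tensor contracted in one index with the vector $d\bztkh/d\tl$. The first term is $\bigo{m_{\tl}}$ from the previous paragraph. For the second, Assumption~\textbf{A4}~(\ref{ass:smooth-z}) with order $r+1$ bounds every entry of the sum by $mC_{r+1}$. Corollary~\ref{cor:order-dzhat-vector} gives $\norm{d\bztkh/d\tl}=\bigo{m_{\tl}/m}$. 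Since $q$ is fixed, the contraction has at most $q$ terms per entry, so each entry is at most $q\cdot mC_{r+1}\cdot\bigo{m_{\tl}/m}=\bigo{m_{\tl}}$. Adding the two terms gives \eqref{eq:hatQr-prime-order-vector}.

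\emph{Main obstacle.} The delicate step is the order bookkeeping for $r=r_{\max}$. The chain rule then needs the $(r_{\max}+1)$-th $\bz$-derivative, which Assumption~\textbf{A4}~(\ref{ass:smooth-z}) only covers up to $r_{\max}$. I would handle this by applying the lemma only for $r\le r_{\max}-1$. This is enough because the leading levels of the expansion \eqref{eq:eps-error} use only low-order tensors ($r\le 6$ for $d\le2$), so one can take $r_{\max}$ large enough. A second point to check is that the mode stays inside the neighbourhood where the uniform bounds hold as $\tl$ varies. This follows from the continuity of $\bztkh(\tl)$ (Lemma~\ref{lem:mode-deriv-vector}) together with Assumption~\textbf{A4}~(\ref{ass:unique-optimizer}).
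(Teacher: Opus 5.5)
Your proposal is correct and follows the paper's proof step for step. Both use Lemma~1 for the first bound, then restrict the dyad sum to $\mathcal I_{\theta}$ and apply \textbf{A4}(iii) for the partial derivative. For the total derivative, both use the chain rule with the $(r+1)$-th order tensor contracted against $d\hat{\mathbf z}^{(k)}_{(2)}/d\theta$, bounded through Lemma~1 and Corollary~1.

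You also flag, as the paper does, that the total-derivative bound only holds for $2\le r\le r_{\max}-1$. Your extra concern about the mode leaving the neighbourhood is unnecessary, because the balls in \textbf{A4} are centred at $\hat{\mathbf z}^{(k)}(\theta)$ itself.
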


\begin{proof}

By Lemma~\ref{lem:hatQr-order}, $\boldsymbol{\boldsymbol{\hat{Q}}}_{r} (\tl)=\bigo{m}$.
The partial derivative $\partial_{\tl}$ treats $\bzth$ as fixed, so only dyads in $\mathcal I_{\tl}$ contribute:

$$
\partial_{\tl} \boldsymbol{\boldsymbol{\hat{Q}}}_{r} (\tl)
=
- \sum_{(i,j)\in\mathcal I_{\tl}}
\partial_{\tl}\,\boldsymbol{D}_{\bz}^r \elk ( \bt^l,  \bzk),
$$

hence, using the max-entry norm and the triangle inequality,

\begin{align*}
    \enorm{\partial_{\tl}\boldsymbol{\boldsymbol{\hat{Q}}}_{r} (\tl)}
    &\le
    \sum_{(i,j)\in\mathcal I_{\tl}}
    \enorm{\partial_{\tl}\,\boldsymbol{D}_{\bz}^r \elk( \bt^l, \bzk)} 
    \le
    \sum_{(i,j)\in\mathcal I_{\tl}}
    \sup_{\bz \in \Ball{\zhatk(\bt)}{\epsilon}}
    \enorm{\partial_{\tl}\,\boldsymbol{D}_{z}^r \elk( \bt^l, \bzk)} \\
    & \le
    \sum_{(i,j)\in\mathcal I_{\tl}} C_{r,1}
    =
    m_{\tl}\,C_{r,1},
\end{align*}

and therefore

$$
\enorm{\partial_{\tl}\boldsymbol{\boldsymbol{\hat{Q}}}_{r} (\tl)}
=
\bigo{m_{\tl}},
$$
which proves Equation~\eqref{eq:partial-hatQr-order-vector}.

To prove Equation~\eqref{eq:hatQr-prime-order-vector}, we apply the chain rule in tensor form for the total derivative. 

\begin{align}
\label{hatQr-prime}
    \boldsymbol{\boldsymbol{\hat{Q}}}_{r}' (\tl)
    & = \partial_{\tl}\boldsymbol{\boldsymbol{\hat{Q}}}_{r} (\tl)
    + 
    \sum_{a=1}^q 
 \boldsymbol{\hat{Q}}_{r,a}
    \frac{d}{d \tl}{\hat{z}}_{(2),a}^{(k)}(\tl),
\end{align}

where $\boldsymbol{z}_{(2),a}$ identifies the $a$ column of the matrix $\boldsymbol{z}_{(2)}$, and $\boldsymbol{\hat{Q}}_{r,a} = - m \partial_{\bz_{(2),a}} \boldsymbol{D}_{\bz}^r Q\left(\bt^l, \bztkh\right)$. 
The first term in the right-hand side of~Equation~\eqref{hatQr-prime} is $\bigo{m_{\tl}}$ as above. The second term has order

$$
\boldsymbol{\boldsymbol{\hat{Q}}}_{r+1}(\tl) 
\norm{d\hat\bz_2^{(k)}(\tl)/d\tl} 
= 
\bigo{m}\cdot\bigo{m_{\tl}/m}=\bigo{m_{\tl}}
$$

by Equation~\eqref{eq:hatQr-order-vector} and Corollary~\ref{cor:order-dzhat-vector}.
Therefore, $\boldsymbol{\boldsymbol{\hat{Q}}}_{r}' (\tl)=\bigo{m_{\tl}}$. 
Note that for a chosen $r_{\max}$, under \textbf{A4}, the bounds for $\boldsymbol{\hat{Q}}_r$ and its partial derivative hold for $2\le r\le r_{\max}$. The bound for its total derivative holds for $2\le r\le r_{\max}-1$.
\end{proof}

\setcounter{theorem}{1} % make sure Corollary numbering continues from 2

\begin{corollary}[Order of $dQ_{P,T}(\tl)/d\tl$]
\label{lem:order-dQP-vector}
Under the conditions of Lemma~\ref{lem:orders-hatQr-vector} and Assumption~\textbf{A4}~(\ref{ass:curvature-mode}),
\begin{equation}
\label{eq:order-dQP-vector}
\frac{d}{d \tl}Q_{P,T}(\tl)=\bigo{m_{\tl}\,m^{\,v-s-1}} = \bigo{m_{\tl}\,m^{\,-d-1}}.
\end{equation}
Consequently, since $v-s-1\le -2$ for all $s\ge 2$ (because 
$2s=\sum_{i=1}^v |j_{p_i}| \ge 3v$),
we have

$$
\frac{d}{d \tl}Q_{P,T}(\tl)=
\begin{cases}
\bigo{m^{-2}}, & m_{\tl}=1\ \text{(dyad-specific $\baa, \bphi$ or dyad-specific $\bb$)},\\[1mm]
\bigo{m^{-1}}, & m_{\tl}=m\ \text{(global $\bg$)}.
\end{cases}
$$
\end{corollary}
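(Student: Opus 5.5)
The plan is to differentiate the contracted product $Q_{P,T}(\tl)=\sum_{j}\boldsymbol{\hat{Q}}_{j_{p_1}}(\tl)\cdots\boldsymbol{\hat{Q}}_{j_{p_v}}(\tl)\,\boldsymbol{\hat{Q}}^{j_{t_1}}\cdots\boldsymbol{\hat{Q}}^{j_{t_s}}$ with the product (Leibniz) rule and bound each resulting term. Since $q$ is fixed, the sum over $j\in[1:q]^{2s}$ has finitely many ($q^{2s}$) entries, so it suffices to bound a single generic summand uniformly in $k$ and $l$. By Leibniz, $dQ_{P,T}/d\tl$ is a sum of $v+s$ terms. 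In each term exactly one factor is replaced by its total derivative in $\tl$ and the remaining factors are unchanged.

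The first group consists of the $v$ terms where the derivative falls on a block factor $\boldsymbol{\hat{Q}}_{j_{p_i}}$. By Lemma~\ref{lem:orders-hatQr-vector}, Equation~\eqref{eq:hatQr-prime-order-vector}, the total derivative $\boldsymbol{\hat{Q}}'_{|p_i|}(\tl)$ is $O(m_{\tl})$. The other $v-1$ block factors are $O(m)$ each, by Lemma~\ref{lem:hatQr-order}. The $s$ inverse-Hessian entries are $O(m^{-1})$ each, by Lemma~\ref{lem:Q2inv-order}. Multiplying, each such term is $O(m_{\tl}\,m^{v-1}\,m^{-s})=O(m_{\tl}m^{v-s-1})$.

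The second group consists of the $s$ terms where the derivative falls on an entry of $\boldsymbol{\hat{Q}}_2^{-1}$. Here I use the identity
\[
\frac{d}{d\tl}\boldsymbol{\hat{Q}}_2^{-1}=-\boldsymbol{\hat{Q}}_2^{-1}\,\boldsymbol{\hat{Q}}_2'\,\boldsymbol{\hat{Q}}_2^{-1}.
\]
By Lemma~\ref{lem:Q2inv-order} and Lemma~\ref{lem:orders-hatQr-vector} with $r=2$, and because $q$ is fixed (so matrix products of entrywise bounds only add constant factors), this derivative is $O(m^{-1}\,m_{\tl}\,m^{-1})=O(m_{\tl}m^{-2})$ entrywise. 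The remaining factors contribute $m^{v}\,m^{-(s-1)}$, so each term is again $O(m_{\tl}m^{v-s-1})$.

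Summing the finitely many terms gives $dQ_{P,T}/d\tl=O(m_{\tl}m^{v-s-1})=O(m_{\tl}m^{-d-1})$. For the case split I use the counting argument from the proof of Lemma~\ref{lem:diff-ll}. Every block of $P$ has at least three elements, so $2s\ge 3v$ and $v-s\le\lfloor -s/3\rfloor\le -1$ for $s\ge2$. Hence $v-s-1\le -2$. Substituting $m_{\tl}=1$ for dyad-specific parameters gives $O(m^{-2})$, and $m_{\tl}=m$ for the global $\bg$ gives $O(m^{-1})$. The values of $m_{\tl}$ are those recorded in Corollary~\ref{cor:order-dzhat-vector}.

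The main technical point is the regularity needed for the total derivative of the block factors. Lemma~\ref{lem:orders-hatQr-vector} gives the bound on $\boldsymbol{\hat{Q}}_r'$ only for $r\le r_{\max}-1$, because the chain-rule term involves $\boldsymbol{\hat{Q}}_{r+1}$. So for a fixed level $d$, only finitely many block sizes appear, and I would choose $r_{\max}$ large enough to cover them (for $d=1$, $|p_i|\le 4$, so $r_{\max}=5$ suffices). I would also state that the bounds are uniform in $k$ and $l$ because all the constants $C_r$, $C_{r,1}$, $c$ are independent of $(K,m)$.
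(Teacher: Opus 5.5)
Your proposal is correct and follows essentially the same route as the paper. Both arguments apply the Leibniz rule over the $v$ block factors and the $s$ inverse-Hessian entries, handle the latter via $-\hat{Q}_2^{-1}\hat{Q}_2'\hat{Q}_2^{-1}$, bound the factors with Lemmas~\ref{lem:hatQr-order}, \ref{lem:Q2inv-order} and~\ref{lem:orders-hatQr-vector}, and then use $2s\ge 3v$. Your explicit remark that $r_{\max}$ must exceed the largest block size ($r_{\max}\ge 5$ for $d=1$) makes precise a requirement that the paper only states later, in the proof of Lemma~\ref{lem:order-depsilon-vector}.
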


\begin{proof}
Let $\left[\boldsymbol{\hat{Q}}_2^{\prime}\right]_{cd}$ denote the $(c,d)$ entry of $\boldsymbol{\hat{Q}}_2^{\prime}$ matrix. Differentiate
$Q_{P,T}(\tl)
    = 
    \sum_{j \in \left[ 1:q\right]^{2s}} 
    \boldsymbol{\hat{Q}}_{j_{p_1}}(\tl) \ldots
  \boldsymbol{\hat{Q}}_{j_{p_v}}(\tl)
  \boldsymbol{\hat{Q}}^{j_{t_1}}  \ldots
  \boldsymbol{\hat{Q}}^{j_{t_s}} =
  \prod_{a=1}^{v} \boldsymbol{\hat{Q}}_{j_{p_a}}(\tl)
  \prod_{b=1}^{s} \boldsymbol{\hat{Q}}^{j_{t_b}}$:
\begin{align*}
  \frac{d}{d \tl}Q_{P,T}(\tl)
  & =
  \sum_{j \in \left[ 1:q\right]^{2s}}
  \Bigg\{
  \sum_{i=1}^{v} \left[ \boldsymbol{\hat{Q}}_{j_{p_i}}'(\tl)\prod_{j \in \left\{1, \ldots, v\right\} \backslash i}\boldsymbol{\hat{Q}}_{j_{p_j}}(\tl)
  \boldsymbol{\hat{Q}}^{j_{t_1}}  \ldots
    \boldsymbol{\hat{Q}}^{j_{t_s}}
    \right] \\
    & \qquad -
    \sum_{b=1}^{s} 
    \left[
      \left(
        \sum_{c=1}^{q} \sum_{d=1}^{q}
        \boldsymbol{\hat{Q}}^{j_{t_b} c}
        \left[\boldsymbol{\hat{Q}}_2^{\prime} (\tl)\right]_{cd}
        \boldsymbol{\hat{Q}}^{d j_{t_b}}
      \right)
      \prod_{a \in \left\{1, \ldots, s\right\} \backslash b}
      \boldsymbol{\hat{Q}}^{j_{t_a}} 
      \prod_{e=1}^{v} \boldsymbol{\hat{Q}}_{j_{p_e}}(\tl)
    \right]
    \Bigg\}.
\end{align*}

From Lemmas~\ref{lem:hatQr-order} and \ref{lem:orders-hatQr-vector}, it follows that $\boldsymbol{\hat{Q}}_{j_p}'(\tl)=\bigo{m_{\tl}}$, $\boldsymbol{\hat{Q}}_{j_p}(\tl)=\bigo{m}$, $\boldsymbol{\hat{Q}}^{j_t}=\bigo{m^{-1}}$, and $\boldsymbol{\hat{Q}}_2'(\tl)=\bigo{m_{\tl}}$. Each summand is $\bigo{m_{\tl}}\bigo{m^{v-1}}\bigo{m^{-s}}=\bigo{m_{\tl} m^{v-s-1}}$, and the term subtracted from the first has the same order. This yields Equation~\eqref{eq:order-dQP-vector}. 
\end{proof}
The above Lemmas control the differential of each term in Equation~Equation~\eqref{eq:eps-error}-Equation~\eqref{eq:eps-D-rem}. To conclude on the order of $\partial_{\tl} \epsilon_k (\boldsymbol{\theta}^l)$, we proceed as follows. We fix $D=1$, and first establish the order of the leading error term. We then establish the order of the remainder term and conclude on the final result.

\setcounter{theorem}{5} % make sure Lemma numbering continues from 6

\begin{lemma}[Order of $\partial_{\tl} \epsilon_k (\boldsymbol{\theta}^l)$]
    \label{lem:order-depsilon-vector}
    For $\epsk[\bt^l]$ as in Equation~\eqref{eq:eps-D-rem}, and under the conditions of Lemma~\ref{lem:order-dQP-vector},
    \begin{equation}
        \label{eq:order-depsilon-vector}
        \partial_{\tl}\epsk[\bt^l] =\bigo{\frac{m_{\tl}}{m^2}},
    \end{equation}

    uniformly in $k$ and $l$. In particular,

    $$
    \partial_{\tl}\epsk[\bt^l] 
    =
    \begin{cases}
    \bigo{m^{-2}}, & m_{\tl}=1\ \text{($\tl$ is an entry in a dyad-specific $\baa, \bphi$ or dyad-specific $\bb$)},\\[1mm]
    \bigo{m^{-1}}, & m_{\tl}=m\ \text{($\tl$ is an entry in a global $\bg$)}.
    \end{cases}
    $$
\end{lemma}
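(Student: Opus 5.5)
We use the truncated representation \eqref{eq:eps-D-rem} with $D=1$,
\[
\epsk[\bt^l]=\epsilon_1^{(k)}(\tl)+R_2^{(k)}(\tl),
\]
and differentiate both sides in $\tl$. This is legitimate because the terms and the remainder are assumed continuously differentiable. The goal is to show that $\partial_{\tl}\epsilon_1^{(k)}(\tl)=\bigo{m_{\tl}m^{-2}}$ and that $\partial_{\tl}R_2^{(k)}(\tl)$ is of no larger order, uniformly in $k$ and $l$.

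\emph{Leading term.} By \eqref{eq:eps-error-per-level-d}, $\epsilon_1^{(k)}$ is a finite sum of $\epsilon_{P,T}(\tl)=(-1)^{v}Q_{P,T}(\tl)/(2s)!$ over the connected level-one bipartitions. These have $s\in\{2,3\}$, with $v=1$ when $s=2$ and $v=2$ when $s=3$. Since $q$ is fixed, the number of such bipartitions and of index tuples in $[1:q]^{2s}$ is a constant independent of $(K,m)$. Each summand is differentiable, and Corollary~\ref{lem:order-dQP-vector} gives $dQ_{P,T}(\tl)/d\tl=\bigo{m_{\tl}m^{-d-1}}=\bigo{m_{\tl}m^{-2}}$ when $d=1$. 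Summing finitely many such terms yields $\partial_{\tl}\epsilon_1^{(k)}(\tl)=\bigo{m_{\tl}m^{-2}}$, uniformly in $k$ and $l$. This uniformity is inherited from Lemmas~\ref{lem:hatQr-order}--\ref{lem:orders-hatQr-vector}. Only derivatives up to order $6\le r_{\max}$ appear, so $r_{\max}$ must be chosen large enough that the total-derivative bound of Lemma~\ref{lem:orders-hatQr-vector} covers $r\le 6$.

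\emph{Remainder (main obstacle).} The formal series need not converge, so termwise differentiation cannot be applied to all levels directly. The plan follows the termwise argument of \citet{ogdenAsymptoticValidityNaive2017a}. For an arbitrary fixed $D$, the same corollary shows that each level-$d$ derivative is $\bigo{m_{\tl}m^{-d-1}}$, so each successive level gains a factor $m^{-1}$. We then use the remainder condition attached to \eqref{eq:eps-D-rem}: the remainder after level $D$, together with its $\tl$-derivative, has the order of the first omitted term. Concretely, we want $\partial_{\tl}R_{D+1}^{(k)}=\bigo{m_{\tl}m^{-D-2}}$. Taking $D=1$ gives $\partial_{\tl}R_2^{(k)}=\bigo{m_{\tl}m^{-3}}$, which is negligible.

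To make this bound rigorous rather than assumed, I would first represent $R_2$ via the integral remainder of the Taylor expansion of $Q$ around $\bztkh$ on $\mathcal B_k(\bztkh,t)$. Differentiating under the integral would then use \textbf{A4}~(\ref{ass:smooth-z})--(\ref{ass:mixed-derivs}), Corollary~\ref{cor:order-dzhat-vector} for the moving mode, and Gaussian tail bounds outside the ball from \textbf{A4}~(\ref{ass:curvature-mode}). The tail contribution is exponentially small in $m$. Controlling that tail uniformly in $k$ is the delicate point, and I expect it to be the main obstacle.

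\emph{Conclusion.} Adding the two pieces gives $\partial_{\tl}\epsk[\bt^l]=\bigo{m_{\tl}/m^2}$. Substituting $m_{\tl}=1$ for dyad-specific $\baa,\bphi,\bb$ gives $\bigo{m^{-2}}$, and substituting $m_{\tl}=m$ for global $\bg$ gives $\bigo{m^{-1}}$, both by Corollary~\ref{cor:order-dzhat-vector}.
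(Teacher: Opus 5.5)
Your proposal follows the paper's proof: you truncate at $D=1$, bound the derivative of $\epsilon_1^{(k)}$ termwise through Corollary~\ref{lem:order-dQP-vector} over a finite set of level-one bipartitions, and, like the paper (following \citet{ogdenAsymptoticValidityNaive2017a}), you simply assume that the differentiated remainder has the order $O(m_{\theta}m^{-3})$ of the first omitted term, so your Taylor/tail-bound sketch goes beyond what the paper does rather than being required. One small correction: the level-one bipartitions ($s=2,v=1$ and $s=3,v=2$) involve only $\boldsymbol{\hat{Q}}_3$, $\boldsymbol{\hat{Q}}_4$ and $\boldsymbol{\hat{Q}}_2^{-1}$, so the total-derivative bound of Lemma~\ref{lem:orders-hatQr-vector} is needed only for $r\le 4$, i.e.\ $r_{\max}\ge 5$ as in the paper, rather than for $r\le 6$ (order-six tensors first appear at level two).
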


\begin{proof}
    Using Equation~Equation~\eqref{eq:eps-D-rem} with $D=1$, 
    $$
    \epsk[\tl]
    =
    \epsilon_1^{(k)}(\tl)
    +
    R_{2}^{(k)}(\tl).
    $$
    Since all terms are differentiable,
$$
    \frac{d}{d \tl} \epsk[\tl]
    =
    \frac{d}{d \tl} \epsilon_1^{(k)}(\tl)
    +
    \frac{d}{d \tl} R_{2}^{(k)}(\tl).
    $$
    By Corollary~\ref{lem:order-dQP-vector}, with $r_{\max}\ge 5$, for each admissible $(s,P,T)$,
    $\frac{d}{d \tl}Q_{P,T}(\tl)=\bigo{m_{\tl} m^{v-s-1}}$ and $v-s-1\le -2$ for all $s\ge 2$.
    Thus, $\frac{d}{d \tl}Q_{P,T}(\tl)=\bigo{m_{\tl} m^{-2}}$ for each $(P,T)\in\mathcal{M}_1^C$. Now,
    $$
    \frac{d}{d \tl} \epsilon_1^{(k)}(\tl) 
    = 
    \frac{d}{d \tl}
    \sum_{P, T \in \mathcal{M}_1^C} 
    \frac{(-1)^{v (P)}}{(2s(P,T))!}
    Q_{P,T}(\tl)
    = 
    \sum_{P, T \in \mathcal{M}_1^C} 
    \frac{(-1)^{v (P)}}{(2s(P,T))!}
    \frac{d}{d \tl}
    Q_{P,T}(\tl).
    $$
    Hence, $\partial_{\tl}\epsilon_1^{(k)}(\tl) = \bigo{m_{\tl} m^{-2}}$. 
    By assuming the expansion is valid under parameter differentiation, and following upon \citet{ogdenAsymptoticValidityNaive2017a}, the derivative of the remainder term is asymptotically smaller than the last retained term of the formally differentiated expansion, hence $\partial_{\tl}R_{2}^{(k)}(\tl) = \bigo{m_{\tl} m^{-3}}$, and 
    $$
    \partial_{\tl}  \epsk[\tl]
    =
    \partial_{\tl} \epsilon_1^{(k)}(\tl)
    +
    \bigo{m_{\tl} m^{-3}} 
    = \bigo{m_{\tl} m^{-2}},
    $$
    proving Equation~\eqref{eq:order-depsilon-vector}.
\end{proof}

The result in Lemma~\ref{lem:order-depsilon-vector} is still for a univariate $\theta$. In the next Lemma we prove the result that we will need to control the error term for a vector of model parameters. To this end, let us
consider $\bt_E$, containing all edge-specific parameters, and its subvector $\bvt_E$, corresponding to a subset of dyads. Label the corresponding true population parameter $\bvt_{E,0}$. Consider furthermore the global block $\bt_G$ and its corresponding true population parameter $\bt_{G,0}$. Let $\bt_0$ be either $\bvt_{E,0}$ or $\bt_{G,0}$. We consider this partition to ensure the objects studied henceforth are of finite dimension. However, these results can be applied to all subsets of parameters. 

\begin{lemma}
\label{lem:delta-inf}
Under Corollary~\ref{cor:order-dzhat-vector} and Lemma
\ref{lem:order-depsilon-vector}, uniformly in $k$ and $\bvt \in\Ball{\bt_0}{t}$,

\begin{equation}
    \label{eq:order-delta}
    \delk
    =
    \norm{ \partial_{\bvt} \epsk[\bvt]}
    =
    \begin{cases} 
        \bigo{m^{-2}}, & \text{ for } \bvt=\bvt_E,\\[1mm]
        \bigo{m^{-1}}, &  \text{ for } \bvt=\bt_G.
    \end{cases}
\end{equation}

Furthermore, define
    
    $$
    \delta_{K,m}^{\infty} (\Ball{\bt_0}{t}) = \SupK[1 \le k \le K][\bvt \in \Ball{\bt_0}{t}] 
    \norm{ \partial_{\bvt} 
    \epsk[\bvt]}.
    $$
Then, for all $m\ge m_0$, there exist constants $C_E,C_G<\infty$ and $m_0<\infty$, such that 
\begin{equation}
    \label{eq:order-delta-inf}
    \delta_{K,m}^{\infty}(\Ball{\bt_0}{t})
    \le
    \begin{cases}
    C_E\,m^{-2}, & \text{ for } \bvt=\bvt_E,\\[1mm]
    C_G\,m^{-1}, & \text{ for } \bvt=\bt_G.
    \end{cases}
\end{equation}

\end{lemma}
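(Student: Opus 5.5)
The plan is to pass from the scalar bound of Lemma~\ref{lem:order-depsilon-vector} to the vector gradient by working coordinate by coordinate, and then to upgrade the pointwise orders to constants that are uniform in $k$ and in $\bvt\in\Ball{\bt_0}{t}$.

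\emph{Step 1: coordinatewise reduction.} Since the norm $\norm{\cdot}$ on vectors is the max norm, we have
$$\norm{\partial_{\bvt}\epsk[\bvt]}=\max_{l}|\partial_{\theta_l}\epsk[\bt^l]|,$$
where $l$ runs over the coordinates of $\bvt$. For $\bvt=\bt_G$ the dimension is $\pKMg$, which is fixed and independent of $(K,m)$. For $\bvt=\bvt_E$ we take, as the statement allows, a finite subvector corresponding to a fixed subset of dyads. In both cases the maximum runs over a number of indices that does not grow. The step is then to apply Lemma~\ref{lem:order-depsilon-vector} to each coordinate. Every coordinate of $\bvt_E$ is dyad-specific ($m_{\tl}=1$), so each partial derivative is $\bigo{m^{-2}}$. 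Every coordinate of $\bt_G$ (a $\bg$, a common $\bb$, or a global $\bphi$) has $m_{\tl}\le m$, so each partial derivative is $\bigo{m^{-1}}$. A maximum of finitely many such terms keeps the same order, which gives \eqref{eq:order-delta}.

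\emph{Step 2: uniformity.} To obtain \eqref{eq:order-delta-inf} I would trace the constants through the earlier lemmas and check that none depends on $k$, $K$ or $\bvt$.
\begin{itemize}
\item Lemma~\ref{lem:hatQr-order} gives the bound $mC_r$.
\item Lemma~\ref{lem:Q2inv-order} gives $\sqrt q/(cm)$.
\item Corollary~\ref{cor:order-dzhat-vector} and Lemma~\ref{lem:orders-hatQr-vector} give bounds built from $C_{r,1}$ and $C_r$.
\end{itemize}
Each of these constants comes from \textbf{A4}(ii)--(iv), which holds uniformly over $1\le k\le K$ and over $\bt\in\boldsymbol{\Theta_0}$. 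Choose $t$ small enough that $\Ball{\bt_0}{t}\subset\boldsymbol{\Theta_0}$; this is possible because $\bt_0$ is interior by \textbf{A3}. Then for each $(P,T)\in\mathcal M_1^C$ the derivative $dQ_{P,T}/d\tl$ is bounded by an explicit constant times $m_{\tl}m^{-2}$. The set $\mathcal M_1^C$ is finite, since it only involves $s\in\{2,3\}$, and the number of index tuples is at most $q^{6}$. Summing over these therefore gives a uniform constant for $\partial_{\tl}\epsilon_1^{(k)}$. Taking the maximum over the finitely many coordinates then yields $C_E$ and $C_G$.

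\emph{Main obstacle: the remainder.} The hardest part is the remainder $\partial_{\tl}R_2^{(k)}$. Lemma~\ref{lem:order-depsilon-vector} only asserts that it is $\bigo{m_{\tl}m^{-3}}$ for each layer, but here a single $m_0$ and a single constant must work for all $k$ and all $\bvt$ in the ball simultaneously. My first attempt would be to formulate the remainder condition attached to \eqref{eq:eps-D-rem} in a uniform way: there exist $C_R$ and $m_0$ such that $\sup_{k,\bvt}|\partial_{\tl}R_2^{(k)}|\le C_R m_{\tl}m^{-3}$ for all $m\ge m_0$. This is justified because the Ogden-type remainder bounds are controlled only by the constants in \textbf{A4} and by the radius $t$, all of which are uniform. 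Given that, the leading term and the remainder combine as
$$C_E=C_1+C_Rm_0^{-1},$$
with $C_1$ the uniform constant for the leading term, and analogously for $C_G$. Taking the supremum over $k$ and $\bvt$ then completes the proof.
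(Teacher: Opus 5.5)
Your proposal is correct and follows essentially the same route as the paper. You reduce the max-norm of the gradient to the coordinatewise bounds of Lemma~\ref{lem:order-depsilon-vector} (with $m_{\theta}=1$ for dyad-specific and $m_{\theta}\le m$ for global coordinates), then get uniform constants by taking the ball around $\boldsymbol{\theta}_0$ inside $\boldsymbol{\Theta_0}$ and tracing every bound back to $\{C_r, C_{r,1}, c, C, r_{\max}\}$ from \textbf{A4}. The only difference is that you state the uniform bound on $\partial_{\theta}R_2^{(k)}$ as an explicit hypothesis, whereas the paper relies on the ``uniformly in $k$ and $l$'' clause of Lemma~\ref{lem:order-depsilon-vector}; this makes the step more transparent without changing the argument.
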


\begin{proof}
  Recall $\norm{\bxx} = \mnorm{\bxx}=\max_i |x_i|$ is the maximum absolute value of a vector. Using Lemma~\ref{lem:order-depsilon-vector},  uniformly in $k$ and $l$. It follows that
  $$
  \norm{ \partial_{\bvt} \epsk[\bvt]}
  =
  \bigo{\frac{\max_{\bt^l\in\bvt}m_{\bt^l}}{m^2}}.
  $$
  Hence, the order of $\delk$ is
    $$
    \delk
    =
    \norm{ \partial_{\bvt} \epsk[\bvt]}
    =
    \begin{cases} 
        \bigo{m^{-2}}, & \text{ for } \bvt=\bvt_E,\\[1mm]
        \bigo{m^{-1}}, &  \text{ for } \bvt=\bt_G.
    \end{cases}
    $$

    uniformly in $k$ and uniformly for $\bvt$ in a neighbourhood of $\bt_0$, proving Equation~\eqref{eq:order-delta}. 

    Equation~Equation~\eqref{eq:order-delta-inf} follows by applying Assumption~\textbf{A4}~(\ref{ass:unique-optimizer})-(\ref{ass:curvature-mode}) and by choosing $t$ such that $\Ball{\bt_0}{t} \subset \Theta_0$. By Assumption~\textbf{A4}~(\ref{ass:smooth-z})-(\ref{ass:mixed-derivs}), for each derivative order $r \le r_{\max}$, the bounds on $\boldsymbol{D}_{\bz}^r \ell_{ij}^{(k)}$ and $\partial_{\tl} \, \boldsymbol{D}_{\bz}^r \ell_{ij}^{(k)}$ hold uniformly over $1 \le k \le K$ and $\bvt \in \ThKM$, hence in particular uniformly on $\bvt \in \Ball{\bt_0}{t}$.

    By Assumption~\textbf{A4}~(\ref{ass:curvature-mode}), the matrix $m^{-1} \boldsymbol{\hat{Q}}_2 (\tl)$ has eigenvalues uniformly bounded away from $0$ and infinity on $\Ball{\bt_0}{t}$, so by Lemma~\ref{lem:Q2inv-order} $\left\| \boldsymbol{\hat{Q}}_2 (\tl)^{-1}\right\| \le \sqrt{q}/(cm)$  uniformly in $k$ and $\bvt \in \Ball{\bt_0}{t}$. Therefore, every step in the pointwise bound can be written with constants depending only on $\left\{C_r, C_{r,1}, c, C, r_{\max}\right\}$. Consequently, there exist $m_0$ and finite constants $C_E, C_G$ such that, for all $m \ge m_0, 1 \le k \le K, \bvt \in \Ball{\bt_0}{t}$,

    $$
    \delta_{K,m}^{\infty} (\mathcal{B}(\bt_0,t))
    =
    \SupK[1 \le k \le K][\bvt \in \Ball{\bt_0}{t}] 
    \norm{\partial_{\bvt} \epsk[\bvt]} 
    \le
    \begin{cases}
        C_E\,m^{-2} \text{ (dyad-specific $\baa, \bphi$ or dyad-specific $\bb$)},\\[1mm]
        C_G\,m^{-1} \text{ (global $\bg$)}.
    \end{cases}
    $$
\end{proof}

\subsection{Proof of Theorem~\ref{thm:model1_double}} 
\label{proof-thm1}

\begin{proof}
    The arguments follow similar lines to those in \citet{Bianconcini2014} (Appendix B-C), \citet{V96}, and \citet{RVL09}, but need to be adapted to our multiview network setting, with the key additional step being control of the score approximation error, as in the approximate-likelihood theory of \citet{ogdenAsymptoticValidityNaive2017a}. The proof proceeds in three main steps, establishing consistency,  rate of convergence, and asymptotic normality. Define the effective information rates $n_{E,0}=K,$ and $n_{G,0}=Km$, with $m = \Theta(K^{\varrho})$, 
    and write $n_{0}$ for either edge-specific $(n_0= n_{E,0} )$ or global $(n_0=n_{G,0})$, depending on the chosen parameter. 

    \begin{enumerate}[(i)]
    \item \label{res:score-bias} \emph{Approximate score and bias order.} 
    First, 
    
    $$
    \tsco[\bvt]
    =
    \sco[\bvt]
    + 
    \partial_{\bvt} \epsk[\bvt],
    $$

    where using Lemma~\ref{lem:order-depsilon-vector} $\partial_{\bvt} \, \epsk[\bvt]$ is $\bigo{m^{-1}}$ for global parameters and $\bigo{m^{-2}}$ for edge-specific parameters uniformly in $\bvt$ and $\sco[\bvt]$ is the exact score. Summing across layers gives the total approximate score:

    $$
    \frac{1}{n_0} \sk \tsco[\bvt]
    =
    \frac{1}{n_0}\sk \sco[\bvt]
    + 
    \frac{1}{n_0}\sk \partial_{\bvt} \epsk[\bvt].
    $$

    By using Lemmas~\ref{lem:order-depsilon-vector} and \ref{lem:delta-inf},

    $$
    \delk[\bvt] = \left\| \partial_{\bvt}\epsk[\bvt]\right\| 
    = 
    \begin{cases}
    \bigo{m^{-2}}, & \bvt=\bvt_E,\\
    \bigo{m^{-1}}, & \bvt=\bt_G.
    \end{cases}
    $$

    This decomposition allows for a granular treatment of edge-dependent and edge-independent parameters. Indeed,
    $$
    \left\|
    \frac{1}{n_0}\sk \partial_{\bvt} \epsk[\bvt]
    \right\|
    =
    \left\|
    \frac{1}{n_0}\sk \partial_{\bvt} \epsk[\bvt] 
    \right\|
    \le
    \frac{K}{n_0}
    \sup_{\bvt \in \mathcal{B}(\vto, t)}\left\|\partial_{\bvt} \epsk[\bvt] \right\|.
    $$

    For $\bvt_E$, $K/n_{0}=1$, giving $\bigo{m^{-2}}$.
    For $\bt_G$, $K/n_{0}=1/m$, giving $(1/m)\,\bigo{m^{-1}}=\bigo{m^{-2}}$. Hence, in both cases

    $$
    \frac{1}{n_0}\sk \tsco[\vto]
    =
    \frac{1}{n_0}\sk \sco[\vto]
    + \bigo{m^{-2}}.
    $$

    As $m\to\infty$, the Laplace bias $\bigo{m^{-2}}$ vanishes; as $n_0 \to\infty$, sampling variability averages out.

    \item \label{res:consistency-rate}  \emph{Consistency and rate.}
A mean-value expansion of the block first-order condition $\tilde{\boldsymbol{\mathcal{S}}}_K(\tvtkm)= \boldsymbol{0}$ around $\bt_0$ yields
\[
\sk \tsco[\bt_0]
+
\left\{\sk \thes[\bar{\bt}]\right\}
(\tvtkm-\bt_0) = \boldsymbol{0},
\]
for some $\bar{\bt}$ on the line segment between $\tvtkm$ and $\bt_0$.
Rearranging,
\begin{equation}
\label{eq:basic_block_expansion}
\tvtkm-\bt_0
=
-\left[
\frac{1}{n_0}\sk \thes[\bar{\bt}]
\right]^{-1}
\left[
\frac{1}{n_0}\sk \tsco[\bt_0]
\right].
\end{equation}

    Now analyse the two components. We start from the score term. 

    From \ref{res:score-bias}, we have

    $$
    \frac{1}{n_0}
    \sk \tsco[\bt_0]
    =
    \frac{1}{n_0}
    \sk \sco[\bt_0]
    + \bigo{m^{-2}}.
    $$

Under the assumed CLT scaling, 
$(1/n_0)\sk \sco[\bt_0]$ is $\bigop{n_0^{-1/2}}$.  

    Since the second term is deterministic $\bigo{m^{-2}}$, the sum is:
    $$
    \frac{1}{n_0}
    \sk \tsco[\bt_0] = 
    \bigop{n_0^{-1/2}} 
    + \bigo{m^{-2 }} = 
    \bigop{\max\{n_0^{-1/2}, m^{-2}\}}.
    $$
    Next we look at the {Hessian term}. Under regularity conditions, the Hessian converges by the weak law of large numbers (WLLN), so: 
    $$
    \frac{1}{n_0}
    \sk \thes[\bt_0] \xrightarrow{p} -B_{\bvt}(\bt_0),
    $$
    which is non-singular. Thus, the inverse of 
    $n_0^{-1}\sk \thes[\bt_0]$ 
    is still $\bigop{1}$.

    Combining these results:
    \[
    \norm{\tvtkm-\bt_0}
    =
    \bigop{1} \cdot 
    \bigop{\max\{n_0^{-1/2}, m^{-2}\}}
    =
    \bigop{\max\{n_0^{-1/2}, m^{-2}\}}
    \]
    proving~(C\ref{thm-c1-cons-rate}). This establishes consistency as $n_0^{-1/2} \to 0$ and $m^{-2} \to 0$, namely under the double asymptotics where both the network size ($n_V$) and the number of network views ($K$) diverge. 

    \begin{remark}
      \label{rem:as-clt}
        It is important to notice the difference in the application of CLT for global and edge-specific parameters. By assumptions in \autoref{sec:model1_double} (see main paper), global parameters have a fixed dimension, hence the CLT applies as in the standard case. However, as the dimension of the edge-specific parameters is a function of number of edges $m$, one must rather consider high-dimensional applications of CLT. In their work, \citet{dasCentralLimitTheorem2021} developed optimal bound on the parameter dimension growth rate for the validity of CLT in high dimensions. In particular, and in application to GLAMLE, the critical rate of growth of parameter $\pKMe$ should be $\ln \pKMe = o(\sqrt{n_0})$. For edge-specific parameters, we have $n_0 = n_{E,0} = K$ with at worst $\pKMe = m(q+L_x +2) - \left(L_w + \frac{q(q+1)}{2} \right)=m C_{q,L} - C_q$, where $C_{q,L} = q+L_x +2$, $C_q = L_w + \frac{q(q+1)}{2}$ are constants since $L_x, L_w, q$ are fixed. Since $m = \Theta(K^\varrho)$, we have that 
\begin{align*}
    cK^\varrho \le m \le CK^\varrho & \Leftrightarrow c C_{q,L} K^\varrho \le m C_{q,L} \le C C_{q,L} K^\varrho
    \Leftrightarrow
    c C_{q,L} K^\varrho - C_q \le \pKMe \le C C_{q,L} K^\varrho - C_q\\
    & \Leftrightarrow
    \ln (c C_{q,L} K^\varrho - C_q)
    \le 
    \ln \pKMe 
    \le 
    \ln (C C_{q,L} K^\varrho - C_q) \\
    & \Leftrightarrow 
    \frac{\ln (c C_{q,L} K^\varrho - C_q)}{\sqrt{K}} 
    \le 
    \frac{\ln \pKMe }{\sqrt{K}} 
    \le 
    \frac{\ln (C C_{q,L} K^\varrho - C_q)}{\sqrt{K}}.
\end{align*}

For $m$ large enough, $m C_{q,L} - C_q >0$, hence $\ln \pKMe = \ln (m C_{q,L} - C_q)$ is well defined. 
Now consider 

\begin{align*}
    \frac{\ln \pKMe }{\sqrt{K}} 
    & \le 
    \frac{\ln (C C_{q,L} K^\varrho - C_q)}{\sqrt{K}} = \frac{\ln \left(C C_{q,L} K^\varrho \left(1- \frac{C_q}{C C_{q,L} K^\varrho} \right)\right)}{\sqrt{K}} \\
    & = \left( 
    \frac{\ln \left(C_{q,L}C\right)}{\sqrt{K}} 
    +
    \varrho \frac{\ln \left(K\right)}{\sqrt{K}}
    +
    \frac{\ln \left(1- \frac{C_q}{C C_{q,L} K^\varrho} \right)}{\sqrt{K}} \right) \to 0 \text{ as } K \to \infty,
\end{align*}
hence $\ln \pKMe = o(\sqrt{K})$ and as it satisfies the conditions mentioned in \citet{dasCentralLimitTheorem2021}, the CLT applies.
    \end{remark}
    \item \label{res:asymptotic-normality} \emph{Asymptotic normality.}
    To characterize the limiting distribution, we need to specify the relative growth of $m$ and $K$. 
    Assuming
    $m = \Theta(K^{\varrho})$ for some $\varrho > 0$,

    $$
    c K^{\varrho} \le m \le C K^{\varrho}
    \ \Longleftrightarrow\
    \frac{1}{C K^{\varrho}} \le \frac{1}{m} \le \frac{1}{c K^{\varrho}}
    \ \Longleftrightarrow\
    \frac{1}{(C K^{\varrho})^2} \le \frac{1}{m^2} \le \frac{1}{(c K^{\varrho})^2}.
    $$

    Consequently, we have two scenarios:

    \begin{enumerate}[(i)]
        \item global parameters: $n_0 = Km$ and $\bigop{n_0^{1/2}m^{-2}} = \bigop{K^{1/2-3/2\varrho}}$, 
        \item edge-specific parameters: $n_0 = K$ and $\bigop{n_0^{1/2}m^{-2}} = \bigop{K^{1/2-2\varrho}}$.
    \end{enumerate}

    Because $\varrho>0$, $1/2-2\varrho<1/2-3/2\varrho<0$, hence $\bigop{K^{1/2-3/2\varrho}}=o_p(1)$ and $\bigop{K^{1/2-2\varrho}}=o_p(1)$. Precisely, we need $\varrho >1/4$ for edge-specific parameters and $\varrho >1/3$ for global parameters.

    For~(C\ref{thm-c2-as-norm}), in the case of edge-specific parameters, we multiply Equation~\eqref{eq:basic_block_expansion} by $\sqrt{K}$ and use the same decomposition:
$$
\sqrt{K}(\tvtkm-\bt_0)
=
-\left[
\frac{1}{n_0}\sk \thes[\bar{\bt}]
\right]^{-1}
\left[
\frac{\sqrt{K}}{n_0}\sk \sco[\bt_0]
+
\frac{\sqrt{K}}{n_0}\sk \partial_{\bvt} \epsk[\bvt] 
\right].
$$
The first term in brackets is asymptotically normal by the assumed CLT.
For the second term, the bound above gives

$$
\norm{\frac{\sqrt{K}}{n_0}\sk \partial_{\bvt} \epsk[\bvt]}
\le
\frac{K^{3/2}}{n_0} \SupK[1 \le k \le K][\bvt \in \Ball{\bt_0}{t}]
\left\| \partial_{\bvt} \epsk[\bvt] \right\| 
= 
\frac{K^{3/2}}{n_0} \delta_{K,m}^{\infty} (\Ball{\bt_0}{t}).
$$

Using Lemma~\ref{lem:delta-inf}, for $\bvt_E$, this is $\sqrt{K}\,\bigo{m^{-2}} = \bigop{K^{1/2-2\varrho}}$, which is $o_p(1)$.
Finally, the Hessian matrix converges in probability to $-\boldsymbol  B_{\bvt_E}(\bt_0)$, so by Slutsky theorem,
$\sqrt{K}(\tvtkm-\bt_0)$ converges in distribution to a Gaussian distribution with covariance
$\boldsymbol  B_{\bvt_E}(\bt_0)^{-1}\boldsymbol  A_{\bvt_E}(\bt_0) \boldsymbol  B_{\bvt_E}(\bt_0)^{-1}$.

For global parameters, we proceed similarly, but we use  $n_0 = Km$, which yields the $\sqrt{Km}$ rate. Specifically, for $\bt_G$,  the calculation on the order of the differentiated Laplace approximation error becomes (still using  Lemma~\ref{lem:delta-inf}):
$$\norm{\frac{\sqrt{Km}}{n_0}\sk \partial_{\bvt} \epsk[\bvt]} = K^{-1/2}m^{-1/2} O(Km^{-1})=  O(K^{1/2}m^{-3/2}), $$ 
which is $o_p(1)$, if $m = \Theta(K)^{\varrho}$ with $\varrho > 1/3$. Therefore the Laplace score error is asymptotically negligible at the $\sqrt{Km}$ scale. The asymptotic variance is obtained via CLT, as in the case of edge-specific parameters, using $\boldsymbol B_{\bvt_G}(\bt_0)$ and $\boldsymbol A_{\bvt_G}(\bt_0)$ and rescaling by $\sqrt{Km}$.
\end{enumerate}

\end{proof}

\subsection{Different number of edges in each layer}
\label{SecNbEdges}

In the statement and in the proof of Theorem~\ref{thm:model1_double}, for simplicity, we assume a common number of edges $m$
across all $K$ network views. In practice, different layers may exhibit different numbers
of dyads $m_k$, $k=1,\dots,K$. We discuss in detail the implications of this inference aspect.

As far as (C1) is concerned, this has no impact. Indeed, (C1) is an order-of-magnitude
upper bound, and the proof bounds the Laplace bias via
\begin{equation}
\left\| \frac{1}{n_0} \sk \partial_{\bvt} \epsk[\bvt] \right\|
\;\le\; \frac{K}{n_0} \sup_{1\le k \le K} \left\| \partial_{\bvt} \epsk[\bvt] \right\|.
\label{Eq.B2}
\end{equation}
Replacing $m_k$ by the
uniform bound $m=\min_k(m_k)$ enlarges both bias orders -- $\sup_k O(m_k^{-2})=O(m^{-2})$
for edge-specific and $\sup_k O(m_k^{-1})=O(m^{-1})$ for global parameters -- since both
$m_k^{-2}$ and $m_k^{-1}$ are decreasing in $m_k$, and the supremum over $k$ is therefore
attained at the sparsest layer. In either case the right-hand side of Equation~\eqref{Eq.B2} is
only enlarged. Therefore, the inequality (and hence consistency) still holds, provided
every layer's edge count diverges, $\min_k(m_k)\to\infty$.

As far as (C2) is concerned, the normalizing sequence for global parameters cannot simply
be taken as $Km$. The correct effective sample size remains $n_0=\sum_{k=1}^K m_k$, the
total number of dyad--layer observations informative about $\bt_G$. To account for this,
let us (re-)define
\[
\boldsymbol{A}_{G}(\bt_0) = \plim_{K\to\infty} \frac{1}{n_0} \sum_{k=1}^K
\boldsymbol{\tilde{S}}_{G}^{(k)}(\bt_0)\, \boldsymbol{\tilde{S}}_{G}^{(k)}(\bt_0)^\top,
\qquad
\boldsymbol{B}_{G}(\bt_0) = -\plim_{K\to\infty} \frac{1}{n_0} \sum_{k=1}^K
\partial_{\bt_G} \boldsymbol{\tilde{S}}_{G}^{(k)}(\bt_0),
\]
which reduce to the matrices in Theorem~\ref{thm:model1_double}(C2) when $m_k=m$ for all $k$. Adapting
Step~(iii) of the proof of Theorem~\ref{thm:model1_double}, a mean-value expansion of $\tilde
S_K(\hat\bt_G)=\bf{0}$ around $\bt_{G,0}$ gives
\[
\sqrt{n_0}\,(\hat\bt_G-\bt_{G,0}) = -\left[\frac{1}{n_0}\sum_{k=1}^K \partial_{\bt_G}
\boldsymbol{\tilde{S}}_{G}^{(k)}(\bar\bt_G)\right]^{-1} \left\{
\underbrace{\frac{1}{\sqrt{n_0}}\sum_{k=1}^K \boldsymbol{{S}}_{G}^{(k)}(\bt_{G,0})}_{\text{stochastic term}}
\;+\;
\underbrace{\frac{\sqrt{n_0}}{n_0}\sum_{k=1}^K \partial_{\bt_G}\epsk[\bt_{G,0}]}_{\text{bias term}}
\right\}.
\]

By Lemma~\ref{lem:delta-inf}, for the bias term it holds that $\|\partial_{\bt_G}\epsk[\bt_{G,0}]\|=O(m_k^{-1})\le
O(m^{-1})$ for every $k$, with $m=\min_k(m_k)$. Summing over $K$ layers,
\[
\left\|\frac{\sqrt{n_0}}{n_0}\sum_{k=1}^K \partial_{\bt_G}\epsk[\bt_{G,0}]\right\|
= O\!\left(\frac{K}{m\sqrt{n_0}}\right),
\]
which is $o(1)$, and hence negligible, provided
\begin{equation}
\frac{K}{m\sqrt{n_0}} = \frac{K}{m\left(\sum_k m_k\right)^{1/2}} \;\longrightarrow\; 0.
\label{EqNew}
\end{equation}
This condition can be related directly to the growth rate of $m=\min_k(m_k)$ alone: since
$m_k\ge m$ for every $k$, $n_0\ge Km$ regardless of the pattern of heterogeneity, so
\[
\frac{K}{m\sqrt{n_0}} \;\le\; \frac{K}{m\sqrt{Km}} = \frac{K^{1/2}}{m^{3/2}}
= \Theta\!\left(K^{\,1/2-\frac32\varrho}\right) \quad \text{when } m=\Theta(K^\varrho),
\]
which vanishes if and only if $\varrho>1/3$. Thus, $\varrho>1/3$ applied to the growth rate of the
sparsest layer is sufficient for Equation~\eqref{EqNew}, irrespective of how the remaining $m_k$
grow: the homogeneous case $m_k\equiv m$ is the least favourable configuration ($n_0=Km$
exactly), and any additional edges elsewhere only help. The threshold in Theorem~\ref{thm:model1_double} is
therefore unchanged, once $\varrho$ is understood to refer to $\min_k(m_k)$ rather than to
a common $m$.

Under a Lindeberg-type condition ruling out a single dominant layer -- satisfied, e.g., whenever $\max_k m_k/n_0\to 0$ -- the CLT for
independent (across $k$) triangular arrays gives the following behaviour of the stochastic term:
\[
\frac{1}{\sqrt{n_0}}\sum_{k=1}^K \boldsymbol{S}_G^{(k)}(\bt_{G,0}) \xrightarrow{d} N\!\left(\boldsymbol 0,\,
\boldsymbol{A}_G(\bt_0)\right).
\]

 Under Equation~\eqref{EqNew}, the bias term vanishes in probability,
the Hessian term converges to $-\boldsymbol{B}_G(\bt_0)$, and Slutsky's theorem yields
\begin{equation}
\sqrt{n_0}\,(\hat\bt_G-\bt_{G,0}) \xrightarrow{d} N\!\left(\boldsymbol 0,\,
\boldsymbol{B}_{G}(\bt_0)^{-1}\, \boldsymbol{A}_{G}(\bt_0)\, [\boldsymbol{B}_{G}(\bt_0)^{-1}]^\top \right). \label{Eq.GSM}
\end{equation}
This formula (re)emphasizes the important point: $\sqrt{n_0}$, and not $\sqrt{Km}$, is the
correct normalizing sequence. Indeed, if $\min_k(m_k)$ stays asymptotically negligible
relative to $n_0$, so that $Km=o(n_0)$ and Equation~\eqref{EqNew} fails, normalizing by $\sqrt{Km}$
instead of the correct $\sqrt{n_0}$ produces a degenerate limit: 
\[
\sqrt{Km}\,(\hat\bt_G-\bt_{G,0}) = \sqrt{\frac{Km}{n_0}}  \sqrt{n_0}\,
(\hat\bt_G-\bt_{G,0}),
\]
the deterministic factor $\sqrt{Km/n_0}\to 0$ while the second factor converges to the
Gaussian law as in Equation~\eqref{Eq.GSM}, so the product converges to $\boldsymbol 0$ in probability. 

Finally, we notice that, if the $m_k$ are of the same asymptotic order -- i.e.\ there exist
constants $0<c\le C<\infty$, independent of $K$, with $c\,m\le m_k\le C\,m$ for all $k$ --
then $n_0=\Theta(Km)$, condition Equation~\eqref{EqNew} reduces exactly to $\varrho>1/3$ with
$m=\Theta(K^\varrho)$, and (C2) still holds as stated in Theorem~\ref{thm:model1_double}, under the same growth
condition used in the homogeneous case, without the need to track $n_0$ and $Km$
separately.

\subsection{Identifiability conditions}
\label{SecIdentif}

In this section, we derive the conditions that ensure all the model's parameters are correctly identified. First, we take care of rotational indeterminacy related to the factor loadings and latent variables. Then, we discuss the identifiability of parameters related to layer-(in)dependent covariates, see \Autoref{Sec: uniq} in the main paper, specifically Assumptions~\textbf{A2}(i)-\textbf{A2}(iii). Moreover, we explain how to implement the constraints, how to transform (some) covariates, and how to obtain standard errors for the estimated parameters.

\subsubsection{Theoretical aspects} \label{Sec.ConstTeo}

For the sake of notation convenience, let us re-write the linear predictor as: 
\begin{equation}
\eijk
=
\alpha_{0,ij}
+
\baa_{(2),ij}^\T \bztk
+
\bb_*^\T \bxx_*^{(k)}
+
\bg^\T  \bw_{ij},
\label{Eq.AppLP}
\end{equation}
where $\bztk\in\mathbb R^q$ denotes the centred latent vector for layer $k$, $\bw_{ij} \in \mathbb R^{L_w}$ is a vector of layer-independent covariates, i.e. such that $\bw_{ij}$ is common for all layers and $\bg\in \mathbb R^{L_w}$ is common across dyads, and $\baa_{(2),ij} = \left[ \baat \right]_{\cdot, ij} \in\mathbb R^q$ is the vector of non-intercept loadings for dyad $(i,j)$. In Equation~\eqref{Eq.AppLP}, the layer-dependent covariates $\bxx_{*}^{(k)}$ and their respective coefficients $\bb_*$ can be either edge-dependent (e.g. we can set $\bxx_{ij}^{(k)}$ and/or $ \bb_{ij}$), common across all edges (e.g., we can set $\bxx^{(k)}$ and/or $\bb$), or a combination of both. With this regard, a first key remark is that the layer-dependent covariates are required to be non-constant across layers. This condition must be suitably adapted considering the different specifications of the linear predictor, namely the different Models 1-6 introduced in Section~\ref{Sec: frame} of the main paper.
More precisely, we need that:
\begin{enumerate}
    \item \label{case-x} In 
    Model~2, 
    for every covariate $l_x = 1, \ldots, L_x$, there does not exist $t^{l_x} \in \mathbb{R}$ such that
    $\bxx^{l_x} = t^{l_x}\boldsymbol{1}_K$;
    
    \item \label{case-xij} In 
    Models~3-6, 
    for every edge $(i,j ) \in E$ and every covariate $l_x = 1, \ldots, L_x$, there does not exist $t_{ij}^{l_x} \in \mathbb{R}$ such that $
    \bxx_{ij}^{l_x} = t_{ij}^{l_x}\boldsymbol{1}_K$.
\end{enumerate}

The rationale behind these restrictions is simple:  if there exist such $t^{l_x}$ and/or  $t_{ij}^{l_x}$, then the related covariates are in fact layer-independent and, they must be included in the specification of $\bw_{ij}$.

With these conditions in mind, we can proceed with the identification aspect. To this end, for each dyad $(i,j)$, define 
$$
\beij
=
\begin{pmatrix}
\eta_{ij}^{(1)}\\
\vdots\\
\eta_{ij}^{(K)}
\end{pmatrix} \in\mathbb R^K,
\qquad
\boldsymbol{v}_{ij}
=
\begin{pmatrix}
\baa_{(2),ij}^\T \bzt^{(1)}\\
\vdots\\
\baa_{(2),ij}^\T \bzt^{(K)}
\end{pmatrix}
\in\mathbb R^K.
$$

Collect the intercepts into $\baa_0=(\alpha_{0,1},\dots,\alpha_{0,m})^\T\in\mathbb R^m$, and define the layer-independent design matrix
$$
\bW =
\begin{pmatrix}
\bw_1^\T\\
\vdots\\
\bw_m^\T
\end{pmatrix}
\in\mathbb R^{m\times L_w}, \mbox{ with }  \bw_{ij}^\T = ( w_{ij1}, \ldots, w_{ij L_w}), \, ij = 1, \ldots, m.
$$

For each dyad define layer-dependent data matrix
$$
\boldsymbol{X}_{*}
=
\begin{pmatrix}
\bxx_{*}^{(1)\T}\\
\vdots\\
\bxx_{*}^{(K)\T}
\end{pmatrix}
\in\mathbb R^{K\times L_x}, \mbox{ with }  \bxx_{*}^{(k)\T} = ( x_{* 1}^{(k)}, \ldots, x_{* L_x}^{(k)}), \, k=1,\ldots,K, 
$$
where, similarly to notation in Equation~\eqref{Eq.AppLP}, $\boldsymbol{X}_{*}$ can be either edge-specific ($\boldsymbol{X}_{ij}$) or not ($\boldsymbol{X}$). 

We refer to  Assumptions~\textbf{A2}(i)-\textbf{A2}(iii) in the main paper: here  we consider in detail their role for identification purposes. \\

\textit{Latent variables.} Let
$\boldsymbol{O}_q \in\mathbb R^{q\times q}$ 
 be orthogonal, so 
$\boldsymbol{O}_q^\T \boldsymbol{O}_q =\boldsymbol{I}_q$.
Define
$\widetilde{\baa}_{(2),ij}=\boldsymbol{O}_q \baa_{(2),ij},$ 
$\mathbf{\tilde{Z}}_{(2)}=\boldsymbol{O}_q \bZt.$
Then $
\widetilde{\baa}_{(2),ij}^\T \mathbf{\tilde{Z}}_{(2)}
=
\baa_{(2),ij}^\T \boldsymbol{O}_q ^\T \boldsymbol{O}_q \bZt
=
\baa_{(2),ij}^\T \bZt.$
Hence, the latent bilinear term is invariant under orthogonal transformation. Since
 $\bZt \sim N_q(\mathbf 0,\boldsymbol{I}_q)$, the transformed latent vector $\mathbf{\tilde{Z}}_{(2)}$ has the same
distribution as the original one. Therefore, without further restrictions, the model is not uniquely
identified.  Assumption \textbf{A2} (i) takes care of this aspect.  Let $\boldsymbol{B}$ denote the leading $q\times q$ block of $\hat{\baa}_{(2)}$, i.e. the $q$ rows and  first $q$ columns, and suppose that both
$\boldsymbol{B}$ and $\boldsymbol{O}_q\boldsymbol{B}$ are upper triangular with strictly positive diagonal entries.
Since $\boldsymbol{B}$ is upper triangular with nonzero diagonal, it is invertible, and
$\boldsymbol{O}_q =(\boldsymbol{O}_q\boldsymbol{B})\boldsymbol{B}^{-1}.$
Hence, $\boldsymbol{O}_q $ is itself upper triangular.
Since $\boldsymbol{O}_q$ is also orthogonal, it must be diagonal with diagonal entries in $\{-1,1\}$.
Because $\boldsymbol{B} $ and $\boldsymbol{O}_q\boldsymbol{B}$ both have strictly positive diagonal entries, each diagonal sign must
equal $1$. Therefore, $\boldsymbol{O}_q =\boldsymbol{I}_q$.
This proves that the upper-triangular block with positive diagonal removes both the rotation and the residual sign indeterminacy.  

\textit{Covariates}. 
Let us start from Assumption~\textbf{A2}(iii), due to its relation with the latent variables just discussed. To develop further, let us consider 
\begin{equation}
  \label{eq:s-wta0}
  \bs=\baa_0 + \bW\bg, 
\end{equation}
for $\bs\in\mathbb R^m,$
and let  
$
\mathbf{P}_w := \bW (\bW^{\T} \bW )^{-1} \bW^{\T},$ and $
\mathbf{M}_w := \boldsymbol{I}_{m} - \mathbf{P}_w
$
be the orthogonal projectors onto $\operatorname{span}(\bW)$ and its orthogonal complement $\operatorname{span}(\bW)^\perp$, respectively. Then for any $\bs \in\mathbb{R}^m$, consider
$
\bs  = \mathbf{M}_w \bs  + \mathbf{P}_w \bs, $ and $
\mathbf{M}_w\bs \in \operatorname{span}(\bW)^\perp, $
$ \mathbf{P}_w\bs \in \operatorname{span}(\bW).$ Under the normalization $\bW^{\T}\baa_{0}=\boldsymbol{0}$, the unique decomposition
$\bs =\baa_{0}+ \bW \bg$
is given by
$$
\baa_{0} = \mathbf{M}_w \bs ,
\qquad 
\bW \bg = \mathbf{P}_w \bs ,
\qquad
\bg = (\bW^{\T} \bW )^{-1} \bW ^{\T}\bs.
$$
The QR factorization yields
$$
\bW  = \mathbf{Q} \mathbf{R} = \begin{bmatrix}
\mathbf{Q_1} \quad \mathbf{Q_2}
\end{bmatrix}
\begin{bmatrix}
\mathbf{R_1}\\[2pt]
\boldsymbol{0}
\end{bmatrix}
= \mathbf{Q_1} \mathbf{R_1},
$$

where $\mathbf{Q}\in\mathbb{R}^{m\times m}$ is orthogonal, $\mathbf{R}\in\mathbb{R}^{m\times L_{w}}$ is upper triangular,
$\mathbf{Q_1}\in\mathbb{R}^{m\times L_{w}}$ contains the first $L_{w}$ columns of $\mathbf{Q}$, and the remaining
$m-L_{w}$ columns $\mathbf{Q_2} \in \mathbb{R}^{m\times (m-L_{w})}$ span $\ker(\bW^{\T})$. Set
$
\baa_0 := \mathbf{Q_2} \bze,$ and $ \bze\in\mathbb{R}^{m-L_{w}}.$ Then $\bW^{\T}\baa_0=\bW^{\T}Q_2\bze=\boldsymbol{0}$, so the constraint holds exactly. Since $\mathbf{Q_2}$ has orthonormal columns, the orthogonal projector onto $\ker(\bW^{\T})$ is $\mathbf{M}_w=\mathbf{Q_2} \mathbf{Q_2^{\T}}$ and
$\baa_0=\mathbf{Q_2} \mathbf{Q_2^{\T}}\bs $ with $\bs =\baa_0+\bW \bg$. This reduces the free dimension of $\baa_0$ by $L_{w}$. 
We emphasize that this constraint does not modify the identification conditions for the latent part, and it does not alter the linear independence conditions required for the layer-dependent covariates. 

Now we can claim that the constraint $\bW^\T\baa_0=\mathbf{0}$ uniquely separates $\baa_0$ and $\bg$. To this end, consider that $\mathbf{P}_w$ and $\mathbf{M}_w$ are complementary orthogonal projectors, $\mathbf{P}_w+\mathbf{M}_w=\boldsymbol{I}_{m}$ and $\mathbf{P}_w \mathbf{M}_w=\mathbf{M}_w \mathbf{P}_w=\mathbf{0}$. Hence, $\bs =(\mathbf{P}_w+\mathbf{M}_w)\bs =\mathbf{P}_w\bs +\mathbf{M}_w\bs $ with $\mathbf{P}_w\bs \in\operatorname{span}(\bW)$, $\mathbf{M}_w\bs \in\operatorname{span}(\bW)^\perp$. If $\bs =\baa_{0}+\bW \bg$ and $\bW^{\T}\baa_{0}=\mathbf{0}$, then necessarily $\bW\bg=\mathbf{P}_w\bs $ and $\baa_{0}=\mathbf{M}_w\bs $, since 
\begin{align*}
  \mathbf{P}_w &= \bW (\bW^{\T} \bW )^{-1} \bW^{\T} = \mathbf{Q_1} \mathbf{R_1} \left( (\mathbf{Q_1} \mathbf{R_1})^\T \mathbf{Q_1} \mathbf{R_1}\right)^{-1} (\mathbf{Q_1} \mathbf{R_1})^\T \\
  & = \mathbf{Q_1} \mathbf{R_1} \left( \mathbf{R_1}^\T \mathbf{R_1}\right)^{-1} \mathbf{R_1}^\T \mathbf{Q_1}^\T = \mathbf{Q_1} \mathbf{Q_1}^\T, \\
  \mathbf{M}_w &= \mathbf{I} - \mathbf{P}_w = (\mathbf{Q}\mathbf{Q}^\T) - \mathbf{P}_w =\mathbf{ Q_2} \mathbf{Q_2}^\T,
\end{align*}
and 
$$
\baa_0 = \mathbf{M}_w \bs = \mathbf{Q_2} \mathbf{Q_2}^{\T} \bs = \mathbf{Q_2} \boldsymbol{\zeta} \qquad \Rightarrow \qquad \boldsymbol{\zeta} =\mathbf{ Q_2}^{\T} \bs.
$$
Moreover,
$$
\bW \bg = \mathbf{P}_w \bs = \bW (\bW^{\T} \bW )^{-1} \bW^{\T} \bs \qquad \Rightarrow \qquad \bg = (\bW^{\T} \bW)^{-1}\bW^{\T} \bs = \mathbf{R_1}^{-1} \mathbf{Q_1}^\T \bs.
$$
By the property of orthogonal projectors, this decomposition is unique. This proves that the condition in \textbf{A2} (iii), namely $\bW^\T\baa_0=\mathbf{0}$, resolves the confounding between the unrestricted dyad intercepts and the layer-independent covariate effect, provided that $\bW$ has full column rank. The layer-independent term $\bW_{ij} \bg$ does not involve $\baa_{(2),ij}$ or $\bzt$. Therefore, it does not interact with the orthogonal rotational indeterminacy of the latent bilinear
term $\baa_{(2),ij}^\T \bzt$. The same upper-triangular plus positive-diagonal constraint on the leading $q\times q$ block of $\hat{\baa}_{(2)}$ continues to remove the latent rotation.

Finally, we consider \textbf{A2} (ii), which refers to  layer-dependent covariates. Let us consider the specification in Equation~\eqref{Eq.AppLP}: for fixed dyad $(i,j)$, stack the $K$ predictors and write
$$
\beij = \alpha_{0,ij}\boldsymbol{1}_K + \boldsymbol{v}_{ij} + \boldsymbol{X}_{*}\bb_{*} + \left(\mathbf{W}_{ij} \bg\right) \boldsymbol{1}_K.
$$
After the latent $\boldsymbol{v}_{ij}$ and the layer-independent $\mathbf{W}_{ij} \bg$ parts have been identified (so \textbf{A2} (iii) holds), we have: 
$$
\eijk- v_{ij}^{(k)} - \alpha_{0,ij} - \mathbf{W}_{ij} \bg
=
\mathbf{x}_{*}^{(k) \T}\bb_{*}.
$$
This defines a purely parametric linear predictor with design matrix $\boldsymbol{X}_*$ and no intercept. Hence, it suffices that $\boldsymbol{X}_*$ is of full rank, which is imposed by Assumption \textbf{A2} (ii).
In the case where there are no layer-independent covariates and the restriction in \textbf{A2} (iii) is not imposed, the remaining part of the linear predictor is
\begin{equation}
\eijk- v_{ij}^{(k)}
=
\alpha_{0,ij}
+
\mathbf{x}_*^{(k)}\bb_{*} \label{Eq:LMSM}
\end{equation}
and  similar arguments apply. Indeed, for 
Models~2, 3, 4, Equation~\eqref{Eq:LMSM} is similar to a linear predictor associated with a generalized linear model with design matrix $[\boldsymbol{1}_K \ \ \boldsymbol{X}_*]$. Since $\rank(\boldsymbol{X}_*) = L_x$ by Assumption \textbf{A2} (ii) and the covariates are layer-dependent, i.e. there is no $t_{ij}^{l_x} \in \mathbb{R}$ such that $\bxx_{ij}^{l_x} = t_{ij}^{l_x} \boldsymbol{1}_K$ for all $ij \in E$, $l_x = 1, \ldots, L_x$ by construction, and therefore $(\alpha_{0,ij},\bb_{*})$ is identifiable. 

Next, consider the design matrix for $\alpha_{0,ij}
+
\mathbf{x}_*^{(k)}\bb_{*}$ in Equation~\eqref{Eq:LMSM}:
\begin{align*}
  \boldsymbol{D}_* &= [\boldsymbol{1}_K \ \ \boldsymbol{X}_*] \in \mathbb{R}^{K \times L_x}, \quad 
  \boldsymbol{D}
  =
  \begin{pmatrix}
  \boldsymbol{D}_1 & 0   & \cdots & 0 \\
  0   & \boldsymbol{D}_2 & \ddots & \vdots \\
  \vdots & \ddots & \ddots & 0 \\
  0   & \cdots & 0 & \boldsymbol{D}_{m}
  \end{pmatrix}. \\
\end{align*}

We now consider different scenarios for combinations of covariates specifications and their associated parameters. First, consider the case where the covariates are layer-specific with edge-specific coefficients (Model~2).  In this case, we have $\boldsymbol{X}_* = \boldsymbol{X}$, $\boldsymbol{D}_* = \boldsymbol{D}$ and the model
decomposes into $m$ separate regressions all sharing the same design matrix $\boldsymbol{D}$, hence there must be enough variation across layers for the model to be identifiable. Note also that if $m \gg K$, the estimates may become very noisy due to a high number of parameters compared to the number of layers. To improve conditioning across all $m$ regressions, covariates can be centred across layers.

Next, consider the case where the covariates are edge- and layer-specific and their associated coefficients are edge-specific (Models~4 and 5). We thus have $\boldsymbol{D}_{ij}=[\boldsymbol{1}_K \  \boldsymbol{X}_{ij}]$. The block-diagonal design matrix $\boldsymbol{D}$, consisting of blocks $\boldsymbol{D}_{ij}$, is no longer common across all edges, and the identifiability has to hold edge by edge. The columns of $\boldsymbol X$ must be linearly independent across all layers for all edges. i.e. each block $\boldsymbol{D}_{ij}$ must be of full rank. Hence, one must pay particular attention to this fact when constructing the model. It is also important to notice the problem decomposes into $m$ linearly independent systems and the information for $\bb_{ij}$ comes only from $K$, not $mK$ observations.  The stacked design is block-diagonal and there is no pooling of information. To improve conditioning across all $m$ regressions, covariates can be centred across layers. 

Finally, consider the case where the covariates are edge- and layer-specific, but their coefficients are global (Models~3 and 6).  
Define

\begin{align*}
  \boldsymbol{X}_{ij}
  &=
  \begin{pmatrix}
  \bxx_{ij}^{(1)\T}\\
  \vdots\\
  \bxx_{ij}^{(K)\T}
  \end{pmatrix}
  \in\mathbb R^{K\times L_x}, \quad
  \boldsymbol{D}
  =
  \left[
  \boldsymbol{I}_m \otimes \boldsymbol{1}_K
  \ \ 
  \begin{pmatrix}
  \boldsymbol{X}_{1}\\
  \vdots\\
  \boldsymbol{X}_{m}
  \end{pmatrix}
  \right].
  \end{align*}

Since $\rank \left(\boldsymbol{I}_m \otimes \boldsymbol{1}_K \right) = m$ and $\rank \boldsymbol{X} = L_x$ by Assumption \textbf{A2} (ii), it follows that  $\rank \boldsymbol{D} = m + L_x$ and the model is identifiable. The stacked design matrix is no longer block-diagonal and hence there is pooling of information, which thus presents as the simplest case for identifiability.

\subsubsection{Implementation aspects}
\label{Sec-implem}

In this section, we discuss the key steps needed to implement our method. We first consider the problem of enforcing the constraints. Naive implementation of the theoretical constraints discussed in Section \ref{Sec.ConstTeo} results in quadratic memory complexity. Our new implementation of constraints allows reducing the memory cost from quadratic to linear in $m$. Then, we explain how to transform the covariates and estimate the related model parameters. This transformation entails the need for back-transforming the resulting estimates, hence, we explain how to obtain the standard errors of the original model parameters.  Additionally, we tackle a key implementation aspect: the selection of the starting values that help find the solution to the approximate likelihood equations. \\

 \textit{Enforcing the constraints.} While setting up the constraint as in Assumptions \textbf{A2} (i)-(iii) is exact and produces an orthogonal matrix $\boldsymbol{Q}_2$, in practice, specifically when the number of edges $m$ is large, the yielded matrix $\boldsymbol{Q}_2 \in \mathbb{R}^{m\times (m-L_w)}$ is too large for numeric{al operations to be performed efficiently. Now recall that $\bW^\T \baa_0 = \mathbf{0}$ represents a system of $L_w$ linear equations with $m$ variables, where the admissible intercept vector lies in the null space of the covariates data matrix. To gain in numerical efficiency, one can rather use a different basis of the same null space. For example, one can choose $\mathcal{C} \subset \left\{1,\ldots, m\right\}$ rows of size $L_w$ of $\bW$ such that $\mathbf{W}_{\mathcal{C}} \in \mathbb{R}^{L_w \times L_w}$ is nonsingular. Furthermore, let $\mathcal{F}=\left\{1,\ldots, m\right\} \backslash \mathcal{C}$, hence $\mathbf{W}_{\mathcal{F}} \in \mathbb{R}^{(m-L_w) \times L_w}$. Partition accordingly
  $$\baa_0 = \begin{pmatrix}
    \baa_{0, \mathcal{C}} \\
    \baa_{0, \mathcal{F}}
  \end{pmatrix},
  $$
  then the constraint can be written as
  $$
  \mathbf{W}_{\mathcal{C}}^\T \baa_{0, \mathcal{C}}
  +
  \mathbf{W}_{\mathcal{F}}^\T \baa_{0, \mathcal{F}} = \boldsymbol 0.
  $$
  Since $\mathbf{W}_{\mathcal{C}}$ is nonsingular, we can solve for unique set of constrained intercepts:
  $$
  \baa_{0, \mathcal{C}} = -\left(\mathbf{W}_{\mathcal{C}}^\T\right)^{-1} \mathbf{W}_{\mathcal{F}}^\T \baa_{0, \mathcal{F}}.
  $$
  Hence, $\baa_{0, \mathcal{F}} \in \mathbb{R}^{m-L_w}$ are the free parameters to be estimated, and we can reconstruct the total intercept vector as
  $$\baa_0 = \begin{pmatrix}
    \baa_{0, \mathcal{C}} \\
    \baa_{0, \mathcal{F}}
  \end{pmatrix}
  =
  \begin{pmatrix}
    -\left(\mathbf{W}_{\mathcal{C}}^\T\right)^{-1} \mathbf{W}_{\mathcal{F}}^\T  \\
    \mathbf{I}_{m-L_w}
  \end{pmatrix} \baa_{0, \mathcal{F}}
  := \mathbf{G} \baa_{0, \mathcal{F}},
  $$
  and one can show that under this partition, the constraint is satisfied exactly since 
  $$
  \bW^\T \baa_0 = \bW^\T \mathbf{G} \baa_{0, \mathcal{F}} = \left( - \mathbf{W}_{\mathcal{C}}^\T \left(\left(\mathbf{W}_{\mathcal{C}}^\T\right)^{-1} \mathbf{W}_{\mathcal{F}}^\T \right) + \mathbf{W}_{\mathcal{F}}^\T \right) 
  \baa_{0, \mathcal{F}}
  = \mathbf{0}.
  $$
  This method allows to obtain a linear in $m$ memory cost---not a quadratic cost in $m$, which is the one obtained by a naive implementation
  of the constraints, because  $\boldsymbol{Q}_2$ is of dimension $m\times (m-L_w)$, whereas $\mathbf{W}_{\mathcal{F}}$ is of dimension $m-L_w$.\\
  
\textit{Transforming the covariates in Models ~5 and 6}. In our experience, to improve numerical stability in optimization, the covariates should be centred across layers and edges. For the interpretation of the model coefficients, it would be suitable to back-transform the centred and scaled covariates: this entails the need for back-transforming the related coefficients as well. Due to the flexibility of our model setting, this operation requires specific tasks, which change accordingly to the different model specification. Here we discuss Model~5 and 6 which have all the key aspects that one needs to tackle. All the other models can be derived as special cases of these two models. For example, Models~3 and 4 can be recovered by setting $\bg = 0$.

To begin with,  let us start from Model~5, whose linear predictor contains edge specific covariates and coefficients, and additional global parameters. For each $(i,j)$,  define averages and standard errors across layers as follows
  \begin{align*}
    % mu_x
    \mu_{x, ij}^{(l)} &= \frac{1}{K} \sk x_{ij}^{(k),(l)}, 
    \quad 
    && \boldsymbol{\mu}_{x,ij} = (\mu_{x, ij}^{(1)},\dots,\mu_{x, ij}^{(p_x)})^\T \in\mathbb R^{p_x}, \\
    % S_x
    s_{x, ij}^{(l)} &= \left(  \frac{1}{K-1} \sk \left( x_{ij}^{(k),(l)} - \mu_{x, ij}^{(l)}\right)^2 \right)^{1/2}, 
    \quad
    && \mathbf{S}_{x, ij} = \begin{pmatrix}
    s_{x, ij}^{(1)} & 0   & \cdots & 0 \\
    0   & s_{x, ij}^{(2)} & \ddots & \vdots \\
    \vdots & \ddots & \ddots & 0 \\
    0   & \cdots & 0 & s_{x, ij}^{(p_x)}
    \end{pmatrix} \in\mathbb R^{p_x \times p_x}, \\
    % mu_w
    \mu_{w}^{(l)} &= \frac{1}{m} \snv  \bw_{ij}^{(l)}, 
    \quad 
    && \boldsymbol{\mu}_{w} = (\mu_{w}^{(1)},\dots,\mu_{w}^{(p_w)})^\T\in\mathbb R^{p_w}, \\
    % S_w
    s_{w}^{(l)} &= \left( \frac{1}{m-1} \snv \left( \bw_{ij}^{(l)} - \mu_{w}^{(l)}\right)^2 \right)^{1/2}, 
    \quad
    && \mathbf{S}_{w} = \begin{pmatrix}
    s_{w}^{(1)} & 0   & \cdots & 0 \\
    0   & s_{w}^{(2)} & \ddots & \vdots \\
    \vdots & \ddots & \ddots & 0 \\
    0   & \cdots & 0 & s_{w}^{(p_w)}
    \end{pmatrix} \in\mathbb R^{p_w \times p_w}, \\
  \end{align*}
  where $l$ denote a specific covariate. Recall that covariates can be of any type (e.g.\ continuous, discrete, categorical, binary, ordinal) and that usually only the continuous covariates are suitable for centring and scaling. Let $J_x \subseteq \left\{1, \ldots, L_x\right\}, p_x = |J_x|$ denote the number of transformed  layer-dependent covariates, $J_w \subseteq \left\{1, \ldots, L_w\right\}, p_w = |J_w|$ the number of transformed layer-independent covariates. Furthermore, let subscript $t, n$ denote the transformed covariates and the non-transformed covariates, respectively. For each $\bxx_{ij,t}$ with columns $1, \ldots, p_x$, $\bw_t$ with columns $1, \ldots, p_w$, we define their centred and scaled counterpart
  \begin{align*}
    \tilde{\bxx}_{ij,t} &:= (\bxx_{ij,t} - \boldsymbol{1}_K \boldsymbol{\mu}_{x,ij}^\T) \boldsymbol{S}_{x, ij}^{-1} 
    && \Leftrightarrow 
    \quad
    \bxx_{ij,t} = \boldsymbol{1}_K \boldsymbol{\mu}_{x,ij}^\T +  \tilde{\bxx}_{ij,t} \boldsymbol{S}_{x, ij}, \\
    %w
    \widetilde{\bw}_{t} &:= (\bw_t - \boldsymbol{1}_m \boldsymbol{\mu}_{w}^\T) \boldsymbol{S}_{w}^{-1}, 
    && \Leftrightarrow 
    \quad
    \bw_t = \boldsymbol{1}_m \boldsymbol{\mu}_{w}^\T + 
    \widetilde{\bw}_{t} \boldsymbol{S}_{w}.
  \end{align*}
We also define their corresponding parameters 
  \begin{align*}
    \tilde{\bb}_{ij}  &= \boldsymbol{S}_{x, ij} \bb_{ij} 
    && \Leftrightarrow 
    \bb_{ij} = \boldsymbol{S}_{x, ij}^{-1} \tilde{\bb}_{ij}, \\ 
    \tilde{\bg} 
    &= \boldsymbol{S}_{w} \bg 
    && \Leftrightarrow  \bg = \boldsymbol{S}_{w}^{-1} \tilde{\bg}.
  \end{align*}
  Substituting into the linear predictor under the constraint $\bW^\T \baa_0= \mathbf{0}$ gives 
  \begin{align*}
    \beij & =
    \alpha_{0,ij}\boldsymbol{1}_K 
    + 
    \bzt^\T \baat[ij] 
    +
    \bxx_{ij} \bb_{ij}  
    +
    \boldsymbol{1}_K  \left(\bw_{ij} \bg \right) \\ 
    & = 
    (\mathbf{G}_{ij} \ac) \boldsymbol{1}_K 
    + 
    \bzt^\T \baat[ij]
    + 
    \left(\boldsymbol{1}_K \boldsymbol{\mu}_{x,ij}^\T +  \tilde{\bxx}_{ij} \boldsymbol{S}_{x, ij}\right) \boldsymbol{S}_{x,ij}^{-1} \tilde{\bb}_{ij} 
    + 
    \boldsymbol{1}_K  \left(\left(\boldsymbol{\mu}_{w}^\T + \tilde{\bw}_{ij} \boldsymbol{S}_{w}\right) \boldsymbol{S}_{w}^{-1} \tilde{\bg}\right) \\
    & = 
    (\mathbf{G}_{ij} \ac)\boldsymbol{1}_K 
    + 
    \bzt^\T \baat[ij]
    + 
    \left(\boldsymbol{1}_K \boldsymbol{\mu}_{x,ij}^\T \boldsymbol{S}_{x,ij}^{-1} +  \tilde{\bxx}_{ij} \right)  \tilde{\bb}_{ij} 
    + 
    \boldsymbol{1}_K  \left(\left(\boldsymbol{\mu}_{w}^\T \boldsymbol{S}_{w}^{-1} + \tilde{\bw}_{ij} \right)  \tilde{\bg}\right).
  \end{align*}

  In addition we partition the loadings matrix into constrained ($\baat[,\mathcal{C}]$) and free ($\baat[,\mathcal{F}]$) parameters as in the main text and consider block-diagonal matrix and stacked vectors of parameters (where $\mathbf{\tilde I}$ has dimension 
$(q+1)\left(m-q/2\right)-L_w$),
  \begin{align}
    \mathbf{T} &= \begin{pmatrix}
    \mathbf{\tilde I}  & 0   & \cdots & & 0 \\
    0   & \boldsymbol{S}_{x,1}^{-1} & \ddots & \ddots & \vdots \\
    \vdots & \ddots & \ddots & \ddots & \vdots \\
    0   & \cdots & 0 & \mathbf{S}_{x,m}^{-1} & 0 \\
    0   & \cdots & \cdots & 0 & \mathbf{S}_{w}^{-1}
    \end{pmatrix}, 
    \quad 
    \hat{\bt} = \begin{pmatrix}
      \ach \\
      \baath[,\mathcal{F}] \\
      \hat{\bb}_{1} \\
      \vdots \\
      \hat{\bb}_{m} \\
      \hat{\bg}
    \end{pmatrix}, 
    \quad 
    \hat{\tilde{\boldsymbol{\theta}}} = \begin{pmatrix}
      \ach \\
      \baath[,\mathcal{F}] \\
      \hat{\tilde{\bb}}_{1} \\
      \vdots \\
      \hat{\tilde{\bb}}_{m} \\
      \hat{\tilde{\bg}}
    \end{pmatrix}, 
    \label{Eq. TransfD5}
  \end{align}
we obtain  $\hat{\bt} = \mathbf{T} \hat{\tilde{\boldsymbol{\theta}}} $. \\

Now, let us consider  Model~6, which has covariates that vary across edges and layers, but with common coefficients.  In this case we define averages and standard errors across layers and edges
  \begin{align*}
    % mu_x
    \mu_{x}^{(l)} &= \frac{1}{Km} \sk \snv x_{ij}^{(k),(l)}, 
    \quad 
    && \boldsymbol{\mu}_{x} = (\mu_{x}^{(1)},\dots,\mu_{x}^{(p_x)})^\T \in\mathbb R^{p_x}, \\
    % S_x
    s_{x}^{(l)} &= \left(  \frac{1}{Km-1} \sk \snv \left(x_{ij}^{(k),(l)}  - \mu_{x}^{(l)}\right)^2 \right)^{1/2}, 
    \quad
    && \mathbf{S}_{x} = \begin{pmatrix}
    s_{x}^{(1)} & 0   & \cdots & 0 \\
    0   & s_{x}^{(2)} & \ddots & \vdots \\
    \vdots & \ddots & \ddots & 0 \\
    0   & \cdots & 0 & s_{x}^{(p_x)}
    \end{pmatrix} \in\mathbb R^{p_x \times p_x}, \\
  \end{align*}

where $l$ denote a specific covariate, and $J_x \subseteq \left\{1, \ldots, L_x\right\}, p_x = |J_x|$ denotes the number of transformed layer-dependent covariates, $J_w \subseteq \left\{1, \ldots, L_w\right\}, p_w = |J_w|$ the number of transformed layer-independent covariates. We also define  $\mu_{w}^{(l)}, s_w^{(l)}, \bmu[w], \mathbf{S}_w$ as for Model~5 above. 
Set, for each $\bxx_{ij,t}$ with columns $1, \ldots, p_x$, $\bw_t$ with columns $1, \ldots, p_w$,
  \begin{eqnarray*}
    \tilde{\bxx}_{ij,t} := (\bxx_{ij,t} - \boldsymbol{1}_K \boldsymbol{\mu}_{x}^\T) \boldsymbol{S}_{x}^{-1} 
    & \Leftrightarrow &
    \quad
    \bxx_{ij,t} = \boldsymbol{1}_K \boldsymbol{\mu}_{x}^\T +  \tilde{\bxx}_{ij,t} \boldsymbol{S}_{x}, \\
    %w
    \tilde{\bw}_{t} := (\bw_t - \boldsymbol{1}_m \boldsymbol{\mu}_{w}^\T) \boldsymbol{S}_{w}^{-1}, 
    & \Leftrightarrow &
    \quad
    \bw_t = \boldsymbol{1}_m \boldsymbol{\mu}_{w}^\T + \tilde{\bw}_t \boldsymbol{S}_{w}, \\
    \bw_{*}  := \bw_t \boldsymbol{S}_w^{-1}    = 
    \left( \boldsymbol{1}_m \boldsymbol{\mu}_{w}^\T + \tilde{\bw}_t \boldsymbol{S}_{w} \right) \boldsymbol{S}_w^{-1} 
    & =  & \boldsymbol{1}_m \left(\boldsymbol{\mu}_{w}^\T \boldsymbol{S}_w^{-1} \right) + \tilde{\bw}_t .
  \end{eqnarray*}

  Since $\boldsymbol{S}_w^{-1}$ is invertible, the column and null spaces of $\bw, \bw_*$ are identical, hence $\bw^\T \baa_0= \boldsymbol 0 \Leftrightarrow \bw_*^\T \baa_0= \boldsymbol 0$. Define also the relationship between the original and transformed parameter
  \begin{eqnarray*}
    \tilde{\bb}  = \boldsymbol{S}_{x} \bb & \Leftrightarrow  &\bb = \boldsymbol{S}_{x}^{-1} \tilde{\bb}, \\ 
    \tilde{\bg}  = \boldsymbol{S}_{w} \bg & \Leftrightarrow & \bg = \boldsymbol{S}_{w}^{-1} \tilde{\bg}.
  \end{eqnarray*}

  Substituting into the linear predictor under the constraint $\bW^\T \baa_0=0$ and using $\baa_0 = \mathbf{G}  \ac$ gives
  \begin{align*}
    \beij & =
    \alpha_{0,ij}\boldsymbol{1}_K 
    + 
    \bzt^\T \baat[ij] 
    +
    \bxx_{ij} \bb 
    +
    \boldsymbol{1}_K  \left(\bw_{ij} \bg \right) \\ 
    & = 
    (\mathbf{G}_{ij}  \ac)\boldsymbol{1}_K 
    + 
    \bzt^\T \baat[ij]
    + 
    \left(\boldsymbol{1}_K \boldsymbol{\mu}_{x}^\T +  \tilde{\bxx}_{ij} \boldsymbol{S}_{x}\right) \boldsymbol{S}_{x}^{-1} \tilde{\bb} 
    + 
    \boldsymbol{1}_K  \left(\left(\boldsymbol{\mu}_{w}^\T + \tilde{\bw}_{ij} \boldsymbol{S}_{w}\right) \boldsymbol{S}_{w}^{-1} \tilde{\bg}\right) \\
    & = 
    (\mathbf{G}_{ij} \ac)\boldsymbol{1}_K 
    + 
    \bzt^\T \baat[ij]
    + 
    \left(\boldsymbol{1}_K \boldsymbol{\mu}_{x}^\T \boldsymbol{S}_{x}^{-1} +  \tilde{\bxx}_{ij} \right)  \tilde{\bb} 
    + 
    \boldsymbol{1}_K  \left(\left(\boldsymbol{\mu}_{w}^\T \boldsymbol{S}_{w}^{-1} + \tilde{\bw}_{ij} \right)  \tilde{\bg}\right) = \\
    & = 
    (\mathbf{G}_{ij} \ac)\boldsymbol{1}_K 
    + 
    \bzt^\T \baat[ij]
    + 
    \boldsymbol{1}_K\left( \boldsymbol{\mu}_{x}^\T \boldsymbol{S}_{x}^{-1} \tilde{\bb} \right)
    +  
    \tilde{\bxx}_{ij} \tilde{\bb} 
    + 
    \boldsymbol{1}_K  \left(\bw_{*,ij}  \tilde{\bg}\right).
  \end{align*}

  Finally, consider block-diagonal matrix and stacked vectors of estimated free parameters
  \begin{align}
    \mathbf{T} &= \begin{pmatrix}
    \mathbf{\bar I} & 0  & 0&0  & 0 \\
    0 & {1} & 0&0  & 0 \\
    0   &  0 & \mathbf{S}_{x}^{-1} & 0 & 0\\
    0 & 0 & 0 & \mathbf{G} &0 \\
    0 & 0 &0  & 0  & \mathbf{S}_{w}^{-1}
    \end{pmatrix} , 
    \quad 
    \hat\bt = \begin{pmatrix}
      \baath[,\mathcal{F}] \\
      \ln \hat{\sigma} \\
      \hat{\bb} \\
      \hat{\baa}_0 \\
      \hat{\bg}
    \end{pmatrix}, 
    \quad 
    \hat{\tilde{\boldsymbol{\theta}}} = \begin{pmatrix}
      \baath[,\mathcal{F}] \\
      \ln \hat{\sigma} \\
      \hat{\tilde{\bb}} \\
      \ach \\
      \hat{\tilde{\bg}}
    \end{pmatrix},
    \label{Eq. TransfD6} 
  \end{align}
where $\mathbf{\bar I}$ has dimension $mq - \left(q(q+1)/2\right)$. So, $\hat\bt = \mathbf{T} \hat{\tilde{\boldsymbol{\theta}}} $. \\

\textit{Standard errors.} 

Standard errors for the original model parameters are obtained by
back-transforming the covariance matrix of the estimator $\hat{\tilde{\boldsymbol\theta}}$,
which is directly available from the Laplace-approximated likelihood on the centred and scaled
covariates. Two cases must be distinguished, depending on whether
the (back-) transformation is linear or nonlinear in the estimated quantities.

For the intercepts and for the vector of all model parameters, the transformation is a
{linear} map with a fixed (non-random) matrix, namely $\hat{\boldsymbol\alpha}_0 =
\mathbf{G}\hat{\boldsymbol\alpha}_{0,\mathcal{F}}$ and $\hat{\boldsymbol\theta} =
\mathbf{T}\hat{\tilde{\boldsymbol\theta}}$, with $\mathbf{G}$ and $\mathbf{T}$ as defined in
Equation~\eqref{Eq. TransfD5} and Equation~\eqref{Eq. TransfD6} for Models~5 and~6, respectively. In this case the
variance transformation is {exact}. We therefore obtain:
\[
\var\!\left(\hat{\boldsymbol\alpha}_{0}\right) = \mathbf{G}\, \var\!\left(\hat{\boldsymbol\alpha}_{0,\mathcal{F}}\right) \mathbf{G}^{\T},
\qquad
\var\!\left(\hat{\boldsymbol\theta}\right) = \mathbf{T}\, \var\!\left(\hat{\tilde{\boldsymbol\theta}}\right) \mathbf{T}^{\T}.
\]

The situation differs for the dispersion parameter $\sigma$ in the ZAGA specification, which is
estimated on the log scale as $\ln\hat\sigma$; the map $\sigma = \exp(\ln\sigma)$ is nonlinear, so
the corresponding variance can only be obtained via a first-order approximation (delta method),
\[
\var\!\left(\hat{\sigma}\right) \approx \exp\!\left(2\ln\hat\sigma\right) \var\!\left(\ln\hat\sigma\right)
\quad \Leftrightarrow \quad
\operatorname{SE}\!\left(\hat{\sigma}\right) \approx \hat\sigma\, \operatorname{SE}\!\left(\ln\hat\sigma\right),
\]
which is asymptotically valid.

\textit{Starting values.} Another central aspect in  real-data analysis is related to the starting values computation.  
To tackle this aspect,  let us consider  the most general case:  Model~6,   with covariates centred and scaled, under ZAGA specification, which is
the pdf applied in the numerical exercises (both synthetic and real-data) of the main paper. The other model specifications can be deduced straightforwardly. 
The idea is to fit an edge-wise independence model with valid distributional assumptions, followed by a second round of estimations with conditional log-likelihood, much faster to fit than Laplace approximation. 

Let 
$$
Y_{ij}^{k}\mid \bztk \sim \mathrm{ZAGA},
$$
with pmf defined by Equation~Equation~\eqref{ZIGpmf} and linear predictor $\mu_{ij}^{(k)}=\exp(\eta_{ij}^{(k)})$ with
$$
\eijk
  =
  \baa_{ij}^{\T}\bzk
  +
  \bb_{}^{\T}\bxx_{ij}^{(k)}
  +
  \bg^{\T}\bw_{ij}
  =
  \alpha_{0,ij}
  +
  \baat[ij]^{\T} \bztk
  +
  \bb_{}^{\T}\bxx_{ij}^{(k)}
  +
  \bg^{\T}\bw_{ij}
$$
under the constraint  $\bW^{\T} \baa_0= \boldsymbol{0}$, and $\sigma$ estimated on the log scale. 

Partition the covariates into transformed ($t$) and non-transformed ($n$) blocks as above:
\begin{align*}
  \bxx_{ij}^{(k)}=
  \begin{pmatrix}
    \bxx_{n,ij}^{(k)}\\
    \bxx_{t,ij}^{(k)}
  \end{pmatrix},
  \qquad
  \bw_{ij}=
  \begin{pmatrix}
    \bw_{ij,n}\\
    \bw_{ij,t}
  \end{pmatrix}.
\end{align*}
Centre and scale covariates, then define the scaled coefficient vectors
$$
\tilde \bb=
\begin{pmatrix}
  \bb_n\\
  \tilde\bb_t
\end{pmatrix},
\qquad
\tilde \bg=
\begin{pmatrix}
  \bg_n\\
  \tilde\bg_t
\end{pmatrix},
\qquad
\tilde\bb_t = \mathbf{S}_x\bb_t,
\qquad
\tilde\bg_t = \mathbf{S}_w\bg_t.
$$
Define the transformed layer-independent data matrix
$$
\mathbf{W}_*=
\bigl[\mathbf{W}_n, \ \mathbf{W}_t \boldsymbol{S}_w^{-1} \bigr].
$$
Since $\mathbf{W}_* = \mathbf{W} \mathbf{A}_w^{-1}$ with $\mathbf{A}_w=\mathrm{diag}(\mathbf{I},\boldsymbol{S}_w)$, we have
$
\operatorname{col}(\mathbf{W}_*)=\operatorname{col}(\mathbf{W}),$ and $\ker(\mathbf{W}_*^{\T})=\ker(\mathbf{W}^{\T}),$
hence the constraint $\mathbf{W}^{\T} \baa_0 =\mathbf 0$  is equivalent to $\mathbf{W}_*^{\T}\baa_0=\mathbf{0}$. For each edge $(i,j)$, fit (e.g.\ with the \texttt{gamlss} package in \texttt{R})
a ZAGA model  across all  $k=1,\ldots,K$ and with $s_{ij}$ the $ij$-th entry of $\boldsymbol{s}$ as in Equation~\eqref{eq:s-wta0}, $ij=1,\ldots,m$.  We have $Y_{ij}^{(k)}\sim \mathrm{ZAGA}$ and 
\begin{equation}
  \label{eq:lin-pred-start}
  \begin{aligned}
    \ln \mu_{ij}^{(k)}
    &=
    s_{ij} + (\bxx_{n,ij}^{(k)})^\T \bb_{n,ij} + (\bxx_{t,ij}^{(k)})^\T \bb_{t,ij}
    =
    s_{ij} + (\bxx_{n,ij}^{(k)})^\T \bb_{n,ij} 
    + 
    \bmu^\T S_x^{-1} \tilde \bb_{t,ij}
    + 
    (\tilde{\bxx}_{t,ij}^{(k)})^\T \tilde \bb_{t,ij} \\
    & =
    \tilde s_{ij}
    +
    (\bxx_{n,ij}^{(k)})^\T \bb_{n,ij}
    +
    (\tilde{\bxx}_{t,ij}^{(k)})^\T \tilde \bb_{t,ij},
  \end{aligned}
\end{equation}
 
  where
  \begin{equation}
    \label{eq:sij}
    \tilde s_{ij} = s_{ij} +\bmu^\T S_x^{-1} \tilde \bb_{t,ij} \Leftrightarrow s_{ij} = \tilde s_{ij} - \bmu^\T S_x^{-1} \tilde \bb_{t,ij},
  \end{equation}
and denote the corresponding estimators by
$
\hat{\tilde s}_{ij},$ 
$\hat{\tilde \bb}_{ij}= (\hat \bb_{n,ij}^T, \hat{\tilde \bb}_{t,ij}^T)^T$, 
and $\hat\sigma_{ij}.$
Equipped with these quantities,  we provide valid starting values for each parameter (denoted by the superscript $(0)$).  

\begin{enumerate}[(i)]
  \item Start for $\bg$. 
  Compute $\hat{s}_{ij}$ from $\hat{\tilde s}_{ij}$, as in Equation~\eqref{eq:sij}. Combine $\hat{\bs} = (\hat{s}_{1}, \ldots, \hat{s}_{m})$. Compute the starting values for $\bg$ using OLS - see \Autoref{SecIdentif} for more details:
  $$
  \hat{\tilde{\bg}}^{(0)} = (\boldsymbol{W}_{*}^{\T}\boldsymbol{W}_{*})^{-1}\boldsymbol{W}_{*}^{\T}\hat \bs.
  $$
  \item Start for $\ac[F]$.   Compute $\hat{\baa}_0 = \hat{\bs} - \boldsymbol{W}_{*}^{\T} \hat{\tilde{\bg}}^{(0)}$. 
  Finally, set $\ach[F]^{(0)} = \hat{\baa}_{0, \mathcal{F}}$.
  \item Starting values for $\bzt$ and $\baat$.
  Let $\mathbf{R}=(r_{ij}^{(k)})\in\mathbb R^{K\times m}$ be the matrix of Dunn--Smyth residuals from the edgewise  ZAGA fits, with rows indexed by $k$ and columns by $ij$. Define the column-centred residual matrix
  $$
  \mathbf{R}_c =   \mathbf{R}-\mathbf 1_K \bar r^{\T},
  \qquad
  \bar r_{ij} = \frac{1}{K}\sum_{k=1}^{K} r_{ij}^{(k)}.
  $$
  Let $  \mathbf{U}_q$, $  \mathbf{D}_q$, and $  \mathbf{V}_q$ denote the matrices formed by the first $q$ left singular vectors, singular values, and right singular vectors of $\mathbf{R}_c$, respectively. Then
  $$
    \mathbf{R}_c \approx   \mathbf{U}_q   \mathbf{D}_q   \mathbf{V}_q^{\T},
  $$
  where the approximation is the best rank-$q$ approximation to $  \mathbf{R}_c$ in Frobenius norm.
  Then set
  $$
  \bzth^{(0)} = \sqrt{K}\,  \mathbf{U}_q,
  \qquad
  \hat{\baa}_{(2)} = K^{-1/2}   \mathbf{V}_q   \mathbf{D}_q.
  $$

  Finally, to satisfy the additional rotational identifiability constraints in Assumption \textbf{A2}, post-multiply $\hat{\baa}_{(2)}$ by the same orthogonal matrix chosen to satisfy those constraints. Finally, set $\baath^{(0)} = \hat{\baa}_{(2),\mathcal{F}}$.
  
  \item Start for $\bb$. As per Equation~\eqref{eq:lin-pred-start}, the coefficients for the non-transformed variables are unchanged and can be used directly. Then, set for each covariate $l$ amongst the transformed layer-dependent covariates 
  $$
  \hat{\tilde{\beta}}_{t}^{(l)} = \frac{1}{m} \snv \hat{\tilde{\beta}}_{t, ij}^{(l)} , \quad \mbox{ and then }\hat{\tilde{\bb}}^{(0)} = \begin{pmatrix}
    \hat{\bb}_n \\
    \hat{\tilde{\bb}}_{t}
  \end{pmatrix}.
  $$
  \item Start for $\ln \sigma$.
Aggregate the edgewise dispersion estimates on the log scale:
$$
\ln \hat{\sigma}^{(0)} = \ln \left( \frac{1}{m} \snv \hat{\sigma}_{ij}\right). 
$$
\end{enumerate}

Therefore, a coherent starting vector for optimization in the scaled parameterization is
$$
\hat{\tilde\bt}^{(0)}
=
\Bigl(
\ach[F]^{(0)},
\baath^{(0)},
\hat{\tilde \bb}^{(0)},
\hat{\tilde \bg}^{(0)},
\ln\hat{\sigma}^{(0)}
\Bigr)^{\T}. 
$$

As a second step, the vector of parameters $\hat{\tilde\bt}^{(0)}$ can be used to perform a conditional likelihood optimization using TMB to generate a better set of starting values, $\hat{\tilde\bt}^{(1)}$. In particular, starting from $\bzth^{(0)}$ and $\hat{\tilde\bt}^{(0)}$ from the previous step, obtain new values
\begin{align*}
  \left(\hat{\tilde\bt}^{(1)},  \bzth^{(1)}\right)
  &=
  \argmin_{\bt, \bz}
  \sk \ln \l(  \l\{ \prod_{i\ne j}^{n_V} \mathsf{f}_{\bt} 
  \left( \Yijk \vert \eta^{(k)}_{ij}, 
  \bzk
  \right) \r\} 
  h (\bzk)
  \r).
\end{align*} 

Both the centring, scaling, and the correct starting values ensure the gradient and Hessian matrices are numerically stable. \\ 

\section{Monte Carlo simulations: detailed setting } \label{App: MCsettings}

We conducted Monte Carlo simulations to assess the finite-sample performance of the proposed estimators for both edge-specific and global parameters, under several distributional assumptions and model specifications.

We first considered two Poisson specifications. In Models~3 and~4, respectively, the linear predictor $\eijk = \ln \lambda_{ij}^{(k)}$ is given by
$$
  \eijk = \baa_{ij}^{\T}\bzk + \bb^{\T}\bxx_{ij},
  \qquad
  \eijk = \baa_{ij}^{\T}\bzk + \bb_{ij}^{\T}\bxx_{ij}.
$$
We set the number of nodes to $n_V=10$, giving $m=90$ directed edges. We used $q=1$ latent variable, $K=500$ layers and one layer-dependent covariate, so that $L_x=1$ and $L_w=0$. The $q \times K$ latent variable matrix $\bz_{(2)}$ was generated from a standard normal distribution. The $m \times (q+1)$ loading matrix was generated from a uniform distribution on $(0,1)$, subject to the identifiability constraint in Assumption~\textbf{A4}. Recall that the constrained block of the loading matrix was taken to be upper triangular with diagonal entries fixed at one.  For the intercept, we added a $\ln 5$ shift in order to keep the true Poisson means away from zero and to avoid simulations being dominated by zero counts. The regression coefficients and the layer-dependent covariate values were generated independently from a uniform distribution on $(-1,1)$. All parameters were generated only once and reused for each Monte Carlo replication. 

We next considered a zero-adjusted gamma model, denoted by $\text{ZAGA}(\mijk,\sigma,\pij)$, with the parameterization of \citet{R_etal_19}, as defined in Equation~Equation~\eqref{ZIGpmf}. For this case, we ran a Monte Carlo experiment with 100 replications under 
Model~5, 
for which the linear predictor $\eijk=  \ln \mijk$ defined as
\begin{equation}
\label{ZAGA-Model5}
  \eijk
  =
  \baa_{ij}^{\T}\bzk
  +
  \bb_{ij}^{\T}\bxx_{ij}^{(k)}
  +
  \bg^{\T}\bw_{ij},
\end{equation}
and parameters $\pij$ and $\sigma$ specified separately. 

The factor loadings were generated from $\mathrm{Unif}(0,1)$, while the free entries of the intercept were generated from a standard normal distribution. The edge-specific regression coefficients were generated from $\mathrm{Unif}(-0.5,0.5)$, and the layer-dependent and layer-independent covariates, as well as the global coefficient $\bg$, were generated from $\mathrm{Unif}(-1,1)$. We considered one layer-dependent covariate and one layer-independent covariate, $L_x=L_w=1$, together with one latent variable, $q=1$. The scale parameter was fixed at $\sigma=0.3$. The zero-adjustment probabilities $\pij$ were generated from a Gaussian distribution with mean $0.5$ and standard deviation $0.12$, truncated to the interval $[0.01,0.8]$. All parameters were generated only once and reused for each of the Monte Carlo replication. 

For each Monte Carlo replication, we generated an $m \times K$ response matrix $\by$ from the corresponding data-generating model. Estimation was based on the Laplace-approximated log-likelihood. The estimated parameters were compared with their true values, excluding the parameters fixed by the identifiability constraints. The simulation results are reported in the main text in \Autoref{Tfig:bij_xij} for 
Poisson Models~3 and~4, 
and in \Autoref{Tfig:bij-xij-wij-o} for 
ZAGA Model~5. \\

\textit{Computational aspects.} a) The identifiability constraint of Assumption~\textbf{A4} ensures uniqueness of the parameterization up to the usual rotational indeterminacy of latent variable models \citep{HRVF04}; see also the discussion of identifiability in the GLAMLE framework of \citet{JLVR24}. Without such a constraint, the numerical optimizer may converge to an arbitrary representative of an equivalence class of observationally equivalent solutions. This non-uniqueness can also be illustrated empirically by applying a post-hoc Procrustes rotation to the estimated loading matrix, which leaves the fitted model essentially unchanged but alters the interpretation of the loadings.
b) The Laplace approximated-loglikelihood is implemented in \texttt{R} using the \texttt{RTMB} package and maximized using the PORT routines implemented in \texttt{nlminb}. This computational strategy follows the general GLAMLE approach of \citet{JLVR24}, and its use of a Laplace-approximated likelihood is also motivated by the approximate-likelihood theory of \citet{ogdenAsymptoticValidityNaive2017a} and the high-dimensional Laplace error analysis of \citet{ogdenErrorLaplaceApproximations2021}.

\section{Complements to WTO data analysis} \label{AppSec: RealData}

\subsection{Trades data and gravity covariates}
\label{tradesdata}

We consider trade data for 2022, as made available from the WTO Data Statistical Office.
We restrict attention to a subset of $n_V=45$ countries, yielding $m=1980$ directed edges. The variables \texttt{reporter\_country} and \texttt{partner\_country} correspond to importing and exporting countries, respectively. Each edge (exporter-importer) corresponds to an ordered country pair, and for each edge we observe the traded value of goods, recorded by the variable \texttt{value}, across $K=72$ product layers (variable \texttt{product}). The dataset also contains one layer-dependent covariate, \texttt{best}, which reports bilateral best applied simple average tariff data for 2022. Further details on the data source and variable definitions are available from the WTO Data Portal.

We augment these data with layer-independent dyadic covariates obtained from the \texttt{cepiigeodist} package in \texttt{R}. In particular, we retain \texttt{dist}, the simple distance in kilometres between the most populated cities of the two countries, \texttt{comlang\_ethno}, a binary indicator equal to one when at least $9\%$ of the populations of the two countries speak the same language, \texttt{contig}, coded as 1 when the two countries are next to each other and 0 otherwise and \texttt{colony}, coded as 1 when the exporter country was ever a colony of the importer country.

\subsection{Estimation and diagnostics}
\label{tradesdata-estim}

In the fitted model, the traded value is assumed to follow a conditional
$\mathrm{ZAGA}(\pij, \mu_{ij}^{(k)}, \sigma)$ distribution. 
For each directed edge $(i,j)$ and product layer $k$, we fit a zero-adjusted gamma model
$$
Y_{ij}^{(k)} \mid \bzk \sim \mathrm{ZAGA},
\qquad
\log \mu_{ij}^{(k)} = \eta_{ij}^k .
$$
Here, the ZAGA is characterized by $\pij$, which denotes the edge-specific probability of an excess zero and by $\mu_{ij}^{(k)}$, which is the conditional mean of the positive traded value. The effects of covariates in $\eta_{ij}^{(k)}$ has to be interpreted as multiplicative effects on the conditional mean of positive trade, rather than directly on the probability of observing a nonzero
trade flow. We considered $q=1$ and $q=2$ latent variables. As both models yielded very similar results, we retained the model with one latent variable ($q=1$) due to it being more parsimonious.

Among the $m=1980$ directed country pairs initially considered, $47$ pairs do not trade any goods in any of the $K=72$ product layers. Since these all-zero edges provide no information on the positive component of the ZAGA model, we treat them as structural zeros for the purposes of estimation and exclude them from the fitted model. For these pairs, the zero probability is fixed at $\pij\equiv 1$, and no positive-trade parameters are estimated. Furthermore, we do not estimate the parameters on the $517$ countries pairs who trade less than $30$ goods. The fitted model is therefore estimated on the remaining $1416$ directed country pairs.

For the fitting of the zero component, we first convert the observed trade value into a binary
indicator $Y_{0,ij}^{(k)} = \mathbbm{1}\{Y_{ij}^{(k)} = 0\}$,
where $i$ denotes the exporter (partner) country, $j$ denotes the importer (reporter) country, and $k$ denotes the product layer. We then model the conditional probability that the trade value is zero by a Bernoulli generalized linear model with logit link  (fitted by maximum likelihood).
The numeric covariates used, namely, $\texttt{best}$ and $\texttt{dist}$, are centred and scaled. We also included factors $\texttt{contig}, \texttt{colony}, \texttt{comlang\_ethno}$ as well as exporter-, importer- and product-fixed effects.
In words, this part of the model estimates how observable edge-specific and product-level
covariates affect the probability that a given partner-reporter-product
trade flow is zero. The best applied tariff variable, distance, common language, colonial
history, and contiguity capture observed trade-resistance and affinity effects.
The partner and reporter fixed effects absorb country-specific propensities to appear in zero trade flows, while the product fixed effects absorb baseline
differences in sparsity across commodities. Hence, the fitted values
$\hat{\pi}_{ij}^{(k)}$ are estimated probabilities of observing a zero
trade value for each country pair and product layer. \Autoref{tab:wto-glamle-pi-estimates} reports the estimates and their associated standard errors for the zero part of the ZAGA model. As expected, the probability of observing no trade increases with increasing distance and/or tariff, see also \autoref{fig:wto-best-dist-pi}.  Furthermore, the presence of colonial ties, contiguity and/or common language decrease the probability of no trade happening. 

\begin{table}[tb]
    \centering
    \caption{GLAMLE estimates for the WTO trade data for the zero part of the ZAGA model. Numerical covariates are centred and scaled, their estimates are transformed back and presented on the original scale. All values reported are on the $10^3$ scale. The table only presents the estimates of the global parameters.}
    \label{tab:wto-glamle-pi-estimates}
    \begin{tabular}{lrrr}
     \toprule 
 Parameter & Estimate & Standard error  \\
 \midrule 
     Intercept &-2553.29&106.703 \\
     \texttt{best} &7.414&0.605 \\
     \texttt{dist} &0.126 &0.002 \\
     \texttt{comlang\_ethno} &-291.789 &27.015\\
     \texttt{colony} &-594.58 &76.447\\
     \texttt{contig} &-1751.404 &68.006\\
    \bottomrule
    \end{tabular}
\end{table}

To fit the positive part of the model, the numeric covariates  \texttt{best} and \texttt{dist}  were included (centred and scaled), in addition to factors \texttt{contig}, \texttt{colony} and \texttt{comlang\_ethno}. Due to the graph setup, exporter-important countries are included in the model as edges, and products act as observed graph layers. Table~\ref{tab:wto-glamle-estimates-colony-contig} of the main document reports the estimates of the global
parameters of the fitted ZAGA-GLAMLE model on the positive. All the estimated parameters are reported on their natural scales. Standard errors are computed from the sandwich covariance matrix associated with the Laplace-approximated log-likelihood. 

Interpretation of the regression coefficients requires some care. The coefficients in the log-mean component act on the positive Gamma mean, $\hat{\mu}_{ij}^{(k)}$, whereas the fitted mean of the response conditional on the latent variables  is
$$
 \hat{E}(\Yijk \mid \bztkh) = (1-\hat{\pi}_{ij}^{(k)})\hat{\mu}_{ij}^{(k)} .
$$
Thus, a positive coefficient in the Gamma component need not imply a large fitted trade value on the observed scale. In particular, a dyad may have a large fitted positive-trade mean but still have a small conditional on the latent variables fitted mean when the estimated probability of structural or excess zero trade is high.

For the positive Gamma component, \Autoref{tab:wto-glamle-estimates-colony-contig} provides little evidence of an association between \texttt{best} and the positive-trade mean after adjustment for the remaining model components. The estimated coefficient is positive, but imprecisely estimated, and its confidence interval contains zero. We therefore do not attach substantive meaning to its sign. Refitting the model without \texttt{best} yields very similar coefficients for other covariates. Inspection of the data suggests that the largest tariffs are concentrated in a few product classes, mainly alcohol, cereals, dairy products and tobacco.
For tobacco and alcohol, positive trade flows tend to remain large even when tariffs are high. 
For cereals, the largest values of \texttt{best} are primarily associated with imports by the Republic of Korea across several partners, consistent with tariff protection of domestic agriculture. In this case the estimated mean zero probability is moderate, for example $\bar{\hat{\pi}}_{i,\mathrm{Korea}}^{\mathrm{Cereals}}=0.37$, but, conditional on positive trade, the traded values are large. A similar pattern is observed for dairy products, where Switzerland, Canada, Turkey and Norway impose high tariffs on some partners, while positive flows, when present, remain non-negligible. 
Overall, these patterns suggest that tariffs may play a secondary role in explaining the magnitude of positive trade flows once trade occurs; historical, institutional and long-term trading relationships may be more relevant for the positive-trade intensity.

The coefficient of \texttt{dist} is positive and precisely estimated across the fitted models. This should not be interpreted as a marginal gravity-type effect on observed trade values. In the zero-adjusted Gamma specification, the coefficient acts on the conditional mean of the positive Gamma component, not on the unconditional mean of trade. A plausible explanation is selection into positive trade. Larger distance may reduce the probability that trade occurs, through the zero-adjustment component, while the positive flows that remain observed at large distances may be larger because small long-distance shipments are economically unattractive or less likely to be recorded. Thus distance can affect the two components of the model in opposite directions: it may increase the zero probability while increasing the conditional positive mean. Since \texttt{dist} is measured in kilometres, the coefficient per kilometre is numerically small. However, over 1000 kilometres, the fitted multiplicative effect on the positive Gamma mean is $\exp(1.435) \approx 4.2$. The raw scatterplots are therefore not a direct representation of this partial model-based effect, since they are dominated by zeros and by a small number of very large flows. The estimate is better interpreted as follows: among the part of the dyad-specific positive-trade baseline explained by the global layer-independent covariates, distance contributes positively to the conditional positive-trade mean.

The effect of \texttt{comlang\_ethno} is large: $\exp(4.816) \approx 123$. This suggests that common ethnological language acts as a strong edge-level baseline separator in the positive-trade component. In other words, after adjustment for the latent structure and other covariates, dyads sharing an ethnological language have a substantially larger conditional positive-trade mean.

As to \texttt{colony}, the coefficient is $\exp (4.139) \approx 63$, indicating (ex-)colonies trade slightly more. Finally, the coefficient for \texttt{contig} is $\exp (15.730) \approx 6780151$, which suggests sharing a common border results in a much higher positive trade value. 

Edge-specific quantities, including
$\hat\pi_{ij}$ and the latent loading estimates, are summarized in
Table~\ref{tab:wto-edge-summary-colony-contig}. 
The edge-specific estimates reveal substantial heterogeneity across country pairs.
Rather than reporting all edge-level parameters, we
summarize their empirical distributions and focus on model-derived quantities such
as fitted zero probabilities, and fitted positive trade intensities. These summaries provide a more interpretable description
of the fitted multiview network structure.

Finally, the fitted value of $\sigma$ indicates substantial dispersion among positive trade values. This confirms that, even after accounting for covariates, dyad effects and latent layer structure, the distribution of positive trade flows remains highly heterogeneous.

\begin{table}[!ht]
    \centering
    \caption{Summary of edge-specific estimates (rounded up to three decimals), $q=1$.}
    \label{tab:wto-edge-summary-colony-contig}
    \begin{tabular}{lrrrrr}
     \toprule 
 Parameter & Mean & SD & Median & IQR & Range \\
 \midrule 
            $\hat{\pi}_{ij}$ & 0.374 &0.355 & 0.245 & 0.679 & (0,1) \\
            $\hat{\alpha}_{(0),ij}$ & 3.202 &7.572&3.424&10.625 & (-18.266,24.485) \\
            $\hat{\alpha}_{(2),ij}$ & 1.588 &1.359&1.648&1.627 & (-2.912,6.417) \\
    \bottomrule
    \end{tabular}
\end{table}

\begin{table}[!ht]
    \centering
    \caption{Summary of edge-specific estimates (rounded up to three decimals), $q=2$.}
    \label{tab:wto-edge-summary-q2}
    \begin{tabular}{lrrrrr}
     \toprule 
 Parameter & Mean & SD & Median & IQR & Range \\
 \midrule 
            $\hat{\pi}_{ij}$ & 0.374 &0.355 & 0.245 & 0.679 & (0,1) \\
            $\hat{\alpha}_{(0),ij}$ & 3.165 &7.451&3.375&10.367 & (-17.848,24.263) \\
            $\hat{\alpha}_{(2),ij}$ & 0.984 &1.425&1.12&1.636 & (-4.23,5.599) \\
    \bottomrule
    \end{tabular}
\end{table}

\subsection{Comparing ZAGA-GGLLVM and PPMLE} \label{Sec: Comparison}
To assess the performance of GGLLVM, and similarly to Monte-Carlo exercises described in \Autoref{Sec: GLPP_comp}, we compared the zero, positive and total parts of fitted ZAGA-GLAMLE model to a fitted PPMLE. For each observation corresponding to dyad $(i,j)$ and product layer $k$,
let $Y_{ij}^{(k)}$ denote the observed trade volume. After removing observations with missing fitted values, we fit

\begin{equation}
  \label{wto:ppml}
  \Yijk \sim \mathrm{Poisson}\left(\mijk\right),
  \qquad
  \log \mijk
  =
  \beta_{0}
  +
  \beta_{best} \mathrm{best}_{ij}^{(k)}
  +
  \beta_{dist}\log \mathrm{dist}_{ij} +
  \bg^\T \bw_{ij}
  +
  \mathrm{importer}_{i}
  +
  \mathrm{exporter}_{j},
\end{equation}

with $\texttt{dist}$ included separately and $\bw$ including the rest of the layer-independent covariates as in the fitted ZAGA-GLAMLE model, i.e. $\texttt{comlang\_ethno}, \texttt{colony}, \texttt{contig}$.

We then compute the fitted mean for the PPMLE model $\hat{\mu}^{\mathrm{PPMLE}}_{ij,k}$ and the fitted Gamma mean $\hat{\mu}^{\mathrm{GLAMLE}}_{ij,k}$, as well as the fitted zero-inflation probability $
\hat{\pi}^{\mathrm{GLAMLE}}_{ij}$. The fitted  mean conditional on the latent variable under the ZAGA-GLAMLE model is therefore
$$
\hat{\mathbb{E}}_{\mathrm{GLAMLE}}( \Yijk \mid \bztkh)
=
\left(1-\hat{\pi}^{\mathrm{GLAMLE}}_{ij}\right)
\hat{\mu}^{\mathrm{GLAMLE}}_{ij,k}.
$$

Let
$$
\mathcal{I}_{k}
=
\left\{(i,j): 
\hat{\mathbb{E}}_{\mathrm{GLAMLE}}(\Yijk \mid \bztkh),
\widehat{\mu}^{\mathrm{PPMLE}}_{ij,k},
\widehat{\pi}^{\mathrm{GLAMLE}}_{ij},
\widehat{\mu}^{\mathrm{GLAMLE}}_{ij,k}
\text{ are available}
\right\}
$$
denote the set of dyads retained for product $k$, and let $n_k = |\mathcal{I}_{k}|$. Furthermore, we have 
$$
\hat{\mathbb{E}}_{\mathrm{PPMLE}}( \Yijk) = \widehat{\mu}^{\mathrm{PPMLE}}_{ij,k}.
$$
For each product $k$, and each model $\mathcal{M} \in \left\{\mathrm{ZAGA-GLAMLE}, \mathrm{PPMLE}\right\}$ we compute the mean absolute error
as
$$
\mathrm{MAE}^{\mathcal{M}}_{k}
=
\frac{1}{n_k}
\sum_{(i,j)\in\mathcal{I}_{k}}
\left|
Y_{ij}^{(k)}
-
\hat{\mathbb{E}}_{\mathcal{M}}(\cdot)
\right|.
$$

To assess calibration of the zero-probability component, we compute the Brier score for the binary event $\{Y_{ij}^{(k)}=0\}$ as
$$
\mathrm{Brier}^{\mathcal{M}}_{0,k}
=
\frac{1}{n_k}
\sum_{(i,j)\in\mathcal{I}_{k}}
\left(
\mathbf{1}\{Y_{ij}^{(k)}=0\}
-
\widehat{\pi}^{\,\mathcal{M}}_{0,ij,k}
\right)^2,
$$

where $\mathbf{1}\{Y_{ij}^{(k)}=0\}$ is the observed zero indicator and $\widehat{\pi}^{\,\mathcal{M}}_{0,ij,k}$ is either the fitted zero probability (ZAGA-GLAMLE) or a working Poisson zero probability (PPMLE).

The MAE and RMSE compare the fitted means conditional on the latent variables, whereas the Brier
scores compare the fitted probabilities of observing zeros. Hence, the former
assess predictive accuracy for trade values, while the latter assess calibration of the zero event. This distinction is important because ZAGA-GLAMLE contains an explicit zero component, whereas PPMLE only induces a working zero probability through the Poisson identity. 
\Autoref{fig:wto-glamle-vs-ppml-cottoncars}(a) shows the layer-wise Zero Brier scores for both models, while \Autoref{fig:wto-glamle-vs-ppml-mae} shows MAE. 

\begin{figure}
  \centering
    \includegraphics[width=0.5\textwidth]{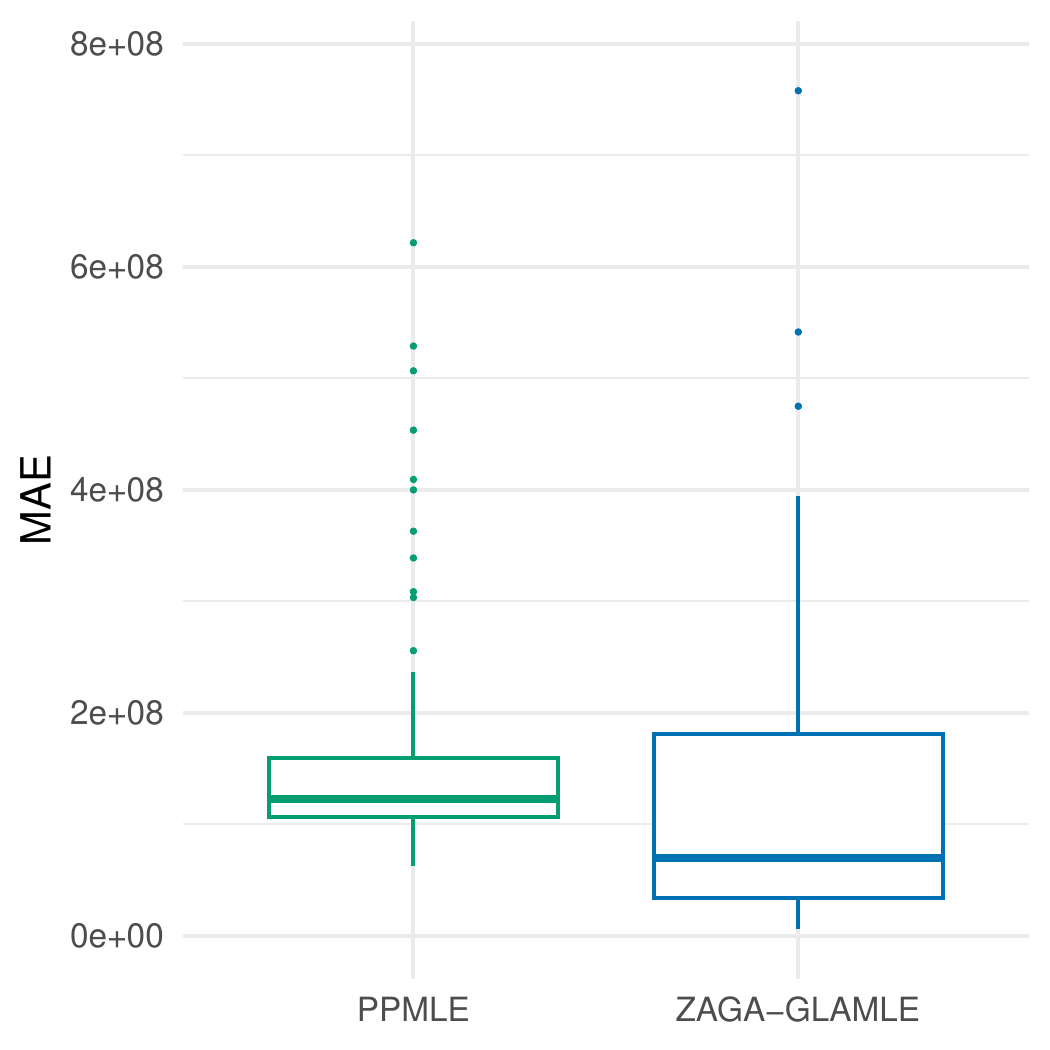}
    \caption{Layer-wise MAE for WTO data comparing ZAGA-GLAMLE with PPMLE.}
    \label{fig:wto-glamle-vs-ppml-mae}
\end{figure}

Finally, we have fitted the model for $q=2$, and compared it to $q=1$. \Autoref{tab:wto-edge-summary-q2} depicts the empirical summaries of the edge-specific parameters, which appear to be quite close to ones in \Autoref{tab:wto-edge-summary-colony-contig} ($q=1$). Since the zero parts are identical under the current setup, we only compare the positive parts. The MAE plots are presented in \Autoref{fig:wto-glamle-q1-vs-q2-mae}. Since the median MAE is very similar to $q=1$ ($69'345'149$ vs $68'873'592$ for $q=1$ and $q=2$, respectively), we chose the most parsimonious model as our final model. 

\begin{figure}
  \centering
    \includegraphics[width=0.5\textwidth]{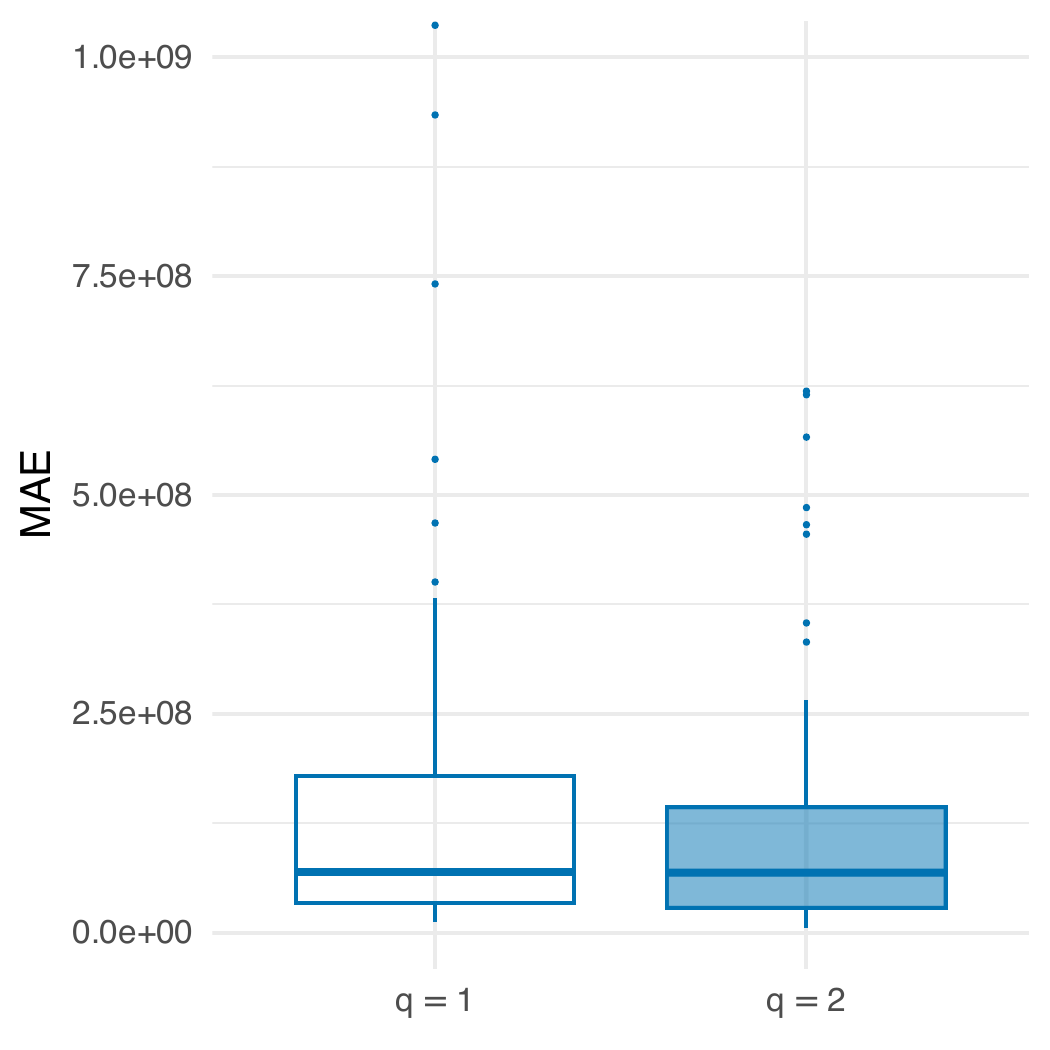}
    \caption{Layer-wise MAE for WTO data comparing ZAGA-GLAMLE with $q=1$ and $q=2$.}
    \label{fig:wto-glamle-q1-vs-q2-mae}
\end{figure}

\clearpage
\newpage

\end{document}